\def\cqedsymbol{\ifmmode$\lrcorner$\else{\unskip\nobreak\hfil
\penalty50\hskip1em\null\nobreak\hfil$\lrcorner$
\parfillskip=0pt\finalhyphendemerits=0\endgraf}\fi} 
\newcommand{\cqed}{\renewcommand{\qed}{\cqedsymbol}}
\newtheorem{theorem}{Theorem}
\newtheorem{lemma}{Lemma}
\newtheorem{corollary}[lemma]{Corollary}
\newtheorem{claim}{Claim}
\theoremstyle{definition}
\newtheorem{definition}{Definition}
\newcommand{\ellmax}{\ell}
\newcommand{\Oh}{\mathcal{O}}
\newcommand{\eps}{\epsilon}
\newcommand{\crdx}[2]{x_{#1}^{(#2)}}
\newcommand{\crdy}[2]{y_{#1}^{(#2)}}
\newcommand{\height}{h}
\newcommand{\width}{g}
\newcommand{\weight}{w}
\newcommand{\Pp}{\mathcal{P}}
\newcommand{\Rr}{\mathcal{R}}
\newcommand{\Qq}{\mathcal{Q}}
\newcommand{\Ss}{\mathcal{S}}
\newcommand{\MWISR}{{\sc{Maximum Weight Independent Set of Rectangles}}\xspace}
\newcommand{\mwisr}{{\sc{MWISR}}\xspace}
\newcommand{\OPT}{\mathsf{OPT}}
\newcommand{\Hf}{\mathsf{H}}
\newcommand{\Vf}{\mathsf{V}}
\newcommand{\Ll}{\mathcal{L}}
\newcommand{\Val}{\mathsf{Value}}
\newcommand{\glob}{{I_{\mathsf{all}}}}
\newcommand{\MWIS}{{\sc Maximum Weight Independent Set}\xspace}
\newcommand{\mwis}{{\sc MWIS}\xspace}
\newcommand{\mdi}{\ensuremath{\Delta^{\leq i}}}
\newcommand{\md}{\ensuremath{\Delta^{\leq 1}}}
\newcommand{\mdt}{\ensuremath{\Delta^{\leq 2}}}
\newcommand{\N}[1]{\ensuremath{N^{\leq 1}_{#1}}}
\newcommand{\Nt}[1]{\ensuremath{N^{\leq 2}_{#1}}}
\newcommand{\Ni}[1]{\ensuremath{N^{\leq i}_{#1}}}
\newcommand{\clmap}{\phi}
 \newcommand{\ej}[1]{\textcolor{blue}{#1}}
 \newcommand{\andy}[1]{\textcolor{green}{#1}}
  \newcommand{\ej}[1]{#1}
  \newcommand{\andy}[1]{#1}
\title{Approximation and parameterized algorithms for geometric independent set with shrinking\thanks{
The research of Mi.\ Pilipczuk is supported by Polish National Science Centre grant UMO-2013/11/D/ST6/03073.
Mi.\ Pilipczuk is also supported by the Foundation for Polish Science (FNP) via the START stipend programme.}}
\author{
  Micha\l{} Pilipczuk\thanks{
    Institute of Informatics, University of Warsaw, Poland, \texttt{michal.pilipczuk@mimuw.edu.pl}.
  }
  \and 
  Erik Jan van Leeuwen\thanks{
    MPI f\"ur Informatik, Saarland Informatics Campus, Saarbr\"ucken, Germany, \texttt{erikjan@mpi-inf.mpg.de}.
  }
  \and 
  Andreas Wiese\thanks{
    Departamento de Ingenieria Industrial, Universidad de Chile, Chile, \texttt{awiese@dii.uchile.cl}.
  }
}
\date{}
\begin{document}

\maketitle

\begin{abstract}
Consider the {\sc{Maximum Weight Independent Set}} problem for rectangles:
given a family of weighted axis-parallel rectangles in the plane, find a maximum-weight subset of non-overlapping rectangles.
The problem is 
notoriously hard both in the approximation and in the parameterized setting.
The best known polynomial-time approximation algorithms achieve super-constant approximation ratios~\cite{CC2009,ChanHarPeled2012}, 
even though there is a $(1+\epsilon)$-approximation running in quasi-polynomial time~\cite{AW2013,ChuzhoyEne2016}.
When parameterized by the target size of the solution, the problem is $\mathsf{W}[1]$-hard even in the unweighted setting~\cite{Marx07a}.

To achieve tractability, we study the following {\em{shrinking model}}:
one is allowed to shrink each input rectangle by a multiplicative factor $1-\delta$
for some fixed $\delta>0$, \ej{but} the performance is still compared against the optimal solution for the original, non-shrunk instance.
We prove that in this regime, the problem admits an EPTAS with running time $f(\eps,\delta)\cdot n^{\Oh(1)}$, and an FPT algorithm with running time $f(k,\delta)\cdot n^{\Oh(1)}$,
in the setting where a maximum-weight solution of size at most $k$ is to be computed.
This improves and significantly simplifies a PTAS given earlier for this problem~\cite{AdamaszekChalermsookWiese2015}, and provides the first parameterized results for the shrinking model.
Furthermore, we explore kernelization in the shrinking model, by giving efficient kernelization procedures for several variants of the \ej{problem} when the input rectangles are squares.

\end{abstract}

\section{Introduction}\label{sec:intro}

The classic {\sc{\ej{Maximum (Weight)} Independent Set}} problem is defined as follows: given a graph,
the goal is to select a set of pairwise non-adjacent vertices with maximum cardinality
or \ej{maximum} total weight. In its full generality, the problem is $\mathsf{NP}$-hard
and intractable from the perspective of both approximation and parameterized algorithms:
it is $\mathsf{NP}$-hard to approximate within ratio $n^{1-\epsilon}$ for any $\epsilon>0$~\cite{zuck07}, and it is $\mathsf{W}[1]$-hard
when parameterized by the solution size~\cite{downey1995fixed}.
Therefore, many restricted settings have been studied.

One well-studied case
is to consider a geometric setting.
That is, the input consists of a family of
geometric objects, and the goal is to select a maximum-weight \ej{subfamily} of pairwise non-overlapping objects.
This case reduces to the graph setting by 
considering
the intersection graph of the objects.
The graphs that arise in this way
are highly structured,
which gives hope for better
results than for general graphs.

This paper concentrates on the variant in which the input objects are axis-parallel rectangles in the two-dimensional plane.
In this variant, {\sc Maximum Weight Independent Set} admits much smaller approximation ratios than on general graphs.
There is an $(1+\epsilon)$-approximation algorithm
running in polynomial time (PTAS) if all input objects are squares~\cite{chan2003polynomial,EJS2005,Leeuwen06},
an $\Oh(\log\log n)$-approximation algorithm for unweighted rectangles~\cite{CC2009}, and an $\Oh(\log n/\log\log n)$-approximation
algorithm for weighted rectangles~\cite{ChanHarPeled2012}.
Recently, there has been much interest in designing {\em{quasi-polynomial approximation schemes}} ({\em{QPTASes}}), that is,
$(1+\epsilon)$-approximation algorithms running in quasi-polynomial time. This research
direction culminated in giving a QPTAS for the case of (not necessarily convex) polygons in the plane~\cite{adamaszek2014qptas,Har-Peled2014}.
Whether this can be improved to a PTAS remains a challenging open problem.

From the parameterized perspective, the problem remains $\mathsf{W}[1]$-hard when parameterized by the $k$ size of the solution, even for unweighted unit squares~\cite{Marx07a}.
Therefore, the existence of an FPT algorithm with running time $f(k)\cdot n^{\Oh(1)}$ for a computable $f$ is unlikely under standard assumptions from parameterized complexity;
this also excludes the existence of an EPTAS for the problem~\cite{Marx07a}.
However, the problem admits a faster-than-brute-force parameterized algorithm with running time $n^{\Oh(\sqrt{k})}$, which is optimal under the Exponential Time Hypothesis~\cite{MarxP15}.
This algorithm works in the general setting of finding a maximum-weight independent set of size $k$ in a family of polygons in the plane.

\paragraph*{Shrinking model.}
In order to circumvent some of the many challenges that arise when designing approximation or parameterized algorithms for geometric {\sc Maximum Weight Independent Set}, 
we investigate the {\em shrinking model} introduced by Adamaszek et al.~\cite{AdamaszekChalermsookWiese2015}. 
In this model, one is allowed to shrink each input object by a multiplicative factor $1-\delta$ for some fixed $\delta>0$,
but the weight of the computed solution is still compared to the optimum for the original, non-shrunk instance; see Section~\ref{sec:prelims} for a formal definition.
It is known that the shrinking model allows for substantially better approximation algorithms than the general setting: 
Adamaszek et al.~\cite{AdamaszekChalermsookWiese2015} gave a PTAS for axis-parallel rectangles, which was later generalized by Wiese to arbitrary convex polygons~\cite{wiese2016independent}.
However, it has not been studied so far whether shrinking also helps to design parameterized algorithms.
One concrete question would be whether {\sc Independent Set} for axis-parallel rectangles remains $\mathsf{W}[1]$-hard in the shrinking model.

\paragraph*{Our results.}
This paper addresses the parameterized complexity of {\sc{Maximum Independent Set of Rectangles}} in the shrinking model, and answers the above questions positively. 
On the way to our two main parameterized contributions, we also improve the PTAS by Adamaszek et al.~\cite{AdamaszekChalermsookWiese2015} to an EPTAS.

Our first main contribution is that {\sc{Maximum Independent Set of Rectangles}} is fixed-parameter
tractable (FPT) in the shrinking model. More precisely, for a shrinking parameter $\delta$, we can
decide in randomized time $f(k,\delta)\cdot (nN)^{\Oh(1)}$ whether there is \ej{an independent set of} 
$k$ (shrunk) rectangles, or \ej{the original family has no independent subfamily of size $k$}.
Here, $N$ is the total bit size of the input and $f$ is some computable function.
The algorithm works also in the weighted setting,
where we are looking for a maximum-weight subset of at most $k$ non-overlapping rectangles.
The main reason why we are able to circumvent the $\mathsf{W}[1]$-hardness for the standard model (i.e., without shrinking) 
is that the reduction of Marx~\cite{Marx07a} heavily relies on tiny differences in the coordinates of the rectangles. 
However, as Adamaszek et al.~\cite{AdamaszekChalermsookWiese2015} and this paper show, this aspect vanishes in the shrinking model.

The parameterized algorithm is actually a consequence of an EPTAS that we present for {\sc Maximum Weight Independent Set of Rectangles}.
That is, we give an algorithm with running time $f(\epsilon,\delta)\cdot(nN)^{\Oh(1)}$ that finds a subset of rectangles that do not overlap after shrinking by factor $1-\delta$,
and whose total weight is at least $1-\epsilon$ times the optimum without shrinking. Recall that the standard model does not admit an EPTAS, unless $\mathsf{FPT} = \mathsf{W}[1]$~\cite{Marx07a}.

Our EPTAS is based on the same principles as the PTAS of Adamaszek et al.~\cite{AdamaszekChalermsookWiese2015}.
The idea is to assemble an optimum solution using a bottom-up dynamic-programming approach pioneered by Erlebach et al.~\cite{EJS2005}. 
Each subproblem solved in the dynamic program corresponds to the maximum weight of an independent set 
contained in a ``box'', and the computation of the optimum for each such box boils down to enumerating a limited number of carefully chosen partitions of the box into smaller boxes.
Intuitively, the ability to shrink is used to make sure that rectangles fit nicely into the different boxes. 
The main challenge is to ensure that these boxes can be assumed to be simple, and therefore only a limited number of subproblems is necessary to assemble a near-optimum solution.

The crucial contribution in our approximation algorithm is that we show that rectangular boxes suffice. 
In~\cite{AdamaszekChalermsookWiese2015}, 
most rectangles were shrunk
in \emph{only one direction} and therefore, 
the
boxes were axis-parallel polygons with at most $g(\epsilon,\delta)$ sides each, for some function $g$. 
This makes the dynamic program very complex, and yields a running time of $(nN)^{g(\epsilon,\delta)}$ due to the sheer number of subproblems solved. 
In this paper, we fully exploit the properties of the shrinking model and shrink \emph{each} rectangle in \emph{two directions}. 
This changes the analysis, but the main advantage is that we only need to consider boxes that are rectangles (i.e., with only four sides) in our dynamic program. 
This greatly simplifies the dynamic program, and we show that we need to consider only $f(\eps,\delta)\cdot (nN)^{\Oh(1)}$ different subproblems.
Hence, our EPTAS is both substantially faster and significantly simpler than previous work.

\medskip

Our second main contribution shows that several important subcases of {\sc{Maximum Weight Independent Set of Rectangles}} with $\delta$-shrinking admit a polynomial kernel when parameterized by $k$ and $\delta$. 
Intuitively, such a kernel is a polynomial-time algorithm that finds a subfamily of the input rectangles of size bounded by a polynomial of $k$ and $\delta$
that retains an optimum solution after $\delta$-shrinking; a formal definition is given in Section~\ref{sec:kern}.
We show that for unit
squares of non-uniform weight, we can construct a kernel of size $\Oh(k/\delta^2)$,
while for squares of non-uniform size, but of uniform weight, we can construct a kernel of size $\Oh(k^2\cdot \frac{\log (1/\delta)}{\delta^3})$.
As a direct consequence, we obtain FPT algorithms for the considered variants with running time $(k/\delta)^{\Oh(\sqrt{k})}\cdot (nN)^{\Oh(1)}$ 
by applying the $n^{\Oh(\sqrt{k})}$-time algorithm of Marx and Pilipczuk~\cite{MarxP15} on the kernels.
This \emph{subexponential} running time
is far better than the running time of the FPT algorithm for the general rectangle case.
Other results from the literature, like Alber and Fiala~\cite{AlberF04}, Hunt et al.~\cite{HuntMRRRS98}, or Chan et al.~\cite{chan2004note}, can be also used in combination with our kernelization to yield faster FPT and approximation algorithms.

The main idea of our kernelization algorithms is as follows.
If two or more squares have similar size and their centers lie very close to each other,
then in the shrinking model we can replace them by a single (shrunk)
square that is contained in each of them. 
In the weighted setting, the new square keeps the largest among the weights of the original ones.
Thus, we can assume that squares of roughly the same size do not lie too close
to each other. 
This enables a greedy approach, similar in spirit to Alber and Fiala~\cite{AlberF04}, which works for 
unit squares both in the uniform and non-uniform weight setting.
For squares of non-uniform size, but uniform weight, we observe that if two squares have very different size and they overlap,
then either one is contained in the other (and thus
we can remove the larger one without changing the optimum), or after the shrinking they
do not overlap anymore. Hence, if there are $k$ squares of pairwise very different
size, then they must be non-overlapping after shrinking, and we have a solution.
Otherwise, the squares ``occupy'' only $\Oh_\delta(k)$ levels of magnitude, and we can kernelize each of these levels separately.

\paragraph*{Organization.} In Section~\ref{sec:prelims} we establish the notation.
Section~\ref{sec:main} contains the presentation of these results, with the FPT algorithm presented as an adjustment of the EPTAS.
In Section~\ref{sec:kern} we explore kernelization for the considered problems; this section also contains the formal statements of the kernelization results.
Finally, in Section~\ref{sec:conc} we gather some concluding remarks and provide an outlook on future work.

\section{Preliminaries}\label{sec:prelims}

For clarity, we essentially adopt the notation of Adamaszek et al.~\cite{AdamaszekChalermsookWiese2015}.
Suppose that $\Rr=\{R_1,R_2,\ldots,R_n\}$ is a family of axis-parallel rectangles given in the input. Each rectangle $R_i$ is described as
$$R_i=\{(a,b)\ \colon\ \crdx{i}{1} <a<\crdx{i}{2} \textrm{ and } \crdy{i}{1}<b<\crdy{i}{2}\},$$
where $\crdx{i}{1}<\crdx{i}{2}$ and $\crdy{i}{1}<\crdy{i}{2}$ are integers. Thus, the input rectangles are assumed to be open, and their
vertices are at integral points. 
In general, this is not a restriction~\cite[Lemma~2.1]{LingasW05}.
We assume that the family $\Rr$ is given in the input with all the coordinates $\crdx{i}{1},\crdx{i}{2},\crdy{i}{1},\crdy{i}{2}$ encoded in binary;
thus, the coordinates are at most exponential in the total bit size of the input, denoted by $N$.
For a rectangle $R_i$, we define its {\em{width}} $\width_i=\crdx{i}{2}-\crdx{i}{1}$ and {\em{height}} $\height_i=\crdy{i}{2}-\crdy{i}{1}$.
Moreover, each rectangle $R_i$ has a prescribed {\em{weight}} $\weight_i$, which is a nonnegative real.
For a subset $\Ss\subseteq \Rr$, we denote $\weight(\Ss)=\sum_{R_i\in \Ss}\, \weight_i$.

Fix a constant $\delta$ with $0<\delta<1$. For a rectangle $R_i$, its {\em{$\delta$-shrinking}} $R_i^{-\delta}$ is the rectangle with $x$-coordinates 
$\crdx{i}{1}+\frac{\delta}{2}\width_i$ and $\crdx{i}{2}-\frac{\delta}{2}\width_i$, and $y$-coordinates $\crdy{i}{1}+\frac{\delta}{2}\height_i$ and $\crdy{i}{2}-\frac{\delta}{2}\height_i$.
The $\delta$-shrinking retains the weight $\weight_i$ of the original rectangle. For a subset $\Ss\subseteq \Rr$, we denote $\Ss^{-\delta}=\{R_i^{-\delta}\colon R_i\in \Ss\}$ to be the family of $\delta$-shrinkings
of rectangles from $\Ss$.

A family of rectangles is {\em{independent}} (or is an {\em{independent set}}) if the rectangles are pairwise non-overlapping.
In the \MWISR problem (\mwisr) we are given a family of axis-parallel rectangles $\Rr=\{R_1,R_2,\ldots,R_n\}$, and the goal is to find a subfamily of $\Rr$ that is independent and has  maximum total weight.
This maximum weight will be denoted by $\OPT(\Rr)$. In the parameterized setting, the {\em{parameterized \mwisr}} problem, we are additionally given an integer parameter~$k$, and we look for a subfamily of $\Rr$ that has size at most $k$, 
is independent, and has maximum possible weight subject to these conditions. This maximum weight will be denoted by $\OPT_k(\Rr)$.

In the $\delta$-shrinking setting, we relax the requirement of independence to just requiring the disjointness of $\delta$-shrinkings,
but we still compare the weight of the output of our algorithm with $\OPT(\Rr)$, respectively with $\OPT_k(\Rr)$. 

Finally, given a family $\Rr$ of arbitrary objects in the plane, the \emph{intersection graph} $G$ induced by $\Rr$ has a vertex for each object in $\Rr$ and an edge between two vertices if and only if the corresponding objects intersect. Conversely, the family $\Rr$ is said to be a \emph{representation} of $G$.

\section{Main results}\label{sec:main}

With the above definitions in mind, we can state formally our main results.

\begin{theorem}[FPT for \mwisr with $\delta$-shrinking]\label{thm:main-fpt}
There is a randomized algorithm that given a weighted family $\Rr$ of $n$ axis-parallel rectangles with total encoding size $N$, 
and parameters $k$ and $\delta$, runs in time $f(k,\delta)\cdot (nN)^{c}$ for some computable function
$f$ and constant $c$, and outputs a subfamily $\Ss\subseteq \Rr$ such that $|\Ss|\leq k$, $\Ss^{-\delta}$ is independent, and $\weight(\Ss)\geq \OPT_k(\Rr)$ with probability at least~$1/2$.
\end{theorem}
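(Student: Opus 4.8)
The plan is to derive Theorem~\ref{thm:main-fpt} as a corollary of the EPTAS for the weighted problem, exactly as the introduction promises. So the bulk of the work is the EPTAS itself; once we have an algorithm that in time $f(\epsilon,\delta)\cdot(nN)^{\Oh(1)}$ returns a subfamily whose $\delta$-shrinking is independent and whose weight is at least $(1-\epsilon)\OPT(\Rr)$, we handle the parameterized version by a standard reduction: first restrict attention to the top $k$ heaviest rectangles, or more precisely observe that $\OPT_k(\Rr)$ is the optimum of a capacitated variant; run the dynamic program with the extra bookkeeping that each box also records the number of rectangles used (this only multiplies the number of subproblems by $k$), and choose $\epsilon$ small enough — roughly $\epsilon < w_{\min}/\OPT_k$ after guessing the scale — so that a $(1-\epsilon)$-approximate solution of size $\le k$ must actually be optimal, hence exactly achieves $\OPT_k(\Rr)$. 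To get an honest ``$\ge \OPT_k(\Rr)$'' rather than ``$\ge(1-\epsilon)\OPT_k(\Rr)$'' one rescales weights or guesses the value of the optimum within a polynomial factor and sets $\epsilon$ accordingly; this costs only a polynomial factor and a further function of $k,\delta$ in the running time.

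For the EPTAS I would follow the Erlebach–Jansen–Seidel / Adamaszek–Chalermsook–Wiese hierarchical-DP template, with the new twist flagged in the introduction: shrink every rectangle in \emph{both} coordinates so that all DP boxes can be taken to be axis-parallel \emph{rectangles} (four sides), not polygons. Concretely: (i) by a standard weight-classification and random-shifting argument, discretize the relevant coordinates to a grid of polynomial size and partition the rectangles into $\Oh_{\epsilon,\delta}(1)$ width/height classes, losing only an $\epsilon$ fraction of the weight; (ii) set up a DP over rectangular boxes whose corners lie on the discretized grid, where the value of a box is the maximum weight of an independent set of shrunk rectangles contained in it; (iii) show a \emph{structural lemma}: there is a near-optimal solution that can be assembled by recursively partitioning the bounding box into a bounded number ($\Oh_{\epsilon,\delta}(1)$) of sub-boxes by axis-parallel cuts, where at the bottom level each box contains only rectangles of one class arranged in a simple (e.g. ``staircase-free'' / grid-like) pattern that can be solved directly. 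The shrinking is what makes step (iii) work: a rectangle that in the original instance straddles a chosen cut line by only a tiny amount will, after $\delta$-shrinking in both directions, lie strictly on one side, so we may ``snap'' cut lines to grid positions and charge the few genuinely-straddling rectangles to the $\epsilon$-loss budget. I would prove the structural lemma by taking a fixed optimal solution $\OPT(\Rr)$, and building the partition top-down: at each box, find an axis-parallel cut that splits the remaining rectangles roughly in half (balanced-separator argument), shift it to the nearest grid line, discard the $\Oh(\epsilon)$-weight worth of rectangles that now straddle it after shrinking, and recurse; bounding the recursion depth by $\Oh_{\epsilon,\delta}(\log(1/\epsilon))$ or by a more careful potential argument bounds the total discarded weight and the total number of box-shapes.

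The number of DP subproblems is $(nN)^{\Oh(1)}$ because a box is determined by four grid coordinates, and the transition at each box enumerates all ways of cutting it into a bounded number of sub-boxes with grid-aligned cuts, which is again $(nN)^{\Oh(1)}$ possibilities, with the branching \emph{degree} (not the exponent) depending on $\epsilon,\delta$; this is precisely where we beat the $(nN)^{g(\epsilon,\delta)}$ bound of the earlier PTAS. The base cases — single-class boxes — are solved either trivially or via a small auxiliary DP along one axis. Correctness is the combination of: the structural lemma (the recursive partition exists and is near-optimal), and an inductive argument that the DP value of a box is at least the weight of the best recursively-partitionable solution inside it.

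The main obstacle I anticipate is the structural lemma, specifically controlling both the \emph{recursion depth / number of box shapes} and the \emph{total weight discarded at cuts} simultaneously while keeping boxes rectangular. The danger is that a balanced cut in one direction may later force many unbalanced cuts, blowing up the depth; or that snapping cut lines to a coarse grid forces us to discard too much weight when many rectangles of the ``wrong'' class are long and thin relative to the grid spacing. Resolving this is exactly where shrinking in \emph{two} directions (rather than one, as in~\cite{AdamaszekChalermsookWiese2015}) is essential — it guarantees that for a suitably chosen grid resolution, every rectangle is either entirely inside a cell-strip or can be cut cleanly — and where the grouping into few width/height classes must be done carefully so that within a box all relevant rectangles behave uniformly with respect to the available cut positions. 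A secondary subtlety is the randomized element: the random shift of the grid is what makes the expected discarded weight small, so the algorithm is Monte Carlo, and the ``probability at least $1/2$'' in the statement comes from a Markov-inequality argument on this random choice (boosted by repetition if desired).
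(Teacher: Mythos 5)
The central gap is in your reduction of Theorem~\ref{thm:main-fpt} to the EPTAS by ``choosing $\eps$ small enough.'' You propose $\eps \approx w_{\min}/\OPT_k$ (after guessing the scale) so that any $(1-\eps)$-approximate solution of size at most $k$ is forced to be exactly optimal. But this quantity is not bounded by any function of $k$ and $\delta$: the weights are arbitrary reals given in binary, so the relative gap between $\OPT_k(\Rr)$ and the second-best $k$-bounded solution can be as small as $2^{-\Theta(N)}$, regardless of $k$ or $\delta$. Since the EPTAS runs in time $f(\eps,\delta)\cdot(nN)^{\Oh(1)}$ with $f$ super-polynomial in $1/\eps$, plugging in such an $\eps$ destroys the required $f(k,\delta)\cdot(nN)^{\Oh(1)}$ bound. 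The paper warns about exactly this: the shortcut $\eps = 1/(k+1)$ works only for uniform weights, and ``this argument breaks when the weights are non-uniform.''

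The paper's actual argument sidesteps multiplicative error entirely. One runs the sparsification steps with $\eps = \Theta(1/k)$ --- bounded purely in terms of $k$ --- and then makes a \emph{survival} argument rather than a \emph{loss-factor} argument: fix an optimal $k$-bounded solution $\Ss^*$; it contains at most $k$ rectangles, and each is discarded during the random level-deletion and random grid-shift with probability $\Oh(\eps)=\Oh(1/k)$, so by Markov's inequality, with probability at least $1/2$ \emph{all} of $\Ss^*$ survives. On that event $\OPT_k$ of the sparsified family equals $\OPT_k(\Rr)$ exactly. Then the budget-tracking DP --- which you do correctly identify, adding a coordinate $\lambda\in\{0,\ldots,k\}$ to each subproblem and distributing the budget among the boxes in each partition --- computes the exact value of $\OPT_k$ on the sparsified-and-snapped family with no further loss. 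No smallness-of-$\eps$ trick is needed; the ``probability at least $1/2$'' in the theorem statement comes from the survival of $\Ss^*$, not from an approximation-gap argument. Your proposal has the right DP ingredient but the wrong glue: you must replace ``set $\eps$ tiny so approximation becomes exact'' with ``set $\eps=\Theta(1/k)$ so the optimal $k$-set survives sparsification with constant probability, then compute exactly.''
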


\begin{theorem}[EPTAS for \mwisr with $\delta$-shrinking]\label{thm:main-approx}
There is a randomized algorithm that given a weighted family $\Rr$ of $n$ axis-parallel rectangles with total encoding size $N$, 
and parameters $\delta,\eps$, runs in time $f(\eps,\delta)\cdot (nN)^{c}$ for some computable function
$f$ and constant $c$, and outputs a subfamily $\Ss\subseteq \Rr$ such that $\Ss^{-\delta}$ is independent, and $\weight(\Ss)\geq (1-\eps)\OPT(\Rr)$ with probability at least~$1/2$.
\end{theorem}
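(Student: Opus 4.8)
The plan is to follow the bottom-up dynamic-programming-over-boxes paradigm of Erlebach, Jansen and Seidel~\cite{EJS2005} as refined by Adamaszek et al.~\cite{AdamaszekChalermsookWiese2015}, but to use the extra power of shrinking in \emph{both} coordinate directions so that every box stays an axis-parallel rectangle, described by just four coordinates. \emph{Preprocessing:} since all coordinates are integers bounded by $2^{\Oh(N)}$, fix the set of \emph{relevant} coordinates to consist of the side coordinates of the input rectangles together with the offsets of the $\Oh(D(\eps,\delta))$ randomly shifted grids introduced below; this set has size $f(\eps,\delta)\cdot(nN)^{\Oh(1)}$, and every box the algorithm considers will have its four sides among these coordinates. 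One may in addition discard all rectangles of weight below $\tfrac{\eps}{n}\weight(\Rr)$ and round the remaining weights down to powers of $1+\eps$, losing only a $(1-\Oh(\eps))$ factor; this is convenient but not strictly needed.

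\emph{Structural lemma (the heart of the proof).} I would show that there is a number $D=D(\eps,\delta)$ and a subfamily $\Ss^{\star}$ of some optimum solution with $\weight(\Ss^{\star})\ge(1-\eps)\OPT(\Rr)$ that admits a \emph{recursive rectangular partition}: a rooted tree of axis-parallel rectangular boxes whose root is the bounding box, of depth at most $D$, in which every internal box is split into $\Oh(1)$ children either by one or two axis-parallel cut lines (a guillotine cut) or by the four lines carrying the sides of one distinguished rectangle of $\Ss^{\star}$ (a ``pinwheel'' that isolates that rectangle in the central cell); moreover each $R\in\Ss^{\star}$ can be assigned to some box $B$ of the tree with $R^{-\delta}\subseteq B$ and with $R^{-\delta}$ meeting no cut line used at $B$ or at an ancestor of $B$. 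To prove it I would process an optimum solution top-down: inside a box $B$ carrying the induced sub-solution $\OPT_B$, separate the rectangles that are ``long'' in $B$ (spanning at least a $1-\mu$ fraction of $B$ in $x$ or in $y$, for a suitable $\mu(\eps,\delta)$) from the remaining ``short'' ones; pairwise-disjoint long rectangles form an $\Oh(1/\mu)$-sized family and can be peeled off by $\Oh(1)$ pinwheel/guillotine cuts, while the short rectangles are separated by a randomly shifted constant-granularity grid in each direction, then recursed on. The decisive use of two-directional shrinking is that a rectangle whose interior is pierced by a cut line within a $\tfrac{\delta}{2}$ fraction of one of its four sides is \emph{not} pierced by that line after $\delta$-shrinking, so rather than letting it protrude from a box -- which is exactly what forced non-rectangular boxes in~\cite{AdamaszekChalermsookWiese2015} -- we simply assign it to the unique child box containing $R^{-\delta}$, and this is possible no matter whether the offending line is horizontal or vertical. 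A random shift ensures that the rectangles that genuinely must be discarded at each level carry at most a $\tfrac{\eps}{D}$ fraction of $\weight(\OPT_B)$ in expectation, and a potential argument (each descent strictly decreases the number of distinct magnitude scales of rectangle sizes still present in a box, or trivialises the box) caps the depth at $D(\eps,\delta)$; summing the per-level losses gives the claimed $(1-\eps)$ guarantee.

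\emph{Dynamic program and running time.} The subproblems are the $f(\eps,\delta)\cdot(nN)^{\Oh(1)}$ axis-parallel rectangular boxes $B$ with sides among the relevant coordinates, and $\mathsf{DP}[B]$ is the maximum weight of a subfamily of $\Rr$ whose $\delta$-shrinkings lie inside $B$ and are pairwise disjoint. I would compute $\mathsf{DP}[B]$ as the best of (i) the empty family; (ii) a single rectangle $R$ with $R^{-\delta}\subseteq B$; (iii) over all $\Oh(1)$-line guillotine cuts of $B$ into rectangular sub-boxes $B_1,\dots,B_t$, of which there are only $(nN)^{\Oh(1)}$, the value $\sum_j\mathsf{DP}[B_j]$; and (iv) over all choices of a distinguished rectangle $R\subseteq B$, the weight of $R$ plus the $\mathsf{DP}$-values of the at most eight outer rectangular cells obtained by cutting $B$ along $R$'s four side lines. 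The structural lemma guarantees that the recursive rectangular partition of $\Ss^{\star}$ is one of the objects this recursion can realise, so $\mathsf{DP}[B_0]\ge\weight(\Ss^{\star})\ge(1-\eps)\OPT(\Rr)$ for the bounding box $B_0$; the reverse inequality and well-foundedness follow by induction on the number of relevant coordinates strictly inside $B$, and the solution attaining $\mathsf{DP}[B_0]$ has independent $\delta$-shrinkings by construction. Since there are $f(\eps,\delta)\cdot(nN)^{\Oh(1)}$ boxes, each handled by trying $(nN)^{\Oh(1)}$ cuts or distinguished rectangles and reading off $\Oh(1)$ previously computed sub-box values, the total running time is $f(\eps,\delta)\cdot(nN)^{c}$ for an absolute constant $c$, and the success probability over the random shifts is a positive constant, which independent repetition boosts above $1/2$. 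The decisive gain over the $(nN)^{g(\eps,\delta)}$-time PTAS of~\cite{AdamaszekChalermsookWiese2015} is precisely that every recursion step now branches into only $\Oh(1)$ \emph{rectangular} sub-boxes, rather than into a $g(\eps,\delta)$-gon, which is what keeps the exponent of $nN$ independent of $\eps$ and $\delta$.

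\emph{Main obstacle.} I expect the structural lemma to be by far the hardest part, and within it the delicate situation is a box in which essentially every candidate horizontal cut and every candidate vertical cut is blocked by some rectangle of the sub-solution. Showing that one can nevertheless always peel off $\Oh(1)$ rectangles by rectangular pinwheel cuts and then guillotine the remainder while discarding only a $\tfrac{\eps}{D}$ fraction of the weight and keeping the recursion shallow is the crux: it requires combining the boundedness of families of pairwise-disjoint long rectangles, the random-shift averaging over the short ones, and the fact that $\delta$-shrinking in both directions lets a rectangle grazed by any cut be absorbed into a rectangular child box.
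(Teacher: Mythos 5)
Your high-level strategy matches the paper's: exploit two-directional shrinking so every DP subproblem is an axis-parallel rectangle, described by only four coordinates, which is exactly what removes the $(nN)^{g(\eps,\delta)}$ exponent from the PTAS of Adamaszek et al.\ and yields an EPTAS. However, the structural lemma you outline, which is where you yourself flag the difficulty, has several gaps that the paper's actual argument avoids by using a different, more rigid structure, and I do not see how to patch them inside your framework.

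First, the paper does not prove a recursive-partition structural lemma of your flavour at all. It fixes, \emph{before} the DP runs, a global hierarchy of grids, one per size level, with a single random shift. Rectangles are classified into levels by a shifting argument (losing $\Oh(\eps)$), abusive rectangles crossed by too-coarse grid lines are removed (losing $\Oh(\eps)$ more), and the survivors are \emph{snapped} by shrinking so that each side lies on a grid line of the rectangle's own level. After this, a subproblem is a rectangle whose corners are grid points of a specific level $(s,t)$, and the recursion from level $(s,t)$ descends to $(s-1,t)$, $(s,t-1)$ and $(s-1,t-1)$ by tiling the subproblem into a bounded grid of cells and enumerating all ways to group cells into large/horizontal/vertical/small boxes. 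This tiling is always a valid partition: the snapping guarantees that every rectangle of the current level is exactly a union of cells, so no rectangle ever straddles a cut. Crucially, the paper does not bound the depth of this recursion by a function of $\eps$ and $\delta$; the depth is $\Theta(p)=\Theta(N)$. Correctness and running time instead rest on a count of \emph{distinct} nonempty subproblems (Lemma~\ref{lem:nontrivial-bound}), which is $f(\eps,\delta)\cdot(nN)^{\Oh(1)}$ because at most $\Oh_{\eps,\delta}(1)$ subproblems of a fixed level can contain any fixed rectangle. The total weight loss is likewise analysed \emph{per rectangle}, not per level: each rectangle is crossed by a coarser grid line with probability $\Oh(\eps)$ overall, since it only interacts with its own and adjacent levels.

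Your proposal instead wants (i) a recursion tree of depth $D(\eps,\delta)$, certified by a potential "each descent strictly decreases the number of distinct magnitude scales present in a box", and (ii) a per-level loss of $\eps/D$. Both break down. The number of magnitude scales present in the input is $\Theta(N)$, not $\Theta_{\eps,\delta}(1)$, so the potential argument only gives depth $\Oh(N)$; then a per-level loss of $\eps/D=\eps/\Oh(N)$ is far too small to be realised by a constant-granularity random shift (a shift can only avoid crossing a rectangle with probability $1-\Theta(\text{rectangle size}/\text{grid spacing})$, and you cannot make the grid spacing depend on $N$ while keeping the number of cells $\Oh_{\eps,\delta}(1)$). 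Without the depth bound, summing the per-level losses does not give $(1-\eps)$. Separately, the claim that "pairwise-disjoint long rectangles form an $\Oh(1/\mu)$-sized family" is false if \emph{long} means spanning $\geq(1-\mu)$ in one coordinate: arbitrarily many pairwise-disjoint thin horizontal slabs each span the full width of the box. If \emph{long} means long in both coordinates, an area argument gives at most one such rectangle, so "$\Oh(1/\mu)$" is wrong in the other direction, and the subsequent peeling step is underspecified. Finally, your "relevant coordinates" set is claimed to have size $f(\eps,\delta)\cdot(nN)^{\Oh(1)}$, but the grids you introduce have $\Theta(L)=2^{\Theta(N)}$ offsets each; you need an argument analogous to the paper's Lemma~\ref{lem:nontrivial-bound} that only polynomially many grid cells are ever nonempty, and you do not supply one. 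In short, the global ingredients are right, but the structural lemma as stated would not survive: you should replace the top-down long/short peeling and the depth-$D(\eps,\delta)$ potential by a fixed level hierarchy, snapping, and a per-rectangle loss analysis, which is precisely what the paper does.
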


We prove the above thereoms in this section.
We concentrate on the proof of Theorem~\ref{thm:main-approx}, because Theorem~\ref{thm:main-fpt} follows by a simple adjustment of our argumentation, as we will explain
in Section~\ref{sec:param}.

Throughout the proof we fix the input family $\Rr=\{R_1,\ldots,R_n\}$, and we denote $\OPT=\OPT(\Rr)$.
We also fix the constants $\delta$ and $\eps$, and w.l.o.g.
we assume that $1/\delta$ and $1/\eps$ are even integers larger than~$4$. 
For convenience, throughout the proof we aim at finding a solution $\Ss$ with $\weight(\Ss)\geq (1-d\cdot \eps)\OPT$ for some constant $d$, for at the end we may rescale the parameter $\eps$ to $\eps/d$.

Recall that $N$ denotes the total size of the binary encoding of all the coordinates of the input rectangles.
By shifting all the rectangles, we may assume without loss of generality that they all fit into the square $[1,L]\times [1,L]$, where $L=(1/\delta\eps)^{\ellmax}$ for some integer $\ellmax=\Oh(N)$.
That is, all the coordinates $\crdx{i}{1},\crdx{i}{2},\crdy{i}{1},\crdy{i}{2}$ are between $1$ and $L$, so in particular the width and the height of each rectangle is smaller than $L$.

Our reasoning is divided into two main steps.
First, like in the PTAS due to Adamaszek et al.~\cite{AdamaszekChalermsookWiese2015},
in Section~\ref{sec:sparse} we use standard shifting arguments to remove some rectangles from $\Rr$ so that $\OPT$ decreases only by an $\Oh(\eps)$-fraction, but the resulting family admits 
some useful properties. Then, we shrink the rectangles in a similar way as in~\cite{AdamaszekChalermsookWiese2015}. As we will point out, there is a subtle but important 
difference in our shrinking procedure compared to~\cite{AdamaszekChalermsookWiese2015}.
Second, we show that the properties of the obtained family enable us to compute an optimum solution
using dynamic programming; this algorithm is presented in Section~\ref{sec:dp}.

\subsection{Sparsifying the family}\label{sec:sparse}

\paragraph*{\ej{Separating rectangles by size.}}
Intuitively, we will apply shifting techniques to extract some structure in the input family $\Rr$ while losing only an $\Oh(\eps)$-fraction of $\OPT$.
The first goal is to obtain a classification of the rectangles according to their widths and heights, such that rectangles in the same class have similar widths and heights, but between the classes
the dimensions differ significantly. This is encapsulated formally in the following definition.

\begin{definition}\label{def:well-separated}
A subfamily $\Rr'\subseteq \Rr$ is {\em{well-separated}} if there exist two partitions 
$$(\Rr^{\Vf}_1,\Rr^{\Vf}_2,\ldots,\Rr^{\Vf}_p)\quad \textrm{and} \quad (\Rr^{\Hf}_{1},\Rr^{\Hf}_2,\ldots,\Rr^{\Hf}_p)$$
of $\Rr'$, with $p\leq \ellmax$, as well as reals $\nu_t,\mu_t$ for $t=1,2,\ldots,p$, with the following properties satisfied for each $t\in \{1,2,\ldots,p\}$:
\begin{itemize}
\item $\nu_{t}\le \width_{i}<\mu_{t}\ $ for each $R_{i}\in\Rr^{\Vf}_{t}$;
\item $\nu_{t}\le \height_{i}<\mu_{t}\ $ for each $R_{i}\in\Rr^{\Hf}_{t}$;
\item $\nu_{t}/\mu_{t-1}=1/\delta\epsilon$ (except for $t=1$) and $\mu_{t}/\nu_{t}= (1/\delta\epsilon)^{(1/\epsilon)-1}$; and 
\item $\nu_1\leq 1$, $\mu_p\geq L$, and all numbers $\mu_t$ and $\mu_t$ apart from $\nu_1$ are integers.
\end{itemize}
\end{definition}

The partitions $(\Rr^{\Vf}_t)_{t=1,\ldots,p}$ and $(\Rr^{\Hf}_t)_{t=1,\ldots,p}$ are called the {\em{vertical}} and {\em{horizontal levels}}, respectively, whereas the
parameters $(\nu_t)_{t=1,\ldots,p}$ and $(\mu_t)_{t=1,\ldots,p}$ are the {\em{lower}} and {\em{upper limits}} of the corresponding levels.
Note that, maybe somehow counter-intuitively,
vertical levels partition \ej{$\Rr'$ by}
width, while the horizontal levels partition \ej{$\Rr'$ by}
heights; the reason for this will become clear later on.  
We now prove that we can find a well-separated subfamily that loses only an $\Oh(\eps)$-fraction of $\OPT$ using a standard shifting technique (see, e.g., Hochbaum and Maas~\cite{HochbaumM85}).
Essentially the same step is used in the PTAS of Adamaszek et al.~\cite{AdamaszekChalermsookWiese2015} (see Lemma~6 therein).

\begin{lemma}\label{lem:well-separated}
In polynomial time one can sample a subfamily $\Rr'\subseteq \Rr$ that is well-separated and satisfies $\OPT(\Rr')\geq (1-8\eps)\OPT$ with probability at least $3/4$.
Moreover, $\Rr'$ is constructed together with the corresponding partitions into vertical and horizontal levels, and with their lower and upper limits.
\end{lemma}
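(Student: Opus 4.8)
The plan is to use a two-level random shifting argument, one for widths and one for heights, à la Hochbaum–Maas. First I would classify rectangles by the logarithm of their dimensions: for each rectangle $R_i$ let its \emph{width-class} be the unique integer $a$ with $(1/\delta\eps)^a \le \width_i < (1/\delta\eps)^{a+1}$, and similarly a \emph{height-class} for $\height_i$. Since all dimensions lie in $[1,L) = [1,(1/\delta\eps)^{\ellmax})$, both indices range over $\{0,1,\ldots,\ellmax-1\}$. The idea is that a well-separated family corresponds to choosing, for the widths, a set of ``kept'' classes that come in blocks of length $1/\eps$ separated by a single discarded class (the discarded class providing the multiplicative gap $\nu_t/\mu_{t-1}=1/\delta\eps$), and likewise for the heights; a block of $1/\eps$ consecutive kept classes then spans a factor $(1/\delta\eps)^{1/\eps}$, so consecutive retained classes in a block give the ratio $\mu_t/\nu_t = (1/\delta\eps)^{(1/\eps)-1}$ after one class is used as the buffer.

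Concretely, I would pick an offset $r \in \{0,1,\ldots,1/\eps - 1\}$ uniformly at random and discard from $\Rr$ every rectangle whose width-class is congruent to $r \bmod (1/\eps)$; independently pick $r' \in \{0,\ldots,1/\eps-1\}$ and discard every rectangle whose height-class is congruent to $r' \bmod (1/\eps)$. Call the surviving family $\Rr'$. For each fixed rectangle in an optimal solution, the probability it is killed by the width choice is at most $\eps$, and likewise at most $\eps$ by the height choice, so by linearity of expectation the expected weight lost is at most $2\eps \cdot \OPT$; by Markov the loss is at most $8\eps \cdot \OPT$ with probability at least $3/4$ (with a factor $4$ slack, giving $\Pr[\weight(\Rr') \ge \OPT(\Rr') \ge (1-8\eps)\OPT] \ge 3/4$ — here I use that $\OPT(\Rr')$ is at least the weight of the surviving part of a fixed optimum). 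I would then package the retained width-classes into consecutive blocks: the kept classes between two discarded ones form one vertical level $\Rr^{\Vf}_t$, and I set $\nu_t = (1/\delta\eps)^{a_t}$ where $a_t$ is the smallest kept class in that block and $\mu_t = (1/\delta\eps)^{a_t + (1/\eps) - 1}$ (adjusting the first and last levels so that $\nu_1 \le 1$ and $\mu_p \ge L$, e.g. by extending the first block down to class $0$ and the last up to class $\ellmax-1$); the height-classes are grouped the same way into horizontal levels, using the same block boundaries so that the two partitions share the index set $\{1,\ldots,p\}$ with $p \le \ellmax$. One checks directly that the four bulleted conditions of Definition~\ref{def:well-separated} hold: the first two by the definition of the classes, the third by the arithmetic of the block lengths, and the fourth by the boundary adjustments together with the fact that the $\mu_t$ are integer powers of $1/\delta\eps$ (an integer, since $1/\delta$ and $1/\eps$ are even integers). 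All of this is computable in polynomial time once the $\Oh(N)$ classes are read off.

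The main obstacle — really the only subtle point — is the bookkeeping to make the two partitions agree on a common index set $\{1,\ldots,p\}$ while simultaneously forcing $\nu_1 \le 1$, $\mu_p \ge L$, and integrality of the limits. A clean way around it is to run the \emph{same} random offset selection independently for width and height but to refine both into a common block structure: take the $1/\eps$-periodic grid of discarded classes on the combined range $\{0,\ldots,\ellmax-1\}$, which induces at most $\lceil \ellmax \eps \rceil \le \ellmax$ blocks, and define the vertical level $\Rr^{\Vf}_t$ (resp. horizontal level $\Rr^{\Hf}_t$) to be the rectangles whose width-class (resp. height-class) lies in the $t$-th block; an empty level is allowed. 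Since the grid is the same for both coordinates, the index sets coincide automatically, and the limits $\nu_t,\mu_t$ depend only on $t$ and not on the coordinate. The only remaining care is at $t=1$ and $t=p$: I would artificially lower $\nu_1$ to $1$ (permissible since it need not be an integer) and raise $\mu_p$ to the least power of $1/\delta\eps$ that is $\ge L$, which is $L$ itself by our choice $L=(1/\delta\eps)^{\ellmax}$, so $\mu_p = L$ works and is an integer. With this setup every condition in Definition~\ref{def:well-separated} is immediate, and the probability bound is exactly the $3/4$ computed above, completing the proof.
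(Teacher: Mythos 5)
Your proposal is essentially the paper's proof: the same $(1/\eps)$-periodic random-shift layer removal in the style of Hochbaum--Maas, the same union bound giving removal probability at most $2\eps$ per rectangle, and the same Markov argument with slack factor $4$. Two small remarks: Definition~\ref{def:well-separated} requires a \emph{single} sequence of limits $\nu_t,\mu_t$ serving both partitions, so the same random offset must be used for widths and heights (your last paragraph correctly converges to this, even though the second paragraph first describes independent offsets $r,r'$, which would not yield aligned blocks); and setting $\nu_1=1$ outright is not compatible with $\mu_1/\nu_1=(1/\delta\eps)^{(1/\eps)-1}$ together with $\nu_2/\mu_1=1/\delta\eps$ --- the multiplicative structure forces $\nu_1=(1/\delta\eps)^{\,r+1-1/\eps}\le 1$, the (possibly non-integral) value the definition allows.
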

\begin{proof}
Recall that the widths and heights of the rectangles from $\Rr$ are integers between $1$ and $L-1$, where $L=(1/\delta\eps)^{\ellmax}$.
Create first a partition of the rectangles into {\em{vertical layers}} $\Ll^{\Vf}_j$ for $j=1,2,\ldots,\ellmax$, where layer $\Ll^{\Vf}_j$ consists of
rectangles $R_i$ for which $(1/\delta\eps)^{j-1}\leq \width_i<(1/\delta\eps)^j$.
In a symmetric manner, partition $\Rr$ into {\em{horizontal layers}} $\Ll^{\Hf}_j$ for $j=1,2,\ldots,\ellmax$, where layer $\Ll^{\Hf}_j$ consists of
rectangles $R_i$ for which $(1/\delta\eps)^{j-1}\leq \height_i<(1/\delta\eps)^j$.

Let us pick an offset $b$ uniformly at random from the set $\{0,1,\ldots,(1/\eps)-1\}$.
Construct $\Rr'$ from $\Rr$ by removing all rectangles contained in those vertical layers $\Ll^{\Vf}_j$ and those horizontal layers $\Ll^{\Hf}_j$, for which $j\equiv b \mod (1/\eps)$.
It is easy to see that $\Rr'$ constructed in this manner is well-separated: each vertical level $\Rr^{\Vf}_t$ consists of $(1/\eps)-1$ consecutive vertical layers between two removed ones,
with the exception of the first and the last level, for which the start/end of the sequence of layers delimits the level.
A symmetric analysis yields the partition into horizontal levels. Moreover, it is straightforward to compute in polynomial time 
the partition into horizontal/vertical levels, as well as to choose their lower and upper limits.


It remains to show that the probability that $\OPT(\Rr')\geq (1-8\eps)\Rr$ is at least $3/4$.
Fix any optimum solution $\Ss$ in $\Rr$, that is, an independent set of rectangles such that $\weight(\Ss)=\OPT$.
Observe that for any rectangle $R_i\in \Ss$, the probability that the vertical layer it belongs to is removed during the construction of $\Rr'$, is equal to $\eps$.
Similarly, the probability that the horizontal layer to which $R_i$ belongs is removed when constructing $\Rr'$, is also $\eps$.
Hence, $R_i$ is not included in $\Rr'$ with probability at most $2\eps$. 
This means that the expected value of $\weight(\Ss\setminus \Rr')$, the total weight of rectangles from $\Ss$ that did not survive in $\Rr'$, is at most $2\eps\cdot \OPT$.
By Markov's inequality, with probability at least $3/4$ we have that $\weight(\Ss\setminus \Rr')\leq 8\eps\cdot \OPT$.
This event, however, implies that $\OPT(\Rr')\geq \weight(\Ss\cap \Rr')\geq (1-8\eps)\OPT$, and hence we are done.
\end{proof}

From now on we work with the family $\Rr'$ obtained by Lemma~\ref{lem:well-separated}, and we adopt the notation from Definition~\ref{def:well-separated}.

\paragraph*{Hierarchical grid structure.} Let $a \ej{ \in \{1,\ldots,L-1\}}$ be an integer shift parameter, to be determined later.
Given $a$, we construct a hierarchy of grid lines in the plane.
We start with horizontal lines, which will be divided into $p$ levels corresponding to the horizontal levels $\Rr^{\Hf}_t$.
For level $t$, define the {\em{level-$t$ unit}} as $u_t=\delta\nu_t/2$. 
Note that for $t>1$ we have $u_t=\mu_{t-1}/(2\eps)$, and hence $u_t$ is an integer for $t>1$, since $1/\eps$ is even.
For each level $t\in \{1,2,\ldots,p\}$ we define a set of horizontal grid lines $G^{\Hf}_t$, consisting of the horizontal lines with $y$-coordinates from the set
$$\{a+b\cdot u_t\colon b\in \mathbb{Z}\}.$$
In other words, we take horizontal lines that are $u_t$ apart from each other, and we shift them so that there is a line with $y$-coordinate $a$.
We define vertical grid lines $G^{\Vf}_t$ of levels $t=1,2,\ldots,p$ in a symmetric manner, using the same shift parameter $a$ and the same units for all levels. 
Define the {\em{grid}} of level $t$ to be $G_t=G^\Hf_t\cup G^\Vf_t$.
Note that any line of $G_t$ is also a line of $G_{t'}$ for all $t'<t$.
Thus, the grid of each level $t'$ refines the grid of each larger level $t$.

Before we proceed, we briefly describe the intuition of the next step;
this step is also present in the PTAS of Adamaszek et al.~\cite{AdamaszekChalermsookWiese2015} (see Lemma~7 therein). Rectangles belonging to $\Rr^\Vf_{t'}$ for $t'\geq t$ have width not smaller
than $\nu_t$. On the other hand, the lines of $G^\Vf_t$ are spaced at distance $u_t=\delta \nu_t/2$ apart, which means that there are $\Omega(1/\delta)$ vertical grid lines of $G^\Vf_t$ crossing
each rectangle of vertical level $t$ or larger. Intuitively, $G^\Vf_t$ provides a fine grid for those vertical levels, so that their rectangles can be snapped to the lines of $G^\Vf_t$ via shrinking by a 
multiplicative factor of at most $1-\delta$. On the other hand, the rectangles of vertical levels $t-1$ or smaller have widths not larger than $\mu_{t+1}=\nu_t\cdot (\delta\eps)$.
This means that the grid lines of $G^\Vf_t$ are actually at much larger distance from each other than the maximum possible width of such rectangles; more precisely, larger by a multiplicative factor at least $1/(2\eps)$.
Consequently, if we choose the shift parameter $a \ej{ \in \{1,\ldots,L-1\}}$ uniformly at random, the probability that 
\ej{a}
rectangle $R_i$ will be crossed by a vertical line of 
level larger than its vertical level, or a horizontal line of level larger than its horizontal level,
will be $\Oh(\eps)$. If we exclude such rectangles, then we lose only an $\Oh(\eps)$-fraction of $\OPT$ in expectation, 
while achieving the property that vertical lines of each level $t$ separate rectangles of lower vertical levels,
and the symmetric claim holds for horizontal lines as well. 

We now formalize the above intuition.
Take a rectangle $R_i\in \Rr'$, and suppose that $R_i\in \Rr^\Vf_{s}$ and $R_i\in \Rr^\Hf_t$.
We say that $R_i$ is {\em{abusive}} if $R_i$ is crossed by a vertical line of level larger than $s$, or $R_i$ is crossed by a horizontal line of level larger than~$t$; see Figure~\ref{fig:abusive}.

\begin{lemma}\label{lem:abusive}
Suppose that $a$ is sampled uniformly at random from $\{0,\ldots,L-1\}$. Then, for each $R_i\in \Rr'$, the probability that $R_i$ is abusive is at most $8\eps$.
\end{lemma}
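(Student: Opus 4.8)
The plan is to bound, for a fixed rectangle $R_i$, the probability that some grid line of a level higher than its own crosses it, and then apply a union bound over the two directions and over the levels above. By symmetry between the horizontal and vertical directions, it suffices to handle the vertical case: assuming $R_i \in \Rr^\Vf_s$, I will bound the probability that $R_i$ is crossed by a vertical line of $G^\Vf_{t}$ for some $t > s$, and show this is at most $4\eps$; the horizontal case contributes another $4\eps$, giving $8\eps$ in total.

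For the vertical case, first I would recall that $R_i \in \Rr^\Vf_s$ means $\width_i < \mu_s$. Fix a level $t$ with $t > s$, i.e.\ $t \geq s+1$. The lines of $G^\Vf_t$ are spaced $u_t = \delta\nu_t/2$ apart, and the key inequality is that $\width_i$ is tiny compared to this spacing: using $\nu_t/\mu_{t-1} = 1/(\delta\eps)$ and $\mu_{t-1} \geq \mu_s$ (since $t-1 \geq s$ and the upper limits are increasing), we get $u_t = \delta\nu_t/2 = \mu_{t-1}/(2\eps) \geq \mu_s/(2\eps) > \width_i/(2\eps)$. Since $a$ is chosen uniformly at random, the probability that a fixed interval of length $\width_i$ contains one of the points $\{a + b u_t : b \in \mathbb{Z}\}$ is at most $\width_i/u_t < 2\eps$. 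Here I would be slightly careful about the discretization — $a$ ranges over integers in $\{0,\dots,L-1\}$ rather than over a continuum, and $u_t$ is an integer for $t > 1$ — but since $u_t$ divides evenly into the range (or one loses only a negligible boundary term), the bound $\width_i/u_t$ still holds up to this lower-order effect, and for $t>1$ it is exactly $\width_i/u_t$ after grouping residues mod $u_t$.

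Next I would sum over the levels $t$ above $s$. The crucial point is geometric decay of the spacings: since $u_{t+1}/u_t = \nu_{t+1}/\nu_t = (\mu_t/\nu_t)\cdot(\nu_{t+1}/\mu_t) = (1/\delta\eps)^{(1/\eps)-1}\cdot(1/\delta\eps) = (1/\delta\eps)^{1/\eps}$, which is at least $1/\eps$ (in fact far larger, since $1/\delta\eps > 1$ and $1/\eps > 1$), the probabilities $\width_i/u_t$ form a geometric series with ratio at most $\eps \leq 1/4$. Hence
\[
\sum_{t = s+1}^{p} \frac{\width_i}{u_t} \ \leq\ \frac{\width_i}{u_{s+1}}\cdot \frac{1}{1-\eps}\ <\ 2\eps \cdot \frac{1}{1-\eps}\ \leq\ 4\eps,
\]
using $1/(1-\eps) \leq 2$ for $\eps \leq 1/2$. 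So the probability that $R_i$ is crossed by a vertical line of some level exceeding $s$ is at most $4\eps$. Adding the symmetric bound of $4\eps$ for horizontal lines of level exceeding $t$ yields the claimed $8\eps$.

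**The main obstacle** I expect is the bookkeeping around the discrete shift: one must verify that sampling $a$ from $\{0,\dots,L-1\}$ rather than from a continuous interval does not break the clean bound $\Pr[\text{line hits interval of length } \width_i] \leq \width_i/u_t$. For $t>1$ this is clean because $u_t$ is an integer and $L$ is a multiple of $u_t$ (as $L = (1/\delta\eps)^{\ellmax}$ and $u_t \mid L$), so the residues of $a$ modulo $u_t$ are uniform and exactly an $\width_i/u_t$ fraction of them place a grid line inside $R_i$'s $x$-span; for $t=1$, level $1$ cannot be a level strictly above $s$ unless $s=0$, which does not occur, so this edge case is vacuous. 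Everything else is the routine geometric-series estimate above, plus invoking the relations among $\nu_t,\mu_t$ from Definition~\ref{def:well-separated}.
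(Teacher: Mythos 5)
Your proof is correct, but it takes a more roundabout route than the paper's, and your treatment of the discretization deserves a closer look. The paper's key observation, which you skip, is that the grids are nested: $u_t$ is an integer multiple of $u_{s+1}$ for every $t\geq s+1$, so every line of $G^\Vf_t$ is also a line of $G^\Vf_{s+1}$, and hence ``$R_i$ is crossed by a vertical line of some level $>s$'' is \emph{equivalent} to ``$R_i$ is crossed by a vertical line of level exactly $s+1$.'' This collapses the vertical contribution to a single term, so no union bound over levels and no geometric series are needed. Your version still works because the events you union over are nested rather than disjoint (so the series is enormously loose), but the real reason only one term matters is the refinement property, not the decay of $1/u_t$.

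The second difference is the discretization. The paper bounds the probability by $2\,(\width_i-1)/u_{s+1}$ using only $u_{s+1}<L$; this conservative factor of two already yields $\leq 4\eps$ without any divisibility hypothesis. You instead assert $u_t\mid L$ for $t>1$ to avoid the factor of two. That assertion does hold for the concrete construction of Lemma~\ref{lem:well-separated}, where $\nu_t$ is a power of the integer $1/\delta\eps$ and $1/\delta$ is an even integer, so $L/u_t$ is a product of integers; but it is not a consequence of Definition~\ref{def:well-separated} alone, which is why the paper does not rely on it. This is not merely cosmetic in your argument: if you keep the safe factor of two, your geometric series sums to slightly above $4\eps$ per direction, and the final $8\eps$ bound does not quite close. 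So as written, your proof genuinely hinges on the divisibility claim; either justify it explicitly, or switch to the single-level observation, which makes the entire issue moot.
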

\begin{proof}
Suppose $R_i\in \Rr^\Vf_{s}\cap \Rr^\Hf_t$.
Rectangle $R_i$ is crossed by a vertical line of level larger than $s$ if and only if it is crossed by a vertical line of level $s+1$.
Since $R_i\in \Rr^\Vf_{s}$, we have that 
$$\width_i\leq \mu_s\leq \nu_{s+1}\cdot (\delta\eps)=u_{s+1}\cdot 2\eps.$$
Lines of $G^\Vf_{s+1}$ are spaced at distance $u_{s+1}$ from each other, which means that $R_i$ is crossed by a line of $G^\Vf_{s+1}$ if and only if
the remainder of $a$ modulo $u_{s+1}$ is among a set of $\width_i-1$ values, corresponding to the vertical lines with integral $x$-coordinates that cross $R_i$.
As $u_{s+1}<L$ and $a$ is drawn uniformly at random from $\{0,1,\ldots,L-1\}$, this happens with probability at most $2\cdot \frac{\width_i-1}{u_{s+1}}\leq 4\eps$.
Symmetrically, $R_i$ is crossed by a horizontal line of level larger than $t$ with probability at most $4\eps$.
Therefore, $R_i$ is abusive with probability at most $8\eps$.
\end{proof}

\begin{figure}[t]
\begin{centering}
\includegraphics[width=0.8\textwidth]{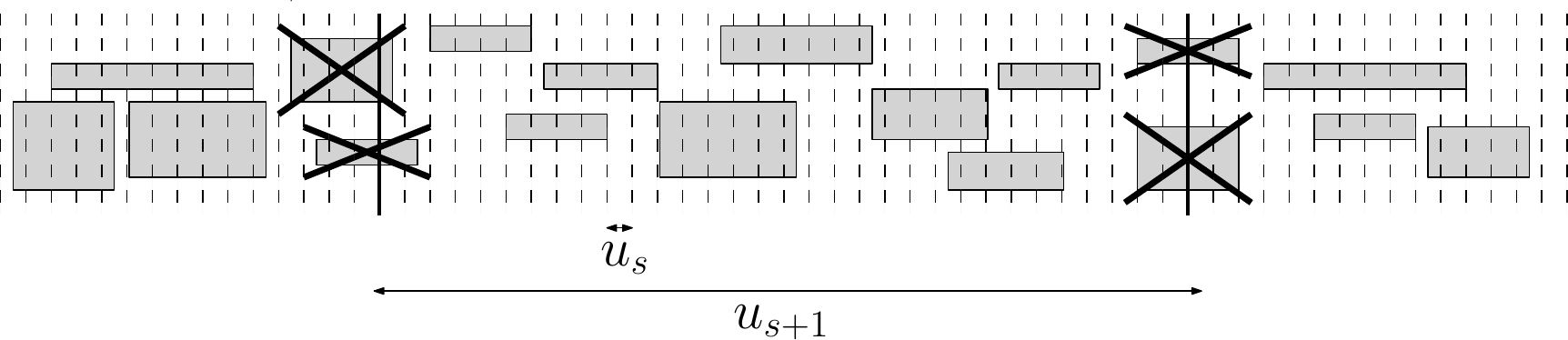}
\caption{\label{fig:abusive}The vertical grid. The dashed vertical lines are the vertical grid lines of $G^\Vf_s$, the bold vertical lines are the lines in the set
$G^{\Vf}_{s+1}$. All shown rectangles are from level $\Rr^\Vf_s$.
The crossed out rectangles are abusive since they intersect lines
from $G^\Vf_{s+1}$.}
\end{centering}
\end{figure}

We now construct a subfamily $\Rr''\subseteq \Rr'$ by drawing $a$ uniformly at random from the set $\{0,1,\ldots,L-1\}$, defining the
grid lines $(G^\Vf_t)_{t=1,\ldots,p}$ and $(G^\Hf_t)_{t=1,\ldots,p}$, and removing all abusive rectangles from $\Rr'$. Lemma~\ref{lem:abusive} implies the following. 

\begin{lemma}\label{lem:abusive-prob}
$\OPT(\Rr'')\geq (1-32\eps)\OPT(\Rr')$ with probability at least $3/4$.
\end{lemma}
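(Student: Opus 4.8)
The plan is to mirror the argument used at the end of the proof of Lemma~\ref{lem:well-separated}: combine the per-rectangle bound from Lemma~\ref{lem:abusive} with linearity of expectation and Markov's inequality. First I would fix an optimum solution $\Ss'\subseteq \Rr'$, that is, an independent subfamily with $\weight(\Ss')=\OPT(\Rr')$; this $\Ss'$ is chosen before the shift parameter $a$ is drawn, so that the events ``$R_i$ is abusive'' are still governed purely by the randomness of $a$ and Lemma~\ref{lem:abusive} applies to each $R_i\in\Ss'$.

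Next I would observe that, by construction, $\Rr''$ is obtained from $\Rr'$ by deleting exactly the abusive rectangles, so $\Ss'\setminus\Rr''$ is precisely the set of abusive rectangles in $\Ss'$. By Lemma~\ref{lem:abusive}, for each $R_i\in\Ss'$ the probability that $R_i$ is abusive is at most $8\eps$; hence by linearity of expectation the expected value of $\weight(\Ss'\setminus\Rr'')$ is at most $8\eps\cdot\weight(\Ss')=8\eps\cdot\OPT(\Rr')$. Applying Markov's inequality, with probability at least $3/4$ we have $\weight(\Ss'\setminus\Rr'')\leq 32\eps\cdot\OPT(\Rr')$.

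Finally, I would note that $\Ss'\cap\Rr''$ is an independent subfamily of $\Rr''$ (being a subset of the independent family $\Ss'$), so $\OPT(\Rr'')\geq\weight(\Ss'\cap\Rr'')=\weight(\Ss')-\weight(\Ss'\setminus\Rr'')$. Conditioned on the event above, this is at least $(1-32\eps)\OPT(\Rr')$, which is exactly the claimed bound. There is no real obstacle here: the only point that needs a moment's care is that the random choice of $a$ is made after fixing $\Ss'$ (so that the $8\eps$ bound is applied correctly to a fixed family), and that no shrinking enters the picture, since $\OPT(\cdot)$ refers to genuine disjointness and $\Ss'\cap\Rr''$ inherits independence directly from $\Ss'$.
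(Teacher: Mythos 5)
Your proposal is correct and is essentially the same argument as the paper's: fix an optimum independent set $\Ss'$ in $\Rr'$, bound the expected lost weight by $8\eps\cdot\OPT(\Rr')$ via Lemma~\ref{lem:abusive} and linearity of expectation, and apply Markov's inequality with threshold $32\eps\cdot\OPT(\Rr')$. The extra remarks you add (that $\Ss'$ is fixed before $a$ is drawn, and that $\Ss'\cap\Rr''$ remains independent) are correct and simply make explicit what the paper leaves implicit.
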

\begin{proof}
Let $\Ss\subseteq \Rr'$ be such that $\weight(\Ss)=\OPT(\Rr')$.
Since each rectangle of $\Rr'$ is abusive with probability at most $8\eps$, we have that the expected total weight of rectangles of $\Ss$ that turns out to be abusive is at most $8\eps\cdot \OPT(\Rr')$.
By Markov's inequality, the probability that this total weight exceeds $32\eps\cdot \OPT(\Rr')$ is at most $1/4$.
Consequently, with probability at least $3/4$ we have that $\OPT(\Rr'')\geq \weight(\Ss\cap \Rr'')\geq (1-32\eps)\OPT(\Rr')$.
\end{proof}

Since the sampling of $a$ is done independently of the sampling used for the construction of $\Rr'$, Lemmas~\ref{lem:well-separated} and~\ref{lem:abusive-prob} combined imply that
\begin{equation}\label{eq:whole-loss}
\OPT(\Rr'')\geq (1-32\eps)(1-8\eps)\OPT(\Rr)\geq (1-40\eps)\OPT(\Rr),
\end{equation}
with probability at least $3/4\cdot 3/4>1/2$.
Moreover, $\Rr''$ is still well-separated, where the horizontal and vertical levels are $\Rr^\Hf_t$ and $\Rr^\Vf_t$ trimmed to $\Rr''$, and their limits are unchanged.
Finally, $\Rr''$ does not contain any abusive rectangle.

\paragraph*{Snapping by shrinking.}
When considering $\Rr''$, the vertical grid lines of $G^\Vf_t$ provide a fine division of every rectangle from vertical level $t$ or larger, while no rectangle of smaller vertical level is crossed by them;
a symmetric claim holds also for horizontal grid lines. The idea now is to shrink each rectangle $R_i\in \Rr''$ so that its vertical sides are aligned with some vertical grid lines of the vertical level of $R_i$,
while the horizontal sides are aligned with some horizontal grid lines of the horizontal level of $R_i$. This is formalized in the next lemma, which is also similar to Adamaszek et al.~\cite{AdamaszekChalermsookWiese2015}. However, in this step there is a
subtle but crucial difference.
Consider a rectangle $R_i\in \Rr''$, and suppose that $R_i\in \Rr^\Vf_{s}$ and $R_i\in \Rr^\Hf_t$. In \cite{AdamaszekChalermsookWiese2015} the rectangle $R_i$ is shrunk in the vertical dimension only if $s\ge t$ and in the horizontal dimension only if $t\ge s$. In this paper we always do both which will be important for our dynamic programming algorithm later.

\begin{lemma}\label{lem:snapping}
In polynomial time we can compute a family of axis-parallel rectangles $\Qq$ that contains one rectangle $Q_i$ for each $R_i\in \Rr''$, of the same weight $\weight_i$ as $R_i$ and 
satisfying the following conditions:
\begin{itemize}
\item $R_i^{-\delta}\subseteq Q_i\subseteq R_i$ for each $R_i\in \Rr''$; and
\item if $R_i\in \Rr^\Vf_{s}\cap \Rr^\Hf_t$, then both vertical sides of $Q_i$ are contained in some vertical grid lines of $G^\Vf_s$, and both horizontal sides of $Q_i$ are contained in some horizontal grid lines of $G^{\Hf}_t$.
\end{itemize}
\end{lemma}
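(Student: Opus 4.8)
The plan is to snap each rectangle $R_i\in\Rr''$ to the appropriate grid independently, handling the $x$-coordinates and the $y$-coordinates separately, and showing that in each coordinate the snapping moves each side by at most $\tfrac{\delta}{2}$ times the corresponding dimension. Fix $R_i\in\Rr^\Vf_s\cap\Rr^\Hf_t$. First I would construct the vertical sides of $Q_i$: among the vertical lines of $G^\Vf_s$, let $\crdx{i}{1}\le x^{\mathsf L}$ be the rightmost grid line with $x^{\mathsf L}\le \crdx{i}{1}+\tfrac{\delta}{2}\width_i$ (i.e., inside $R_i^{-\delta}$ from the left), and let $\crdx{i}{2}\ge x^{\mathsf R}$ be the leftmost grid line with $x^{\mathsf R}\ge \crdx{i}{2}-\tfrac{\delta}{2}\width_i$. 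Do the symmetric thing with the horizontal lines of $G^\Hf_t$ to obtain $y^{\mathsf B}, y^{\mathsf T}$, and set $Q_i = (x^{\mathsf L}, x^{\mathsf R}) \times (y^{\mathsf B}, y^{\mathsf T})$. By construction the second bullet is immediate, and $Q_i \supseteq R_i^{-\delta}$ and $Q_i \subseteq R_i$ follow as long as such grid lines exist and the left line lies strictly to the left of the right line — which is exactly the content of the next step.

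The key step is to verify that $G^\Vf_s$ has at least two lines inside $R_i^{-\delta}$. The $x$-extent of $R_i^{-\delta}$ has length $(1-\delta)\width_i$, and since $R_i\in\Rr^\Vf_s$ we have $\width_i\ge \nu_s$, so this length is at least $(1-\delta)\nu_s$. The lines of $G^\Vf_s$ are spaced $u_s = \delta\nu_s/2$ apart, so the number of grid lines of $G^\Vf_s$ strictly inside the open interval of length $(1-\delta)\nu_s$ is at least $\lfloor (1-\delta)\nu_s / u_s\rfloor - 1 = \lfloor 2(1-\delta)/\delta\rfloor - 1 \ge 2$, using that $1/\delta$ is an integer larger than $4$ (so $2(1-\delta)/\delta \ge 2\cdot 3 = 6$, comfortably). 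Hence $x^{\mathsf L} < x^{\mathsf R}$ both exist and lie in $R_i^{-\delta}$; moreover $x^{\mathsf L}\ge \crdx{i}{1}+\tfrac{\delta}{2}\width_i > \crdx{i}{1}$ and $x^{\mathsf R}\le \crdx{i}{2}-\tfrac{\delta}{2}\width_i < \crdx{i}{2}$, giving $R_i^{-\delta}\subseteq Q_i\subseteq R_i$ in the $x$-direction. The symmetric computation in the $y$-direction using $\height_i\ge\nu_t$ and $u_t=\delta\nu_t/2$ handles the horizontal sides. Combining the two coordinates yields both bullets. The algorithm is polynomial time because, given $a$, for each $R_i$ one only needs to compute a constant number of nearest multiples of $u_s$ (resp. $u_t$), which is a single division of integers each at most $L\le 2^{\Oh(N)}$.

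The main obstacle is purely bookkeeping: one must be careful that the snapped-to lines genuinely lie in the \emph{open} interval defining $R_i^{-\delta}$ (not merely in its closure), so that strict containments $R_i^{-\delta}\subseteq Q_i\subseteq R_i$ hold for the open rectangles as defined in Section~\ref{sec:prelims}; this is why the counting above is done for lines \emph{strictly} inside the interval, and why we need the margin $2(1-\delta)/\delta \ge 6$ rather than just $\ge 2$. A secondary subtlety — important only for the later dynamic program, not for this lemma — is that we snap \emph{every} rectangle in both directions, regardless of whether $s\ge t$ or $t\ge s$; since the snapping in each coordinate only ever uses the hypothesis $\width_i\ge\nu_s$ (resp. $\height_i\ge\nu_t$), which holds unconditionally by well-separatedness, there is no obstruction to doing both, and the proof goes through verbatim.
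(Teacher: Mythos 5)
Your construction is the right one, and it is essentially the same as the paper's: the paper simply takes the leftmost and rightmost vertical grid lines of $G^{\Vf}_s$ that cross $R_i$ (and similarly for horizontal), and then verifies that these lines land inside the boundary strips between $R_i$ and $R_i^{-\delta}$, using exactly the quantitative fact you identified, namely $u_s = \tfrac{\delta}{2}\nu_s \le \tfrac{\delta}{2}\width_i$. Your closing remark about shrinking in both directions and the polynomial-time bound are also correct.

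However, the verification step in your writeup is internally inconsistent and would not pass a careful reading. First, you declare $x^{\mathsf L}$ to be the rightmost grid line with $\crdx{i}{1} \le x^{\mathsf L} \le \crdx{i}{1}+\tfrac{\delta}{2}\width_i$, but then claim $x^{\mathsf L} \ge \crdx{i}{1}+\tfrac{\delta}{2}\width_i$ and that $x^{\mathsf L}, x^{\mathsf R}$ ``lie in $R_i^{-\delta}$.'' Both assertions directly contradict the definition you gave: by construction $x^{\mathsf L}$ is at or to the left of the left side of $R_i^{-\delta}$, which is precisely what makes $R_i^{-\delta} \subseteq Q_i$ true. Second, the ``key step'' of showing that $G^{\Vf}_s$ has at least two lines strictly inside $R_i^{-\delta}$ verifies the wrong thing: what is actually needed is that each of the two \emph{boundary strips} $[\crdx{i}{1}, \crdx{i}{1}+\tfrac{\delta}{2}\width_i]$ and $[\crdx{i}{2}-\tfrac{\delta}{2}\width_i, \crdx{i}{2}]$ contains at least one grid line. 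This follows immediately because each strip has length $\tfrac{\delta}{2}\width_i \ge \tfrac{\delta}{2}\nu_s = u_s$; and then $x^{\mathsf L} < x^{\mathsf R}$ because the two strips are disjoint whenever $\delta < 1$. The margin $2(1-\delta)/\delta \ge 6$ is not needed, nor is the distinction between open and closed intervals for strictness of containment (the rectangles are open, so a side of $Q_i$ coinciding with a side of $R_i$ still yields $Q_i \subseteq R_i$). With these corrections the proof becomes both correct and shorter, and it matches the paper's argument.
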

\begin{proof}
Take any $R_i\in \Rr''$, and suppose $R_i\in \Rr^\Vf_{s}\cap \Rr^\Hf_t$.
We define $Q_i$ as the rectangle cut from the plane by the following four lines: 
\begin{itemize}
\item the left-most and the right-most vertical grid lines of $G^\Vf_s$ that cross $R_i$;
\item the bottom-most and the top-most horizontal grid lines of $G^\Hf_t$ that cross $R_i$.
\end{itemize}
Clearly, we have that $Q_i\subseteq R_i$ and the second condition of the statement is satisfied.
We are left with proving that $R_i^{-\delta}\subseteq Q_i$.

Consider first the left side of $Q_i$, which is contained in the left-most vertical grid line of $G^\Vf_s$ that crosses $R_i$.
Since $R_i\in \Rr^{\Vf}_s$, we have that $\width_i\geq \nu_s$, while the grid lines of $G^\Vf_s$ are spaced at distance $u_t=\delta\nu_s/2$ apart.
This means that the left-most vertical grid line crossing $R_i$ has the $x$-coordinate not larger than $\crdx{i}{1}+\delta\nu_s/2$, which in turn is not larger than $\crdx{i}{1}+\delta g_i /2$.
This means that the left side of $Q_i$ is either to the left or at the same $x$-coordinate as the left side of $R_i^{-\delta}$.
An analogous reasoning can be applied to the other three sides of $Q_i$, thereby proving that $R_i^{-\delta}\subseteq Q_i$.
\end{proof}

In the next subsection we will actually show that we can find an optimum solution for the family $\Qq$ using dynamic programming within the claimed running time.
By~(\ref{eq:whole-loss}) and the first condition of Lemma~\ref{lem:snapping}, we have
$$\OPT(\Qq)\geq \OPT(\Rr'')\geq (1-40\eps)\OPT$$
with probability at least $1/2$. Hence, the optimum solution for $\Qq$ indeed has large enough weight with good probability.
Moreover, by the first condition of Lemma~\ref{lem:snapping}, for any independent set of rectangles in $\Qq$, the corresponding rectangles in $\Rr^{-\delta}$ are also independent.
Hence, any solution for \mwisr on $\Qq$ projects to a solution of the same weight for \mwisr with $\delta$-shrinking on $\Rr$.

From now on we focus on the family $\Qq$. For each $t\in \{1,2,\ldots,p\}$, let 
$$\Qq^\Vf_t=\{Q_i\colon R_i\in \Rr^\Vf_t\}\quad \textrm{and}\quad \Qq^\Hf_t=\{Q_i\colon R_i\in \Rr^\Hf_t\}.$$
Note that since $\Qq$ is obtained only by shrinking rectangles from $\Rr''$, it is still the case that no rectangle of $\Qq$ is abusive.

\subsection{Dynamic programming}\label{sec:dp}

We now present a dynamic programming algorithm that, given the family $\Qq$ constructed in the previous section, computes the value $\OPT(\Qq)$.
An optimum solution, that is, an independent set of rectangles of total weight $\OPT(\Qq)$, can be recovered from the run of the dynamic program using standard methods, 
and hence for simplicity we omit this aspect in the description.

Actually, it will be convenient to describe the algorithm as {\em{backtracking with memoization}}.
That is, the subproblems are solved by recursion, but once a subproblem has been solved once, the optimum value for it is stored in a map (is {\em{memoized}}), and further calls to solving this
subproblem will only retrieve the memoized optimum value, rather than solve the subproblem again. Solving each subproblem (excluding the recursive subcalls) will take time $f(\delta,\eps)\cdot n^{\Oh(1)}$ 
for some computable function $f$, and we will argue that at most $g(\delta,\eps)\cdot (nN)^{\Oh(1)}$ subproblems are solved in total, for some other computable function $g$. 
This will ensure the promised running time of the algorithm.

We first define subproblems. A {\em{subproblem}} is a tuple $I=(s,t,x_1,x_2,y_1,y_2)$:
\begin{itemize}
\item the pair $(s,t)\in \{1,\ldots,p\}\times \{1,\ldots,p\}$ is the {\em{level}} of the subproblem, which consists of the vertical level $s$ and the horizontal level $t$; and
\item $x_1,x_2,y_1,y_2$ are integers satisfying 
$$x_1< x_2\leq x_1+(1/\delta\eps)^{1/\eps}\quad\textrm{and}\quad y_1< y_2\leq y_1+(1/\delta\eps)^{1/\eps}.$$
Integers $x_1,x_2$ are the lower and upper {\em{vertical offsets}}, respectively, while $y_1,y_2$ are the lower and upper {\em{horizontal offsets}}, respectively. 
\end{itemize}
The {\em{area covered}} by subproblem $I=(s,t,x_1,x_2,y_1,y_2)$ is the rectangle
$$A_I=(a+x_1\cdot u_s,a+x_2\cdot u_s)\times (a+y_1\cdot u_t,a+y_2\cdot u_t).$$
In other words, $(x_1,x_2,y_1,y_2)$ define the offsets of the four grid lines---two from $G^\Vf_s$ and two from $G^\Hf_t$---that cut out $A_I$ from the plane.

For a subproblem $I$, let $\Qq_I$ be the set of all rectangles from $\Qq$ that are contained in $A_I$.
The next check follows from a simple calculation of parameters.

\begin{lemma}\label{lem:subproblem-level}
If $I$ is a subproblem of level $(s,t)$, then $\Qq_I\subseteq \bigcup_{s'\leq s,\, t'\leq t} \Qq^\Vf_{s'}\cap \Qq^\Hf_{t'}$.
\end{lemma}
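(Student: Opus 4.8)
The statement is a parameter check exploiting the large multiplicative gap between consecutive levels guaranteed by well-separatedness. Fix a subproblem $I=(s,t,x_1,x_2,y_1,y_2)$ and a rectangle $Q_i\in\Qq_I$, say $Q_i\in\Qq^\Vf_{s'}\cap\Qq^\Hf_{t'}$, i.e.\ $R_i\in\Rr^\Vf_{s'}\cap\Rr^\Hf_{t'}$. I want to show $s'\le s$ and $t'\le t$; by symmetry between the two coordinates it suffices to rule out $s'\ge s+1$.

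\textbf{Lower bound on the width of $Q_i$.} By the first property of Lemma~\ref{lem:snapping} we have $R_i^{-\delta}\subseteq Q_i$, and the width of $R_i^{-\delta}$ is exactly $(1-\delta)\width_i$. Since $R_i\in\Rr^\Vf_{s'}$, Definition~\ref{def:well-separated} gives $\width_i\ge\nu_{s'}\ge\nu_{s+1}$. Using the recurrences $\mu_s/\nu_s=(1/\delta\eps)^{(1/\eps)-1}$ and $\nu_{s+1}/\mu_s=1/\delta\eps$ from Definition~\ref{def:well-separated}, we get
$$\nu_{s+1}=\nu_s\cdot(1/\delta\eps)^{(1/\eps)-1}\cdot(1/\delta\eps)=(1/\delta\eps)^{1/\eps}\cdot\nu_s.$$
Hence the width of $Q_i$ is at least $(1-\delta)\,(1/\delta\eps)^{1/\eps}\,\nu_s$.

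\textbf{Upper bound on the width of $A_I$.} By definition of the area covered, the width of $A_I$ is $(x_2-x_1)\cdot u_s$, which by the constraint $x_2\le x_1+(1/\delta\eps)^{1/\eps}$ is at most $(1/\delta\eps)^{1/\eps}\cdot u_s=(1/\delta\eps)^{1/\eps}\cdot\tfrac{\delta}{2}\nu_s$. Since $1/\delta$ is an even integer larger than $4$, we have $\delta\le 1/6$, so $1-\delta>\delta/2$, and therefore the width of $Q_i$ strictly exceeds the width of $A_I$. This contradicts $Q_i\subseteq A_I$, so $s'\le s$. The identical argument applied to heights, using $\height_i\ge\nu_{t'}\ge\nu_{t+1}=(1/\delta\eps)^{1/\eps}\nu_t$ and the fact that the height of $A_I$ is at most $(1/\delta\eps)^{1/\eps}\cdot\tfrac{\delta}{2}\nu_t$, gives $t'\le t$. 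Thus $Q_i\in\Qq^\Vf_{s'}\cap\Qq^\Hf_{t'}$ with $s'\le s$ and $t'\le t$, which is exactly the claimed inclusion.

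There is no real obstacle here beyond bookkeeping of the level parameters; the only point worth stating carefully is the geometric identity $\nu_{s+1}=(1/\delta\eps)^{1/\eps}\nu_s$ and the observation that the width of $A_I$ carries the extra factor $\delta/2<1-\delta$ relative to $(1/\delta\eps)^{1/\eps}\nu_s$, which is precisely why a rectangle one level too large cannot fit.
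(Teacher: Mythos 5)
Your proof is correct and takes essentially the same approach as the paper: compare the width of $Q_i$, bounded below via $R_i^{-\delta}\subseteq Q_i$ and the ratio $\nu_{s+1}/\nu_s=(1/\delta\eps)^{1/\eps}$, with the width of $A_I$, and observe that $1-\delta>\delta/2$ forces a contradiction when $s'>s$. The only cosmetic difference is that you spell out the derivation of $\nu_{s+1}=(1/\delta\eps)^{1/\eps}\nu_s$ from the two recurrences in Definition~\ref{def:well-separated}, whereas the paper states the inequality $\width_i\geq\nu_s\cdot(1/\delta\eps)^{1/\eps}$ directly.
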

\begin{proof}
Consider a rectangle $Q_i$ that belongs to a vertical level $\Qq^\Vf_{s'}$ for some $s'>s$.
By the first condition of Lemma~\ref{lem:snapping}, we have that the width of $Q_i$ is at least $(1-\delta)\width_i$.
However, since $s'>s$, we have $\width_i\geq \nu_{s'}\geq \nu_s\cdot (1/\delta\eps)^{1/\eps}$.
On the other hand, the width of $A_I$ is at most $(1/\delta\eps)^{1/\eps}\cdot u_s=\nu_s\cdot (1/\delta\eps)^{1/\eps}\cdot (\delta/2)$.
Since we assumed that $1/\delta>4$, we infer that the width of $Q_i$ is larger than the width of $A_I$, and hence $Q_i$ cannot be contained in $A_I$.
A symmetric reasoning shows also that any rectangle that belongs to a horizontal level $\Qq^\Hf_{t'}$ for some $t'>t$ cannot belong to $\Qq_I$ due to having larger height than $A_I$.
The claim follows.
\end{proof}

For a subproblem $I$, we define the {\em{value}} of $I$, denoted $\Val(I)$, as follows:
$$\Val(I)=\max\, \{\, \weight(\Ss)\ \colon\ \Ss\subseteq \Qq_I \textrm{ and }\Ss\textrm{ is independent}\, \}.$$
We first note that there is a subproblem that encompasses the whole instance.

\begin{lemma}\label{lem:global-subproblem}
There is a subproblem $\glob$ of level $(p,p)$, computable in constant time, such that $A_{\glob}\supseteq (1,L)\times (1,L)$. Consequently, $\OPT(\Qq)=\Val(\glob)$.
\end{lemma}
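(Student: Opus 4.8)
The plan is to take $\glob$ to be the subproblem of level $(p,p)$ whose covered area is the smallest rectangle aligned to the level-$p$ grid that contains $(1,L)\times(1,L)$. Concretely, let $x_1$ be the largest integer with $a+x_1u_p\le 1$ and $x_2$ the smallest integer with $a+x_2u_p\ge L$, and let $y_1,y_2$ be defined analogously; then set $\glob=(p,p,x_1,x_2,y_1,y_2)$. Each offset is obtained by one floor-of-a-quotient operation, so $\glob$ is computable in constant time (in the arithmetic model; in polynomial time in the bit model). By construction $A_{\glob}=(a+x_1u_p,a+x_2u_p)\times(a+y_1u_p,a+y_2u_p)\supseteq (1,L)\times(1,L)$, so it only remains to check that $\glob$ is a legal subproblem, i.e.\ $x_1<x_2\le x_1+(1/\delta\eps)^{1/\eps}$ and likewise for $y_1,y_2$.

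The first inequality holds trivially since $(1,L)$ is nonempty. For the upper bound---which is the one real computation, and hence the only (mild) obstacle---we use well-separatedness: since $\mu_p\ge L$ and $\mu_p/\nu_p=(1/\delta\eps)^{(1/\eps)-1}$, we get $\nu_p\ge L\cdot(\delta\eps)^{(1/\eps)-1}$, and therefore $u_p=\delta\nu_p/2\ge \tfrac{\delta}{2}L\cdot(\delta\eps)^{(1/\eps)-1}$. Consequently $x_2-x_1\le (L-1)/u_p+2< L/u_p+2\le \tfrac{2}{\delta}(1/\delta\eps)^{(1/\eps)-1}+2$, and using that $1/\eps$ is an even integer larger than $4$ and $\delta<1$ this is at most $(1/\delta\eps)\cdot (1/\delta\eps)^{(1/\eps)-1}=(1/\delta\eps)^{1/\eps}$; the same estimate applies to $y_2-y_1$. (This is exactly where the assumption $1/\eps>4$, together with $\delta<1$, is needed, to absorb the factor $2/\delta$ coming from the shrinking unit $u_p=\delta\nu_p/2$ and the additive $+2$ from rounding.) Hence $\glob$ is a valid subproblem of level $(p,p)$.

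For the ``consequently'' part, it suffices to prove $\Qq_{\glob}=\Qq$. Every input rectangle $R_i$ has all four coordinates in $\{1,\dots,L\}$ with $\crdx{i}{1}<\crdx{i}{2}$ and $\crdy{i}{1}<\crdy{i}{2}$, and $R_i$ is open, so $R_i\subseteq(1,L)\times(1,L)$. By the first condition of Lemma~\ref{lem:snapping}, $Q_i\subseteq R_i\subseteq(1,L)\times(1,L)\subseteq A_{\glob}$, so $Q_i\in\Qq_{\glob}$; thus $\Qq_{\glob}=\Qq$. Substituting this into the definition of $\Val$ yields $\Val(\glob)=\max\{\weight(\Ss)\colon \Ss\subseteq\Qq,\ \Ss\text{ independent}\}=\OPT(\Qq)$, which completes the proof.
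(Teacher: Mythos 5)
Your proof is correct and follows essentially the same route as the paper: the key inequality in both is the lower bound $u_p \ge \tfrac{\delta}{2} L (\delta\eps)^{1/\eps-1}$, which makes a level-$p$ subproblem large enough to enclose $(1,L)\times(1,L)$. The only (cosmetic) difference is that you choose $x_1,x_2$ (and $y_1,y_2$) to snugly surround $(1,L)$ and then verify the offset-difference bound $x_2-x_1\le (1/\delta\eps)^{1/\eps}$, whereas the paper fixes the offset difference to the maximum $(1/\delta\eps)^{1/\eps}$ up front and then verifies that the resulting box reaches past $L$; the arithmetic is the same either way.
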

\begin{proof}
Choose $z_1$ to be the largest integer such that $a+z_1\cdot u_p\leq 0$.
Let $z_2=z_1+(1/\delta\eps)^{1/\eps}$, and define $\glob=(p,p,z_1,z_1,z_2,z_2)$.
Clearly $\glob$ is a subproblem, so we are left with proving that $A_{\glob}\supseteq (1,L)\times (1,L)$.
By the maximality of $z_1$,
\begin{equation}\label{eq:z1}
a+z_1\cdot u_p>-u_p.
\end{equation}
On the other hand, we have
\begin{equation}\label{eq:up}
u_p=\delta\nu_p/2=\mu_p\cdot (\delta/2)\cdot (\delta\eps)^{1/\eps-1}\geq L\cdot \delta/2\cdot (\delta\eps)^{1/\eps-1}.
\end{equation}
By combining (\ref{eq:z1}) and~(\ref{eq:up}), we obtain
\begin{eqnarray*}
a+z_2\cdot u_p & = & a+z_1\cdot u_p+(1/\delta\eps)^{1/\eps}\cdot u_p\geq -u_p + (1/\delta\eps)^{1/\eps}\cdot u_p\\
& \geq & L\cdot \delta/2\cdot (\delta\eps)^{1/\eps-1}\cdot ((1/\delta\eps)^{1/\eps}-1) > L;
\end{eqnarray*}
the last inequality follows from the assumption that $1/\eps$ is an integer larger than~$4$.
It follows that $A_\glob$ indeed contains the whole square $(1,L)\times (1,L)$.
\end{proof}

Next, we show how to {\em{solve}} each subproblem $I$, that is, to compute $\Val(I)$, using recursion.

\begin{lemma}\label{lem:recursion}
A subproblem $I$ of level $(s,t)$ can be solved using $f(\delta,\eps)$ calls to solving subproblems of levels $(s-1,t)$, $(s,t-1)$, and $(s-1,t-1)$,
for some computable function $f$.
Moreover, the time needed for this computation, excluding the time spent in the recursive calls, is at most $f(\delta,\eps)\cdot n$.
\end{lemma}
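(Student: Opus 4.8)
The plan is to solve a subproblem $I = (s,t,x_1,x_2,y_1,y_2)$ by guessing how an optimum independent set $\Ss \subseteq \Qq_I$ interacts with the \emph{coarsest} grid lines available inside $A_I$, namely the lines of $G^\Vf_s$ and $G^\Hf_t$ that fall strictly between the bounding lines of $A_I$. Crucially, since $\Rr''$ (and hence $\Qq$) contains no abusive rectangle, every $Q_i \in \Qq_I$ that belongs to a vertical level $s' < s$ is \emph{not} crossed by any vertical line of $G^\Vf_{s'+1} \supseteq \cdots \supseteq$ the lines of $G^\Vf_s$; thus any such $Q_i$ lies strictly inside one of the vertical slabs cut out by consecutive lines of $G^\Vf_s$. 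Meanwhile, by Lemma~\ref{lem:snapping}, every $Q_i$ of vertical level exactly $s$ has both its vertical sides \emph{on} lines of $G^\Vf_s$; by Lemma~\ref{lem:subproblem-level} these are the only two possibilities. The same dichotomy holds for horizontal levels and $G^\Hf_t$.

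First I would enumerate, over all $\Oh((1/\delta\eps)^{1/\eps})$ internal vertical lines of $G^\Vf_s$ in $A_I$ and all $\Oh((1/\delta\eps)^{1/\eps})$ internal horizontal lines of $G^\Hf_t$, the subset of those lines that are \emph{used} by a rectangle of level-$s$ width (resp.\ level-$t$ height) in the guessed optimum; since there are at most a bounded number of such rectangles that can fit disjointly along any single line segment of length $\leq (1/\delta\eps)^{1/\eps} u_s$ (as their widths are $\geq \nu_s = 2u_s/\delta$, so at most $(1/\delta\eps)^{1/\eps} \cdot \delta/2$ of them), the number of relevant line-subsets is bounded by a function of $\delta,\eps$ only. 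Having fixed which lines of $G^\Vf_s \cup G^\Hf_t$ are "active", these lines partition $A_I$ into a grid of $f(\delta,\eps)$ sub-rectangles. I then guess, for each active line, which level-$s$-or-$t$ rectangles of $\Qq_I$ are placed with a side on it — there are only $n$ candidate rectangles and only $f(\delta,\eps)$ slots, so this is again a bounded-branching choice (formally: guess a partial assignment from the $\leq f(\delta,\eps)$ "cells between consecutive active lines" to rectangles of $\Qq_I$ whose width/height spans exactly that many slabs); for each placed rectangle we check disjointness with the others and subtract its footprint.

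After removing the (boundedly many) placed "large" rectangles, what remains inside each resulting cell is an independent set drawn entirely from $\bigcup_{s' \le s-1,\, t' \le t} \Qq^\Vf_{s'} \cap \Qq^\Hf_{t'}$ together with possibly level-$s$ rectangles confined to one slab in the $x$-direction but of lower horizontal level — but each such cell is itself exactly the area covered by a subproblem of level $(s-1,t)$, $(s,t-1)$, or $(s-1,t-1)$: its bounding lines are consecutive-enough lines of the finer grids, and its width (resp.\ height) in finer units is at most $(1/\delta\eps)^{1/\eps}$ because the spacing ratio $u_{s}/u_{s-1} = \nu_s/\nu_{s-1} = (1/\delta\eps)^{1/\eps}$ (using the well-separatedness limits and $\nu_s/\mu_{s-1} = 1/\delta\eps$, $\mu_{s-1}/\nu_{s-1} = (1/\delta\eps)^{1/\eps - 1}$). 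So I recurse on each of these $f(\delta,\eps)$ cells, take the best total weight over all guesses, and memoize. For the base case, when $s = 1$ or $t = 1$ and no rectangles of the current level survive, $\Qq_I$ is empty by Lemma~\ref{lem:subproblem-level} and $\Val(I) = 0$.

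The main obstacle — and the part that needs the most care — is verifying that each recursively invoked cell is genuinely a legal subproblem tuple with integer offsets bounded by $(1/\delta\eps)^{1/\eps}$ in the appropriate finer unit, and that the partition of $\Qq_I$ into (placed large rectangles) $\cup$ (contents of the cells) is \emph{exhaustive and non-overlapping}: this is exactly where non-abusiveness is used, to guarantee that no surviving rectangle straddles an active line of its own level, and where Lemma~\ref{lem:snapping}'s side-alignment is used to guarantee that a level-$s$ rectangle either sits on two active lines (hence is a "placed large rectangle") or lies inside a single slab (hence inside one cell). Bounding the number of guesses: the active-line subsets contribute $2^{\Oh((1/\delta\eps)^{1/\eps})}$, and the rectangle-to-slot assignments contribute another factor that is polynomial in $n$ times a function of $\delta,\eps$; multiplying, we get $f(\delta,\eps)$ recursive calls and, per subproblem, $f(\delta,\eps) \cdot n$ time for bookkeeping and disjointness checks, as claimed.
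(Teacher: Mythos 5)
The overall strategy — classify rectangles by level, exploit snapping and non-abusiveness to show that lower-level rectangles fit between grid lines while level-$(s,t)$ rectangles align with them, then guess a decomposition and recurse — is the same as the paper's. However, there is a genuine gap in how you handle the two ``mixed'' types of rectangles: those in $\Qq^\Vf_s\cap\Qq^\Hf_{t'}$ with $t'<t$ (``horizontal''), and those in $\Qq^\Vf_{s'}\cap\Qq^\Hf_{t}$ with $s'<s$ (``vertical''). A horizontal rectangle has width at least $\nu_s = 2u_s/\delta > 8u_s$, so it spans many $u_s$-slabs and is \emph{not} confined to one slab as you assert; it is confined to a single row of cells but to no single cell. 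Symmetrically for vertical rectangles. Your plan — recurse on single cells for the leftover rectangles, and guess the placements of the larger ones explicitly — therefore has two problems. First, a correctness problem: after you ``place'' a horizontal rectangle, the leftover area above and below it inside the cells it overlaps is not the area of any recursive subproblem you invoke, so small rectangles sitting there are lost. Second, a complexity problem: guessing \emph{which} of the up to $n$ candidate rectangles occupies each of the $f(\delta,\eps)$ slots is an $n^{\Theta(f(\delta,\eps))}$-way branching, not the ``polynomial in $n$ times a function of $\delta,\eps$'' factor you claim, so you would not get $f(\delta,\eps)$ recursive calls and $f(\delta,\eps)\cdot n$ per-subproblem time.

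The paper's proof sidesteps both issues by enumerating, not rectangle placements, but partitions of the cell grid into typed \emph{boxes}: a large box is a rectangle of cells exactly covering one large rectangle (and the actual rectangle is then chosen greedily as the heaviest fitting one, costing $\Oh(n)$ time but no branching); a horizontal box is a maximal run of consecutive cells in one row, solved recursively as a subproblem of level $(s,t-1)$; a vertical box is symmetric at level $(s-1,t)$; a small box is a single cell at level $(s-1,t-1)$. Since the number of cells is at most $(1/\delta\eps)^{2/\eps}$, the number of such typed partitions is a function of $\delta,\eps$ only, giving the $f(\delta,\eps)$ bound on recursive calls, and all rectangles of a solution end up inside a single box so nothing is lost. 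You also need to verify that each non-large box really does correspond to a legal subproblem tuple — the paper does this by observing that the bounding grid lines of a box of cells are also lines of the finer grids and that a single cell has width $u_s = (1/\delta\eps)^{1/\eps} u_{s-1}$, so the offset differences stay within bounds; be careful that this check only goes through when each box's extent in the shrinking dimension is exactly one cell, which is exactly what the box typing guarantees but your ``active line'' cells do not.
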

\begin{proof}
Let $I=(s,t,x_1,x_2,y_1,y_2)$. We denote $x=x_2-x_1$ and $y=y_2-y_1$; recall that $0<x,y\leq (1/\delta\eps)^{1/\eps}$.
Consider all vertical grid lines of $G^\Vf_s$ with $x$-coordinates $a+\alpha\cdot u_s$ for $x_1\leq \alpha\leq x_2$ and
all horizontal grid lines of $G^\Hf_t$ with $y$-coordinates $a+\beta\cdot u_t$ for $y_1\leq \beta\leq y_2$.
These lines partition the rectangle $A_I$ into $x\times y$ smaller rectangles with side lengths $u_s$ and $u_t$, which we shall call {\em{cells}}.
Cells can be naturally indexed by pairs of integers from $\{1,\ldots,x\}\times \{1,\ldots,y\}$:
cell $(\alpha,\beta)$ has sides contained in vertical lines with $x$-coordinates $a+(\alpha-1)\cdot u_s$ and $a+\alpha\cdot u_s$,
and horizontal lines with $y$-coordinates $a+(\beta-1)\cdot u_t$ and $a+\beta\cdot u_t$.
We use the terminology of {\em{rows}} and {\em{columns}} in the grid of cells, defined naturally.

Consider any independent set of rectangles $\Ss\subseteq \Qq_I$. The reader may think of $\Ss$ as an optimum solution, that is, a set with $w(\Ss)=\Val(I)$.
By Lemma~\ref{lem:global-subproblem}, each rectangle of $\Ss$ belongs to $\Qq^\Vf_{s'}\cap \Qq^\Hf_{t'}$ for some $s'\leq s$ and $t'\leq t$; such rectangle will be henceforth called a rectangle of {\em{level}} $(s',t')$.
Let us classify these rectangles according to their level as follows. A rectangle $Q_i\in \Ss$ of level $(s',t')$ is called:
\begin{itemize}
\item {\em{large}} if $s'=s$ and $t'=t$;
\item {\em{horizontal}} if $s'=s$ and $t'<t$;
\item {\em{vertical}} if $s'<s$ and $t'=t$;
\item {\em{small}} if $s'<s$ and $t'<t$.
\end{itemize}
Observe that if $s=1$ then there are no vertical or small rectangles, whereas if $t=1$ then there are no horizontal or small rectangles.
The following assertions follow from the second condition of Lemma~\ref{lem:snapping}, and the fact that no rectangle of $\Qq$ is abusive; see Figure~\ref{fig:divide} for an illustration.
\begin{itemize}
\item Each large rectangle is equal to the union of a subset of cells.
\item Each horizontal rectangle is contained in the union of a sequence of consecutive cells in a single row of the grid. 
The vertical sides of this rectangle are aligned with the left side of the first cell of the sequence, and the right side of the last cell.
\item Each vertical rectangle is contained in the union of a sequence of consecutive cells in a single column of the grid. 
The horizontal sides of this rectangle are aligned with the bottom side of the first cell of the sequence, and the top side of the last cell.
\item Each small rectangle is contained in a single cell.
\end{itemize}
We shall call a cell:
\begin{itemize}
\item {\em{large}} if it is contained in a large rectangle;
\item {\em{horizontal}} if it overlaps with some horizontal rectangle;
\item {\em{vertical}} if it overlaps with some vertical rectangle; and
\item {\em{small}} if all rectangles from $\Ss$ with which it overlaps are small.
\end{itemize}
Observe that as $\Ss$ is independent, each cell is of exactly one of the above types.

We now partition the cells into {\em{boxes}}, where each box is formed by a rectangle of cells; again, see Figure~\ref{fig:divide} for an illustration.
First, for each large rectangle, create a {\em{large box}} consisting of the cells contained in this rectangle.
Second, for every inclusion-wise maximal sequence of consecutive horizontal cells contained in the same row, create a {\em{horizontal box}} consisting of all these cells.
Third, for every inclusion-wise maximal sequence of consecutive vertical cells contained in the same column, create a {\em{vertical box}} consisting of all these cells.
Finally, put every small cell into a {\em{small box}} consisting only of this cell.
Observe that thus, every rectangle of $\Ss$ is contained in a single box.

Suppose for a moment that $s>1$ and consider any potential vertical box $B$, that is, a sequence of consecutive cells contained in one column of the grid of cells.
Observe that for this box we can define a subproblem $I_B$ of level $(s-1,t)$ so that $B=A_{I_B}$.
This is because the grid lines delimiting $B$ are also grid lines from $G^{\Vf}_{s-1}$ and $G^{\Hf}_{t}$, while the width of $B$ is equal to
$u_s=(1/\delta\eps)^{1/\eps} u_{s-1}$. 
Hence, for the subproblem $I_B$ the difference between its vertical offsets will be bounded by $(1/\delta\eps)^{1/\eps}$,
as requested in the definition of a subproblem.
Similarly we can define a subproblem $I_B$ of level $(s,t-1)$ whenever $B$ is a horizontal box, and a subproblem $I_B$ of level $(s-1,t-1)$ whenever $B$ is a small box.

\begin{figure}[t]
\begin{centering}
\includegraphics[scale=0.6]{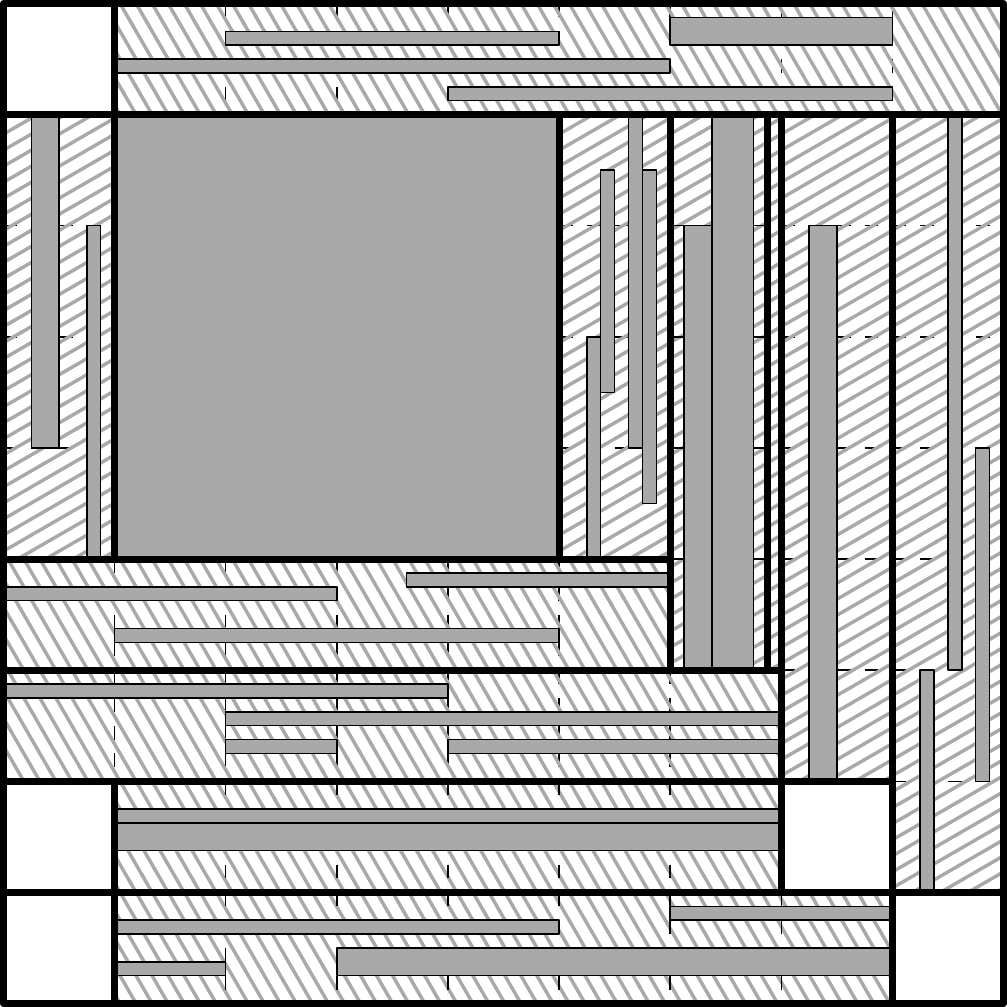}
\caption{The partition of $A_I$ into large, horizontal, vertical, and small cells. The figure is a slightly modified figure from~\cite{AdamaszekChalermsookWiese2015}.}\label{fig:divide}
\end{centering}
\end{figure}

With these observations in mind, we can present the procedure for computing $\Val(I)$; suppose for now that $s,t>1$.
The procedure enumerates
all partitions of the grid of cells into boxes, where each box has a prescribed type: it is either large, horizontal, vertical, or small.
Obviously, we consider only partitions where horizontal boxes are contained in one row, vertical boxes are contained in one column, and small boxes consist of one cell.
Observe that since $x,y\leq (1/\delta\eps)^{1/\eps}$, the number of such partitions is bounded by $f(\delta,\eps)$ for some computable function $f$.
For each partition $\Pp$, we do the following:
\begin{itemize}
\item For each large box $B$ in $\Pp$, choose the heaviest rectangle $Q_i\in \Qq_I$ such that $B=Q_i$, and record its weight. If for some large box there is no such rectangle, ignore this partition $\Pp$.
\item For each horizontal, vertical, or small box $B$ in $\Pp$, compute $\Val(I_B)$ by a recursive call to solving the subproblem $I_B$.
\item Sum up the weights recorded for large boxes and values computed for other boxes, and record the obtained sum as a candidate value for $\Val(I)$.
\end{itemize}
Finally, return as $\Val(I)$ the maximum among all the candidate values computed for all partitions into boxes.
In case $s=1$ we assume that all vertical and small boxes yield value $0$, and when $t=1$ we do the same for all horizontal and small boxes.
The claimed running time of the procedure, as well as the upper bound on the number of recursive calls, follows directly from the description.

To see the correctness of the procedure, observe that each candidate value represents the total weight of some independent set of rectangles contained in $\Qq_I$:
this independent set consists of all the rectangles chosen to fill the large boxes, and the union of the independent sets yielding values $\Val(I_B)$ among all the other boxes $B$.
On the other hand, from the discussion before the description of the procedure it follows that if $\Ss_I\subseteq \Qq_I$ is an independent set of rectangles such that $\weight(\Ss_I)=\Val(I)$,
then there will be a partition $\Pp$ into boxes which will be considered by the procedure, and for which every rectangle of $\Ss_I$ fits into a single box.
It is then easy to see that for this partition $\Pp$, the optimum value $\Val(I)$ will be computed as a candidate.
\end{proof}

Finally, to bound the running time of the algorithm, we prove that there is only a small number of subproblems $I$ for which $\Qq_I$ is non-empty.
Obviously, only such subproblems are necessary to solve, as the others have value~$0$.

\begin{lemma}\label{lem:nontrivial-bound}
The number of subproblems $I$ for which $\Qq_I$ is non-empty is at most $81\cdot (1/\delta\eps)^{4/\eps}\cdot |\Qq|p^2$.
\end{lemma}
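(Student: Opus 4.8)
The goal is to bound the number of subproblems $I = (s,t,x_1,x_2,y_1,y_2)$ with $\Qq_I \neq \emptyset$. The plan is to charge each such subproblem to a triple consisting of a level $(s,t)$, a rectangle $Q_i \in \Qq$ witnessing non-emptiness, and a bounded amount of ``local'' data describing the position of $A_I$ relative to $Q_i$. First I would fix a level $(s,t)$; there are $p^2$ choices. Then, for a subproblem of this level with $\Qq_I \neq \emptyset$, pick any rectangle $Q_i \in \Qq_I$ (so $Q_i \subseteq A_I$). There are $|\Qq|$ choices for $Q_i$. It remains to argue that, once $(s,t)$ and $Q_i$ are fixed, the number of subproblems $I$ of level $(s,t)$ with $Q_i \subseteq A_I$ is at most $81 \cdot (1/\delta\eps)^{4/\eps}$.

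To see this last bound, recall that $A_I = (a + x_1 u_s, a + x_2 u_s) \times (a + y_1 u_t, a + y_2 u_t)$, so each of the four offsets $x_1, x_2, y_1, y_2$ is determined by the $x$- or $y$-coordinate of a side of $A_I$. Since $Q_i \subseteq A_I$ and $Q_i$ is non-degenerate, $A_I$ contains at least one point of $Q_i$; in particular the left side of $A_I$ has $x$-coordinate at most $\crdx{i}{2}$, and the right side has $x$-coordinate at least $\crdx{i}{1}$. Combined with the width constraint $x_2 - x_1 \le (1/\delta\eps)^{1/\eps}$ from the definition of a subproblem, the left offset $x_1$ must satisfy $a + x_1 u_s \le \crdx{i}{2}$ while $a + x_1 u_s > \crdx{i}{2} - (\crdx{i}{2}-\crdx{i}{1}) - (1/\delta\eps)^{1/\eps} u_s$ (since the right side of $A_I$ is above $\crdx{i}{1}$ and the width of $A_I$ is at most $(1/\delta\eps)^{1/\eps}u_s$). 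Using that $Q_i$ has level at most $(s,t)$ coordinate-wise by Lemma~\ref{lem:subproblem-level}, the width of $Q_i$ is at most $\mu_s \le (1/\delta\eps)^{1/\eps}\cdot u_s$-order of magnitude, so $a + x_1 u_s$ ranges over an interval of length $O((1/\delta\eps)^{1/\eps} u_s)$; hence $x_1$ takes at most $O((1/\delta\eps)^{1/\eps})$ integer values. Once $x_1$ is fixed, $x_2$ takes at most $(1/\delta\eps)^{1/\eps}$ values by the subproblem constraint. The same reasoning applied to the $y$-coordinates bounds the number of choices of $(y_1, y_2)$ by $O((1/\delta\eps)^{1/\eps})^2$. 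Multiplying out the constants carefully, the number of subproblems of level $(s,t)$ containing a fixed $Q_i$ is at most $81 \cdot (1/\delta\eps)^{4/\eps}$.

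Putting the three estimates together, the total number of subproblems $I$ with $\Qq_I \neq \emptyset$ is at most $p^2 \cdot |\Qq| \cdot 81 \cdot (1/\delta\eps)^{4/\eps}$, as claimed. I expect the main obstacle to be bookkeeping the precise numeric constant $81$: one must be a little careful about the ``$+1$'' factors when counting integer points in a half-open interval of a given length, about the factor of $2$ coming from ``left side of $A_I$ at most $\crdx{i}{2}$ and right side at least $\crdx{i}{1}$'' giving slack on both ends, and about how the constraint $1/\delta > 4$ is used so that the width of $Q_i$ (which is within a $(1-\delta)$ factor of $\width_i < \mu_s$) is comfortably dominated by a small multiple of $(1/\delta\eps)^{1/\eps} u_s$. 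Everything else is a direct unpacking of the definitions of a subproblem and of the area covered.
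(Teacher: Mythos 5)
Your proposal is correct and follows essentially the same approach as the paper: fix a level $(s,t)$ (at most $p^2$ choices) and a witness rectangle $Q_i\in\Qq_I$ (at most $|\Qq|$ choices), then show that the remaining four offsets are confined to a small window, yielding the $81\cdot(1/\delta\eps)^{4/\eps}$ factor. The only small technical variation is that you anchor the offset window directly to the coordinates of $Q_i$ (using $Q_i\subseteq A_I$, Lemma~\ref{lem:subproblem-level}, and the relation $\mu_s=2\eps(1/\delta\eps)^{1/\eps}u_s$), whereas the paper anchors it to one fixed subproblem $I$ containing $Q_i$ and counts subproblems $I'$ of the same level with $A_I\cap A_{I'}\neq\emptyset$; both routes give the same kind of window of width $\Oh\bigl((1/\delta\eps)^{1/\eps}\bigr)$ per coordinate and hence the same bound.
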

\begin{proof}
It suffices to prove that for every fixed rectangle $Q_i\in \Qq$ and for every level $(s,t)\in \{1,\ldots,p\}\times\{1,\ldots,p\}$, there are at most $81\cdot(1/\delta\eps)^{4/\eps}$ subproblems $I$
of level $(s,t)$ for which $Q_i\in \Qq_I$. 
Observe that if $I=(s,t,x_1,x_2,y_1,y_2)$ and $I'=(s,t,x'_1,x'_2,y'_1,y'_2)$ are two subproblems of level $(s,t)$ such that $A_I\cap A_{I'}\neq \emptyset$, then we necessarily have 
\begin{eqnarray*}
x_1-(1/\delta\eps)^{1/\eps}<x'_1<x'_2<x_2+(1/\delta\eps)^{1/\eps} & \textrm{and}\\
y_1-(1/\delta\eps)^{1/\eps}<y'_1<y'_2<y_2+(1/\delta\eps)^{1/\eps}.& 
\end{eqnarray*}
Since $x_2\leq x_1+(1/\delta\eps)^{1/\eps}$ and $y_2\leq y_1+(1/\delta\eps)^{1/\eps}$, for fixed $I$ there are at most $81\cdot(1/\delta\eps)^{4/\eps}$ choices of $x'_1, x'_2,y'_1,y'_2$ that satisfy the above.
Therefore, for each subproblem $I$ of level $(s,t)$, there are at most $81\cdot(1/\delta\eps)^{4/\eps}$ other subproblems $I'$ of the same level for which $A_I\cap A_{I'}\neq \emptyset$.
If $Q_i\in \Qq_I$ for some $I$ of level $(s,t)$, then all subproblems $I'$ of the same level for which $Q_i\in \Qq_{I'}$ must satisfy $A_I\cap A_{I'}\neq \emptyset$, and hence there can be at
most $81\cdot(1/\delta\eps)^{4/\eps}$ of them.
\end{proof}

Having gathered all the tools, we conclude by describing the algorithm.
To compute $\OPT(\Qq)$, it is sufficient to compute $\Val(\glob)$ for the subproblem $\glob$ given by Lemma~\ref{lem:global-subproblem}.
As mentioned before, for this we use backtracking with memoization. 
Namely, starting from $\glob$, we recursively solve subproblems as explained in Lemma~\ref{lem:recursion}. 
Whenever $\Val(I)$ has been computed for some subproblem $I$, then this value is memoized in a map, and 
further calls to solving $I$ will only return the value retrieved from the map, instead of recomputing the value again.
Furthermore, whenever we attempt to compute $\Val(I)$ for a subproblem $I$ for which $\Qq_I$ is empty, we immediately return $0$ instead of applying the procedure of Lemma~\ref{lem:recursion}.
Therefore, the total running time of the algorithm is upper bounded by the number of subproblems $I$ for which $\Qq_I$ is non-empty, times the time spent on internal computations for each of them,
including checking whether 
the respective set $\Qq_I$ is empty and whether $\Val(I)$ has already been memoized.
The first factor is bounded by $81\cdot (1/\delta\eps)^{4/\eps}\cdot |\Qq|p^2\leq 81\cdot (1/\delta\eps)^{4/\eps}\cdot nN^2$ due to Lemma~\ref{lem:nontrivial-bound}.
The second factor is bounded by $f(\delta,\eps)\cdot n^d$ for some constant $d$, due to Lemma~\ref{lem:recursion}.
Hence, the running time of the whole algorithm is $f(\delta,\eps)\cdot (nN)^c$ for some computable function $f$ and constant $c$.
As mentioned before, the algorithm can be trivially adjusted to also compute an independent set 
of weight $\Val(\glob)$
by storing the value of each subproblem together with some independent set that certifies this value.

Summarizing, the dynamic programming described above computes an independent set in $\Qq$ of weight $\OPT(\Qq)$ in time $f(\delta,\eps)\cdot (nN)^c$.
As argued in the previous section, such an independent set projects to an independent set of the same weight in $\Rr^{-\delta}$, and $\OPT(\Qq)\geq (1-40\eps)\OPT$ holds with probability at least $1/2$.
This concludes the proof of Theorem~\ref{thm:main-approx}.

\subsection{Parameterized setting: Theorem~\ref{thm:main-fpt}}\label{sec:param}

In this section we present the proof of Theorem~\ref{thm:main-fpt}.
Note that if all rectangles had equal weights, then we could derive Theorem~\ref{thm:main-fpt} from Theorem~\ref{thm:main-approx} by simply running the algorithm for $\eps=1/(k+1)$.
This is because in the uniform-weight setting, a solution of size at most $k$ that loses at most a multiplicative factor of $1/(k+1)$ to the optimum, actually needs to be optimal.
However, this argument breaks when the weights are non-uniform. Instead, we argue how the reasoning from the proof of Theorem~\ref{thm:main-approx} should be adjusted 
to also justify Theorem~\ref{thm:main-fpt}.

First, we consider the sparsification arguments presented in Section~\ref{sec:sparse}, which yield subfamilies $\Rr'$ and $\Rr''$.
Let us apply them for $\eps=1/(20k)$.
Suppose $\Ss\subseteq \Rr$ is an optimum solution for the considered variant of the problem: it is an independent set of size at most $k$ and of total weight $\OPT_k(\Rr)$.
When applying the sparsification yielding family $\Rr'$ (Lemma~\ref{lem:well-separated}), every rectangle of $\Rr$ is removed with probability at most $2\eps=1/(10k)$ (cf. the proof of Lemma~\ref{lem:well-separated}).
Therefore, the expected number of rectangles from $\Ss$ that are removed when constructing $\Rr'$ is at most $1/10$.
Hence, by Markov's inequality, with probability at least $9/10$ none of them is removed, and $\Ss$ entirely survives in $\Rr'$.
Similarly, when $\Rr''$ is constructed from $\Rr'$, every rectangles is again removed with probability at most $8\eps=2/(5k)$ (Lemma~\ref{lem:abusive}).
Again, the same calculation using Markov's inequality yields that, conditioned on $\Ss$ surviving in $\Rr'$, $\Ss$ survives also in $\Rr''$ with probability at least $3/5$.
In total, we conclude that $\Ss$ survives in $\Rr''$ with probability at least $9/10\cdot 3/5 > 1/2$.
Note that if $\Ss\subseteq \Rr''$, then $\OPT_k(\Rr'')\geq \weight(\Ss)=\OPT_k(\Rr)$, but also $\OPT_k(\Rr'')\leq \OPT_k(\Rr)$ due to $\Rr''\subseteq \Rr$.
Consequently, in this case we have $\OPT_k(\Rr'')=\OPT_k(\Rr)$.

The construction of $\Qq$ from $\Rr''$ is kept unchanged.
Finally, in the dynamic programming we would like to compute $\OPT_k(\Qq)$ instead of $\OPT(\Qq)$.
The adjustment is standard: we add one additional coordinate to the definition of a subproblem, which keeps track of the number of rectangles to be picked in the solution.
More precisely, a subproblems is now a tuple $I=(s,t,x_1,x_2,y_1,y_2,\lambda)$, where the first six coordinates have the original meaning, while $\lambda$ is an integer with $0\leq \lambda\leq k$
that denotes the cardinality of the sought solution; $\lambda$ will be called the {\em{budget}}. 
In subproblem $I$ we are asked to compute $\Val(I)$, defined to be the largest weight of an independent set of size at most $\lambda$ contained in $\Qq_I$.
The recursive formula for computing $\Val(I)$, presented in Lemma~\ref{lem:recursion}, needs to be adjusted as follows: for each of the considered partitions into boxes, we also consider
all possible partitions of the budget among the boxes. Each large box has to receive budget $1$, since there will be exactly one rectangle fit into this box, and the budget $\lambda_B$ allocated  
to any other box $B$ becomes the budget in the corresponding subproblem $I_B$ that is solved recursively.
Note that since the number of boxes is bounded by $(1/\delta\eps)^{4/\eps}$ where $\eps=1/(20k)$, we have that the number of partitions of the budget $\lambda$ among the boxes is bounded by a computable
function of $\delta$ and $k$.

Wrapping up, with probability at least $1/2$ we have that $\OPT_k(\Rr'')=\OPT_k(\Rr)$.
Since $\Qq$ is obtained from $\Rr''$ only by shrinking the rectangles, we have $\OPT_k(\Qq)\geq \OPT_k(\Rr'')$.
As before, any independent set of rectangles in $\Qq$ projects to an independent set of the same weight in $\Rr^{-\delta}$.
Hence, if indeed $\OPT_k(\Rr'')=\OPT_k(\Rr)$, then the dynamic programming will output a set of rectangles of size at most $k$ that (after projection) is independent in $\Rr^{-\delta}$ and has weight 
at least $\OPT_k(\Rr)$.

\section{Kernelization results}\label{sec:kern}

In this section we present our kernelization results, which apply to the case when $\Rr$ consists of squares. As mentioned in the introduction, we can obtain a polynomial kernel for three settings: unit squares of uniform weight, unit squares of non-uniform weight, and squares of non-uniform size and uniform weight. 

We start our description by clarifying the definition of kernelization. Then, we present two auxiliary lemmas on (weighted) graphs. Later, we apply these lemmas to reduced instances of \mwisr on (unit) squares to obtain the different kernels. Finally, we explain how to reduce the bit encodings of the kernels, and 
we present applications for the design of EPTASes and fast FPT algorithms.


\subsection{Definition of kernel}
The classic definition of kernelization (cf.~\cite{CyganFKLMPPS15}) is tailored to decision problems; extending it to optimization problems in a weighted setting often turns out to be problematic.
Making the definition compatible with the $\delta$-shrinking model would make it
even more complicated.
For this reason, we define explicitly what we mean by kernelization for \mwisr in the shrinking model, bearing in mind the main principle of kernelization: solving the kernel should project
to a solution for the original instance.

\begin{definition}\label{def:kernelization}
Suppose $\Rr$ is a family of axis-parallel rectangles, $k$ is an integer parameter, and $\delta\in (0,1)$ is a constant.
Then, a {\em{kernel for $\Rr$ and $\delta$}} is a polynomial-time computable subfamily $\Qq\subseteq \Rr$ such that $|\Qq|\leq f(k,\delta)$ for a computable function~$f$, called the {\em{size}} of the kernel,
and the following holds:
$$\OPT_k(\Qq^{-\delta})\geq \OPT_k(\Rr).$$
\end{definition}

Thus, solving \mwisr on $\Rr$ with $\delta$-shrinking relaxation can be done by solving \mwisr on $\Qq^{-\delta}$ (without $\delta$-shrinking relaxation).
If one wishes to use an algorithm for the shrinking relaxation on the kernel, then applying it to $\Qq^{-\delta}$ with parameter $\delta$ will yield a subfamily $\Ss$ of size $k$ such that
$\Ss^{-2\delta}$ is independent, and $\weight(\Ss)\geq \OPT_k(\Qq^{-\delta})\geq \OPT_k(\Rr)$. Hence, this solves the original problem for $2\delta$-shrinking, and we can rescale the parameter accordingly.

A reader well-versed in the foundations of parameterized complexity will notice that Definition~\ref{def:kernelization} is at least as strong as any reasonable complexity-theoretical definition of kernelization, apart from one aspect. Namely, the weights and the coordinates of the rectangles in the kernel are inherited from the original instance, and hence, their bit encoding may not be bounded in terms of $k$ and~$\delta$. We prefer to work with Definition~\ref{def:kernelization}, because it focusses our efforts on the core combinatorial aspects of our kernelization procedures, rather than on the less relevant complexity-theoretical aspects. However, at the end of this section we argue that, essentially, the bit encodings of both the weights and the coordinates can be reduced to a polynomial in $k$ and $1/\delta$. 

\subsection{Auxiliary results on graphs}
We consider the parameterized \MWIS (\mwis) problem on graphs. In this problem, we are given a graph $G$, a non-negative weight function $w\colon V(G)\to \mathbb{R}_{\geq 0}$, and an integer parameter $k$; we look for a subset of $V(G)$ that has size at most $k$, contains only pairwise non-adjacent vertices (i.e.,~it is an \emph{independent set}), and has the maximum possible weight subject to these conditions. This maximum weight will be denoted by $\OPT_k(G)$. We present two simple reduction techniques that apply to parameterized \mwis on general graphs. 

We need the following notation. Given a graph $G$, let $\Ni{G}(v)$ denote the set of vertices at distance at most~$i$ from $v \in V(G)$, and let $\mdi(G)$ denote the maximum number of vertices at distance at most~$i$ from any vertex of $G$ (i.e.,~$\mdi(G) = \max_{v \in V(G)} |\Ni{G}(v)|)$. Note that $\md(G)$ is simply the maximum degree of $G$ plus $1$.

In the next lemma, we will identify a subgraph $G'$ of $G$ such that it suffices to search in $G'$ for the optimal solution to parameterized \mwis. Intuitively, later we will show that in the shrinking model for uniform squares 
the graph $G'$ can be assumed to have size 
$64k/\delta^2$ (weighted case), since the shrinking will allow us to ensure that $\mdt(G) \le 64/\delta^2$, for some appropriately chosen graph $G$.

\begin{lemma} \label{lem:kernel:Gweight}
Given an instance $(G,w,k)$ of parameterized \mwis, one can in polynomial time find an induced subgraph $G'$ of $G$ such that $\OPT_k(G) = \OPT_k(G')$ and $|V(G')| \leq k \cdot \mdt(G)$.
\end{lemma}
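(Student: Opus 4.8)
The plan is to select, greedily and by weight, a small set $C$ of at most $k$ ``centers'', and to let $G'$ be the subgraph of $G$ induced by all vertices within distance $2$ of $C$, that is, $G':=G\bigl[\bigcup_{c\in C}\Nt{G}(c)\bigr]$. Concretely: scan the vertices of $G$ in order of non-increasing weight (breaking ties by a fixed vertex ordering, so that we may treat the weights as pairwise distinct), maintaining an independent set $C$; add the currently scanned vertex $v$ to $C$ whenever $v\notin\N{G}(C)$, and stop as soon as $|C|=k$. By construction $C$ is independent and $|C|\le k$, and -- this is the property we will use -- every vertex $v\notin\N{G}(C)$ is strictly lighter than every vertex of $C$: otherwise, at the moment $v$ was scanned, $C$ had fewer than $k$ elements and $v$ was uncovered, so the greedy scan would have inserted $v$ into $C$.

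There are two cases. If the scan stops with $|C|<k$, then $C$ is a \emph{maximal} independent set of $G$, so $V(G)=\N{G}(C)$ and hence $|V(G)|\le|C|\cdot\md(G)<k\cdot\mdt(G)$ (using $\md(G)\le\mdt(G)$); in this case we simply output $G'=G$. If the scan stops with $|C|=k$, we output $G':=G\bigl[\bigcup_{c\in C}\Nt{G}(c)\bigr]$, which has at most $k\cdot\mdt(G)$ vertices, and the only thing left to prove is $\OPT_k(G')\ge\OPT_k(G)$ (the reverse inequality being immediate, as $G'$ is an induced subgraph). To this end, fix an optimal solution $I$ of $(G,w,k)$ and transform it as follows: retain every vertex of $I$ that already lies in $V(G')=\bigcup_{c\in C}\Nt{G}(c)$; each of the remaining vertices of $I$ is at distance at least $3$ from $C$, hence in particular lies outside $\N{G}(C)$, hence -- by the property above -- is lighter than every vertex of $C$. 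Delete those ``distant'' vertices and add back the same number of centers of $C$ that are not yet present. Since each added center is at least as heavy as each deleted vertex the total weight cannot drop, the size remains at most $k$ because $|I|\le k=|C|$, and -- provided the new set is still independent -- it witnesses $\OPT_k(G')\ge\OPT_k(G)$.

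The delicate point -- and the reason the bound is phrased with $\mdt$ rather than $\md$, and $G'$ uses distance-$2$ balls -- is exactly the proviso ``provided the new set is still independent'': one must be able to pick enough unused centers that are \emph{non-adjacent} to the retained part of $I$, yet a single retained vertex of $I$ may be adjacent to many centers and thereby block all of them at once. What makes this work is that such a vertex, together with the centers adjacent to it, already certifies a large value of $\mdt(G)$, so $G'$ has correspondingly more room to absorb the situation; converting this into a clean accounting -- for instance by first rerouting blocking vertices of $I$ to single covering centers, or by recursing on the portion of $G$ lying far from $C$ -- is where the proof needs to be carried out with care, and is the main obstacle. For intuition, the unweighted analogue is effortless: $|V(G)|>k\cdot\mdt(G)\ge k\cdot\md(G)$ forces the independence number of $G$ to exceed $k$, so $\OPT_k(G)=k$ and any single vertex may be discarded.
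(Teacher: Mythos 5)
Your proposal correctly identifies the right shape of construction (greedily collect heavy centers, keep only the part of $G$ that is close to them, and argue by exchange), but it contains a genuine gap which you yourself flag and then do not close: the exchange step may fail to produce an independent set. The root cause is that your greedy rule adds $v$ to $C$ whenever $v\notin\N{G}(C)$, so the resulting centers only form an \emph{independent set}, i.e.\ they are pairwise at distance $\ge 2$ in $G$. Two such centers can have a common neighbour, so the closed balls $\N{G}(c)$, $c\in C$, may overlap arbitrarily, and a single retained vertex of $I_{\mathrm{near}}$ can be adjacent to many centers and simultaneously ``block'' all of them. With only the bound $|I_{\mathrm{near}}|\le k-|I_{\mathrm{far}}|$ in hand, there is then no way to guarantee $|I_{\mathrm{far}}|$ \emph{unblocked} centers remain, so the swap is not justified. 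This is not a cosmetic gap: it is exactly the inequality the lemma turns on.

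The repair is small but essential, and it is what the paper actually does. You must choose the centers so that they are pairwise at distance $\ge 3$, not merely $\ge 2$. Replace the admission test by ``$v\notin\Nt{G}(C)$'', i.e.\ add $v$ only if it is at distance at least $3$ from every current center. With this change the sets $\N{G}(c)$, $c\in C$, become pairwise disjoint, so each vertex of $I_{\mathrm{near}}$ can block at most one center, hence at most $|I_{\mathrm{near}}|\le k-|I_{\mathrm{far}}|$ of the $k$ centers are blocked, which leaves at least $|I_{\mathrm{far}}|$ free centers to absorb the deleted far vertices; independence, cardinality, and weight are all preserved. (If the scan stops with $|C|<k$, then every vertex is within distance $2$ of $C$ -- by the same admission test -- and you get $|V(G)|\le |C|\cdot\mdt(G)<k\cdot\mdt(G)$, so outputting $G'=G$ is fine; your corresponding case needs the same adjustment, since maximality of $C$ as an independent set only bounds $|V(G)|$ by $|C|\cdot\md(G)$, which is what you need anyway but which you reached under the weaker admission rule.)

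The paper phrases the same idea iteratively: starting from $G_0=G$ it repeatedly takes a maximum-weight vertex $v_i$ in the current residual graph $G_{i-1}$ and deletes the radius-$2$ ball $\Nt{G_{i-1}}[v_i]$, which enforces precisely the distance-$\ge 3$ separation that your admission rule misses. The paper then proves correctness by an extremal argument (take an optimal $I'$ minimizing $|I'\cap V(G_k)|$ and swap one vertex at a time) rather than your batch exchange, but once the disjointness of the $\N{G}[v_i]$'s is in place these two presentations are interchangeable. So with the single change from $\N{G}(C)$ to $\Nt{G}(C)$ in the greedy, your proof goes through and is essentially the paper's proof in a different wrapper; as written, however, it is incomplete.
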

\begin{proof}
If $G$ is empty, then the lemma holds with $G'=G$, and thus we may assume that $G$ is not empty.
We iteratively define a sequence of graphs $G_0,G_1,\ldots,G_k$ and a sequence of vertices $v_1,\ldots,v_k \in V(G)$ as follows. First, we put $G_0=G$.
Consider $i=1,\ldots,k$ in order. If $G_{i-1}$ is not empty, then let $v_i$ be a vertex of maximum weight in $G_{i-1}$, and let $G_i = G_{i-1} - \Nt{G_{i-1}}[v_i]$. If $G_{i-1}$ is empty, then let $v_i = v_{i-1}$ and $G_i = G_{i-1}$.

We claim there is a set $I \subseteq V(G)$, such that $I$ is an independent set, $|I| \leq k$, $I$ has maximum weight subject to these conditions, and 
$I \cap V(G_k)=\emptyset$. If $G_k$ is an empty graph, then this is trivial. So assume that $G_k$ is not empty. Then, $v_1,\ldots,v_k$ are all distinct by construction.
Let $I' \subseteq V(G)$ be such that $I'$ is an independent set, $|I'| \leq k$, $I'$ has maximum weight under these conditions, and among all such sets $|I' \cap V(G_k)|$ is minimum. Suppose, for sake of contradiction, that $|I' \cap V(G_k)| > 0$. Let $x$ be an arbitrary vertex in $I' \cap V(G_k)$. By construction, the sets $\N{G}[v_1],\ldots,\N{G}[v_k]$ are pairwise disjoint, because if $\N{G}[v_j] \cap \N{G}[v_i] \not= \emptyset$ for some $j > i$, then $v_j \in \Nt{G_{i-1}}[v_i]$, which combined with the fact that $v_j \not= v_i$ contradicts the choice of $v_j$. Moreover, $V(G_k)$ is disjoint from the sets $\N{G}[v_1],\ldots,\N{G}[v_k]$ as well. As $|I'|\leq k$ and $x\in I' \cap V(G_k)$, it follows that $I'$ intersects at most $k-1$ sets among $\N{G}[v_1],\ldots,\N{G}[v_k]$. Consider any $i \in \{1,\ldots,k\}$ such that $I' \cap \N{G}[v_i] = \emptyset$. By the choice of $v_1,\ldots,v_k$, we have $w(v_i) \geq w(x)$. Therefore, the set $I'' = (I \setminus\{x\}) \cup \{v_i\}$ is an independent set, and it satisfies $|I''
| = |I'|$, $w(I'') \geq w(I')$, and $|I'' \cap V(G_k)| < |I' \cap V(G_k)|$; this contradicts the choice of $I'$. The claim follows.

Following the claim, $\OPT_k(G) = \OPT_k(G')$. Moreover, $|V(G')| \leq k \cdot \mdt(G)$ by construction. Finally, it is straightforward to find $G'$ in polynomial time by following the construction above.
\end{proof}
In the case of unit weights, a stronger lemma can be proven. In the shrinking model, it will imply later that either we can directly find an independent set of size $k$, or that due to the shrinking the number of input squares can be assumed to be at most $16k/\delta^2$. This is because the shrinking will allow us to argue that $\md(G)\le 16/\delta^2$, for some appropriately chosen graph $G$.

\begin{lemma} \label{lem:kernel:G}
Given an instance $(G,w,k)$ of parameterized \mwis with $w(v) = 1$ for all $v \in V(G)$ and $|V(G)|> k \cdot \md(G)$, one can in polynomial time find an independent set in $G$ of size $k = \OPT_k(G)$.
\end{lemma}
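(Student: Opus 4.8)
The plan is to prove this by a greedy argument, mirroring the structure of the proof of Lemma~\ref{lem:kernel:Gweight} but exploiting the fact that in the unit-weight case any vertex is ``as good as'' any other. First I would observe that since all weights are~$1$, the quantity $\OPT_k(G)$ is simply $\min(k, \alpha(G))$ where $\alpha(G)$ is the independence number; so to certify $\OPT_k(G) = k$ it suffices to exhibit an independent set of size exactly~$k$, and then the set found automatically has optimal weight.

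The construction is the natural greedy one: maintain a working graph, initially $G_0 = G$; for $i = 1, \ldots, k$, pick an arbitrary vertex $v_i$ of the current graph $G_{i-1}$, add it to the independent set under construction, and set $G_i = G_{i-1} - \N{G_{i-1}}[v_i]$, i.e.\ delete $v_i$ together with all its neighbours in $G_{i-1}$. The key step is to argue this process does not stall before $k$ iterations. Each iteration removes at most $\md(G)$ vertices, since $|\N{G_{i-1}}[v_i]| \le |\N{G}[v_i]| \le \md(G)$ (deleting vertices only shrinks neighbourhoods). Hence after $i$ iterations at most $i\cdot \md(G)$ vertices have been removed, so as long as $i \cdot \md(G) < |V(G)|$ the graph $G_i$ is nonempty and iteration $i+1$ can proceed. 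Since we assumed $|V(G)| > k\cdot\md(G)$, for every $i \le k-1$ we have $i\cdot\md(G) \le (k-1)\md(G) < k\cdot\md(G) < |V(G)|$, so the process runs for the full $k$ steps and produces vertices $v_1, \ldots, v_k$.

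It remains to check that $\{v_1,\ldots,v_k\}$ is independent. This is immediate from the construction: for $j > i$, when $v_j$ was chosen it lay in $G_{j-1} \subseteq G_i = G_{i-1} - \N{G_{i-1}}[v_i]$, so in particular $v_j \notin \N{G_{i-1}}[v_i]$, meaning $v_j$ is not adjacent to $v_i$ in $G_{i-1}$, and adjacency in $G$ between these two surviving vertices would be adjacency in $G_{i-1}$ — so they are non-adjacent in $G$ as well. Thus $I = \{v_1,\ldots,v_k\}$ is an independent set of size $k$, which yields $\OPT_k(G) \ge k$, and since a $k$-bounded solution has weight at most $k$ we get $\OPT_k(G) = k$ with $I$ witnessing it. The whole procedure is clearly polynomial-time: $k$ rounds, each doing a closed-neighbourhood deletion.

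I do not expect a genuine obstacle here; the only point requiring a small amount of care is the inequality chain ensuring the greedy never exhausts the graph, and the observation that $\md$ is computed with respect to the original graph $G$ while neighbourhoods can only become smaller in the $G_i$, so the bound $\md(G)$ on the number of vertices removed per step remains valid throughout. If one wanted a cleaner statement one could note the identity $\OPT_k(G) = \min(k,\alpha(G))$ up front, but it is not strictly needed.
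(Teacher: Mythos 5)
Your proof is correct and follows essentially the same argument as the paper: an arbitrary-choice greedy construction that deletes closed neighbourhoods, a counting bound of at most $\md(G)$ removals per step to guarantee $k$ iterations, and the observation that the chosen vertices are pairwise non-adjacent by construction. The extra remark about $\OPT_k(G)=\min(k,\alpha(G))$ is fine but, as you note, unnecessary.
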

\begin{proof}
Again, we iteratively define a sequence of graphs $G=G_0,G_1,\ldots,G_k$ and a sequence of vertices $v_1,\ldots,v_k \in V(G)$ as follows. Consider $i=1,\ldots,k$ in order. To define $G_i$ based on $G_{i-1}$, let $v_i$ be any vertex in $G_{i-1}$, and let $G_i = G_{i-1} - \N{G_{i-1}}[v_i]$. As we will show next, $G_i$ never becomes empty, and thus this sequence is well-defined.

Note that each $G_i$ is an induced subgraph of $G$, and thus $\N{G_{i-1}}[v_i]$ is always of size at most $\md(G)$. By trivial induction, we have $|V(G_i)|\geq |V(G)|-i\cdot \md(G)$, which is always
positive due to $|V(G)|>k\cdot \md(G)$. Hence, the construction can be performed for all the $k$ steps, yielding distinct vertices $v_1,\ldots,v_k$.

Now observe that $\{v_1,\ldots,v_k\}$ is an independent set in $G$, because if $v_j$ and $v_i$ were adjacent for some $j > i$, then $v_j \in \N{G_{i-1}}[v_i]$, which combined with the fact that $v_j \not= v_i$ contradicts the choice of $v_j$. Hence, $\{v_1,\ldots,v_k\}$ can be returned. It is straightforward to perform the construction in polynomial time.
\end{proof}

\subsection{Kernels on unit squares}
To streamline the description of the kernels, we need the following definitions. The \emph{center} of a rectangle $R_i$ is the point $c_i =(x_i^{(1)} + \frac{g_i}{2},y_i^{(1)} + \frac{h_i}{2})$. For some $\lambda \geq 0$, we say that a family $\Rr$ of rectangles is \emph{$\lambda$-distant} if $||c_i-c_j||_\infty \geq \lambda$ for any distinct $R_i,R_j \in \Rr$. Also, if $\Rr$ is a family of squares and $\Qq\subseteq \Rr$ is a subfamily of $\Rr$, then we say that a mapping $\clmap \colon \Rr\to \Qq$ is {\em{$\lambda$-close}} if for every $R_i\in \Rr$ with $R_j=\clmap(R_i)$, we have $g_j\leq g_i$, $w_j\geq w_i$, and $||c_i-c_j||_\infty\leq \lambda$. Given such a mapping and a subfamily $\Ss$ of $\Rr$, we use $(\clmap(\Ss))^{-\delta}$ to denote the $\delta$-shrinking of the image of $\clmap$ on $\Ss$.

The next lemma will prove very useful.

\begin{lemma}\label{lem:close}
Let $\delta\in (0,1)$ and let $\Rr$ be a family of squares, each of side length at least $M$. Suppose further that $\clmap\colon \Rr \to \Qq$ is a $(\delta M/2)$-close mapping, for some subfamily $\Qq\subseteq \Rr$.
Then for every $R_i\in \Rr$, we have $R_j^{-\delta}\subseteq R_i$ where $R_j=\clmap(R_i)$.
\end{lemma}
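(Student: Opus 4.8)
The plan is to directly estimate how far the shrinking $R_j^{-\delta}$ can poke outside $R_i$, and show it never does. First I would set up coordinates: write $c_i=(a,b)$ for the center of $R_i$ and $c_j=(a',b')$ for the center of $R_j=\clmap(R_i)$, and let $g_i,g_j$ denote the side lengths, so $R_i=(a-g_i/2,a+g_i/2)\times(b-g_i/2,b+g_i/2)$ and similarly for $R_j$. Since $\clmap$ is $(\delta M/2)$-close, we have $g_j\le g_i$ and $\|c_i-c_j\|_\infty\le \delta M/2$, in particular $|a-a'|\le \delta M/2$ and $|b-b'|\le \delta M/2$. The $\delta$-shrinking $R_j^{-\delta}$ has the same center $c_j$ and side length $(1-\delta)g_j$, so it equals $(a'-(1-\delta)g_j/2,\,a'+(1-\delta)g_j/2)\times(b'-(1-\delta)g_j/2,\,b'+(1-\delta)g_j/2)$.

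Next I would verify containment one side at a time; by symmetry it suffices to check the right $x$-boundary, i.e.\ that $a'+(1-\delta)g_j/2\le a+g_i/2$. Estimating the left-hand side, $a'+(1-\delta)g_j/2\le (a+\delta M/2)+(1-\delta)g_j/2$. Since $g_j\ge M$ by hypothesis, we have $\delta M/2\le \delta g_j/2$, so the bound becomes $a+\delta g_j/2+(1-\delta)g_j/2 = a+g_j/2\le a+g_i/2$, using $g_j\le g_i$ in the last step. The same computation, with signs flipped, handles the left $x$-boundary, and the vertical boundaries are identical after swapping $a,a'$ for $b,b'$. Hence every side of $R_j^{-\delta}$ lies within the corresponding side of $R_i$, i.e.\ $R_j^{-\delta}\subseteq R_i$.

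I do not expect any genuine obstacle here — the only thing to be careful about is the interplay of the three quantities $\delta M/2$ (the displacement bound), $\delta g_j/2$ (half the amount shaved off each side by shrinking), and the gap $g_j/2$ vs.\ $g_i/2$ between centers and boundaries; the key inequality is simply $\delta M/2\le \delta g_j/2$, which is where the ``side length at least $M$'' assumption is used, together with $g_j\le g_i$ coming from $\lambda$-closeness. One minor point worth stating explicitly in the write-up is that the shrinking operation fixes the center, which follows directly from the definition of $\delta$-shrinking in Section~\ref{sec:prelims} specialized to squares. No randomization, no appeal to earlier lemmas beyond the definitions, so the proof is a short direct calculation.
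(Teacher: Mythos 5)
Your proof is correct and essentially the same as the paper's: the paper phrases the argument via the triangle inequality for $\|\cdot\|_\infty$ (viewing the squares as $\ell_\infty$-balls around their centers), while you carry out the identical computation coordinate-by-coordinate, using $\delta M/2 \le \delta g_j/2 \le \delta g_i/2$ and $g_j \le g_i$ in the same way. The one small stylistic difference (you pass through $g_j$ before bounding by $g_i$, the paper bounds each term by $g_i$ directly) is immaterial.
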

\begin{proof}
Observe that 
$$R_i=\{p\,\colon\, ||p-c_i||_\infty< g_i/2\}\quad\textrm{and}\quad R_j^{-\delta}=\{p\,\colon\, ||p-c_j||_\infty< (1/2-\delta/2)\cdot g_j\}.$$
By the $(\delta M/2)$-closeness of $\clmap$, we have $g_j\leq g_i$ and $||c_i-c_j||\leq (\delta/2)\cdot M$.
Hence, for every $p\in R_j^{-\delta}$, we have
\begin{eqnarray*}
||p-c_i||_\infty & \leq & ||p-c_j||_\infty + ||c_i-c_j||_\infty\\
& < & (1/2-\delta/2)\cdot g_j+(\delta/2)\cdot M\leq (1/2-\delta/2)\cdot g_i+(\delta/2)\cdot g_i=g_i/2.
\end{eqnarray*}
Since $p$ was chosen arbitrarily, we conclude that $R_j^{-\delta}\subseteq R_i$.
\end{proof}

The following corollary is now almost immediate.

\begin{corollary}\label{cor:mapping}
Let $\delta\in (0,1)$, let $\Rr$ is a family of squares, each of side length at least $M$, let $\Qq\subseteq \Rr$ be a subfamily of $\Rr$, let $\clmap\colon \Rr \to \Qq$ be a $(\delta M/2)$-close mapping, 
and let $\Ss\subseteq \Rr$ be an independent subfamily of squares.
Then $\clmap$ is injective on $\Ss$, and $(\clmap(\Ss))^{-\delta}$ is an independent set in $\Qq^{-\delta}$ of weight not smaller than $w(\Ss)$.
Consequently, $\OPT_k(\Qq^{-\delta})\geq \OPT_k(\Rr)$ for each non-negative integer~$k$.
\end{corollary}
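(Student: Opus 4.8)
The plan is to derive Corollary~\ref{cor:mapping} directly from Lemma~\ref{lem:close}, treating the three assertions (injectivity on $\Ss$, independence of $(\clmap(\Ss))^{-\delta}$ with the weight bound, and the consequence about $\OPT_k$) in turn. First I would invoke Lemma~\ref{lem:close}: for every $R_i\in\Rr$ we have $\clmap(R_i)^{-\delta}\subseteq R_i$. In particular, for the squares in the independent subfamily $\Ss$, each image square, after $\delta$-shrinking, sits inside the corresponding original square.

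For injectivity on $\Ss$, suppose $R_i,R_{i'}\in\Ss$ are distinct but $\clmap(R_i)=\clmap(R_{i'})=R_j$. Then by Lemma~\ref{lem:close} we have $R_j^{-\delta}\subseteq R_i$ and $R_j^{-\delta}\subseteq R_{i'}$, so $R_j^{-\delta}\subseteq R_i\cap R_{i'}$. Since $R_j^{-\delta}$ is a nonempty (open) square — here one uses $\delta<1$ so the shrunk square is nondegenerate — this forces $R_i$ and $R_{i'}$ to overlap, contradicting independence of $\Ss$. Hence $\clmap$ is injective on $\Ss$, and $|\clmap(\Ss)|=|\Ss|$.

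For independence of $(\clmap(\Ss))^{-\delta}$: take two distinct squares $\clmap(R_i)^{-\delta}$ and $\clmap(R_{i'})^{-\delta}$ in this family, with $R_i,R_{i'}\in\Ss$ distinct (using injectivity). By Lemma~\ref{lem:close}, $\clmap(R_i)^{-\delta}\subseteq R_i$ and $\clmap(R_{i'})^{-\delta}\subseteq R_{i'}$, and $R_i\cap R_{i'}=\emptyset$ since $\Ss$ is independent; therefore the two shrunk images are disjoint. So $(\clmap(\Ss))^{-\delta}$ is an independent subfamily of $\Qq^{-\delta}$. For the weight bound, the $(\delta M/2)$-closeness of $\clmap$ guarantees $w_j\ge w_i$ whenever $R_j=\clmap(R_i)$, and since $\clmap$ is injective on $\Ss$ the weights add up without double counting, giving $w\big((\clmap(\Ss))^{-\delta}\big)=\sum_{R_i\in\Ss}w_{\clmap(R_i)}\ge\sum_{R_i\in\Ss}w_i=w(\Ss)$ (recall $\delta$-shrinking preserves weights).

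Finally, for the $\OPT_k$ consequence: fix $k$, and let $\Ss\subseteq\Rr$ be an independent subfamily with $|\Ss|\le k$ achieving $w(\Ss)=\OPT_k(\Rr)$. Applying the above, $(\clmap(\Ss))^{-\delta}$ is an independent subfamily of $\Qq^{-\delta}$ with $|(\clmap(\Ss))^{-\delta}|=|\Ss|\le k$ and weight at least $w(\Ss)=\OPT_k(\Rr)$; hence $\OPT_k(\Qq^{-\delta})\ge\OPT_k(\Rr)$. I do not anticipate a serious obstacle here — the corollary is essentially bookkeeping on top of Lemma~\ref{lem:close} — the only point requiring a moment's care is that injectivity on $\Ss$ must be established before the weight sum argument, so that distinct members of $\Ss$ genuinely contribute distinct squares to $\clmap(\Ss)$.
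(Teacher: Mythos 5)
Your proposal is correct and follows essentially the same route as the paper: injectivity on $\Ss$ via the observation from Lemma~\ref{lem:close} that $R_j^{-\delta}\subseteq R_i\cap R_{i'}$ would force an overlap, independence of $(\clmap(\Ss))^{-\delta}$ via the containments $\clmap(R_i)^{-\delta}\subseteq R_i$, the weight bound via the closeness property plus injectivity, and the $\OPT_k$ inequality as a direct consequence. You simply spell out a few more elementary details (nondegeneracy of the shrunk square, fixing an optimal $\Ss$) that the paper leaves implicit.
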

\begin{proof}
Observe that if $R_j=\clmap(R_i)=\clmap(R_{i'})$ for some $R_j,R_i,R_{i'}\in \Rr$, then by Lemma~\ref{lem:close} we have $R_j^{-\delta}\subseteq R_i\cap R_{i'}$, so in particular $R_i$ and $R_{i'}$ overlap.
Hence, $\clmap$ must be injective on any independent subfamily $\Ss$.
Moreover, Lemma~\ref{lem:close} ensures that $(\clmap(R_i))^{-\delta}\subseteq R_i$ for each $R_i\in \Ss$, so independence of $\Ss$ implies the independence of $(\clmap(\Ss))^{-\delta}$.
By the definition of a close mapping, each square is mapped to a square of at least the same weight, and hence $w(\clmap(\Ss))\geq w(\Ss)$.
The last assertion, that $\OPT_k(\Qq^{-\delta})\geq \OPT_k(\Rr)$, follows immediately from the previous ones and the definition of $\OPT_k(\cdot)$.
\end{proof}

We now give our main tool for the construction of kernels for unit squares; it can be seen as a sparsification of $\Rr$.

\begin{lemma} \label{lem:kernel:unit:tau}
Let $k$ be a non-negative integer, let $\delta \in (0,1)$, and let $\Rr$ be a family of unit squares of non-uniform weight. 
Then one can in polynomial time construct a $(\delta/2)$-distant subfamily $\Qq\subseteq \Rr$ together with a $(\delta/2)$-close mapping $\clmap\colon \Rr\to \Qq$.
\end{lemma}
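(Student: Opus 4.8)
The plan is to build $\Qq$ greedily by a simple charging/partitioning argument on the centers of the squares. First I would lay a grid on the plane with cells of side length $\delta/2$, i.e. partition the plane into axis-parallel squares $C_{a,b} = [a\cdot\tfrac{\delta}{2}, (a{+}1)\tfrac{\delta}{2}) \times [b\cdot\tfrac{\delta}{2}, (b{+}1)\tfrac{\delta}{2})$ for integers $a,b$. Every square $R_i \in \Rr$ has its center $c_i$ falling into exactly one such cell. For each nonempty cell $C$, among all squares of $\Rr$ whose center lies in $C$, I would pick one representative $Q_C$: namely one of maximum weight (ties broken arbitrarily, say by smallest index). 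Let $\Qq$ be the set of all these representatives, one per nonempty cell. Then I would define $\clmap(R_i) = Q_C$ where $C$ is the cell containing $c_i$.

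Next I would verify the three required properties. Since all squares are unit squares, $g_j = g_i = 1$ for all $i,j$, so the condition $g_j \le g_i$ in the definition of a $(\delta/2)$-close mapping holds trivially. The condition $w_j \ge w_i$ holds because $Q_C$ was chosen to have maximum weight among squares with center in $C$, and $R_i$ is one such square. For the distance bound: if $\clmap(R_i) = Q_C$ then $c_i$ and $c_{Q_C}$ both lie in the same cell $C$, which has diameter (in $\ell_\infty$) equal to $\delta/2$, hence $\|c_i - c_{\clmap(R_i)}\|_\infty \le \delta/2$ — actually strictly less, but $\le \delta/2$ suffices. This establishes that $\clmap$ is $(\delta/2)$-close. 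For $(\delta/2)$-distantness of $\Qq$: two distinct representatives $Q_C, Q_{C'}$ come from distinct cells $C \ne C'$; I need $\|c_{Q_C} - c_{Q_{C'}}\|_\infty \ge \delta/2$. This is \emph{not} automatic from the grid alone — two points in adjacent cells can be arbitrarily close — so a naive single grid does not directly give distantness, only closeness.

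The main obstacle is exactly this tension: a grid gives closeness within a cell but not distantness between cells. The standard fix is a second, coarser greedy pass, or equivalently to iterate the selection. Concretely, after forming the per-cell representatives, I would run a greedy sweep over them: process the candidate representatives in some fixed order, and add a candidate $Q$ to the final $\Qq$ only if no already-chosen element of $\Qq$ has center within $\ell_\infty$-distance $\delta/2$ of $c_Q$; if some chosen $Q' \in \Qq$ is within distance $\delta/2$, redirect the cell's mapping through $Q'$ instead — but here one must re-check the weight condition, which may fail. A cleaner route that sidesteps the weight issue entirely: run the greedy sweep directly on \emph{all} of $\Rr$, ordered by nonincreasing weight. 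Process $R_i$ in this order; maintain $\Qq$ as the set of squares accepted so far; accept $R_i$ into $\Qq$ iff no $R_j \in \Qq$ has $\|c_i - c_j\|_\infty < \delta/2$ (equivalently $\ge \delta/2$ for all current members), and in that case set $\clmap(R_i) = R_i$; otherwise set $\clmap(R_i) = R_j$ for the (any) already-accepted $R_j$ with $\|c_i - c_j\|_\infty < \delta/2$. By construction $\Qq$ is $(\delta/2)$-distant. The mapping is $(\delta/2)$-close: $g_j = g_i = 1$ gives $g_j \le g_i$; $\|c_i - c_j\|_\infty < \delta/2 \le \delta/2$; and $w_j \ge w_i$ because $R_j$ was processed before $R_i$ in nonincreasing weight order (note $R_j$ was already accepted when $R_i$ was processed, so it appeared earlier). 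Finally, this runs in polynomial time: sorting by weight is polynomial, and for each of the $n$ squares we scan the current $\Qq$ (size at most $n$) to test the distance condition, giving an $\Oh(n^2)$ bound overall. I would present the greedy-by-weight version as the proof, since it handles all three conditions uniformly without any case analysis.
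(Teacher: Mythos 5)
Your final greedy-by-weight construction is exactly the paper's proof: the paper repeatedly extracts the maximum-weight remaining square, adds it to $\Qq$, and maps (and removes) all squares whose centers lie within $\ell_\infty$-distance $\delta/2$ of it, which is precisely your single sorted pass in nonincreasing weight order. The grid idea you open with is a detour, but you correctly identify why it fails to give distantness and discard it, so the proposed proof is correct and takes essentially the same approach as the paper.
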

\begin{proof}
To construct $\Qq$ and $\clmap$, we iteratively perform the following procedure. Take a square $R_j \in \Rr$ of maximum weight. Add $R_j$ to $\Qq$. For any square $R_i \in \Rr$ for which $||c_i - c_j||_\infty < (\delta/2)$, remove $R_i$ from $\Rr$ and set $\clmap(R_i) = R_j$. Iterate this procedure until $\Rr$ becomes empty. It is clear from the construction that $\Qq$ constructed in this manner is $(\delta/2)$-distant, and $\clmap\colon \Rr\to \Qq$ is $(\delta/2)$-close; this is by the choice of $R_j$ and the fact that all side lengths are equal. Moreover, the construction can be performed in polynomial time.
\end{proof}

We are ready to give a kernel for unit squares of uniform weight. In a nutshell, we use Lemma~\ref{lem:kernel:unit:tau} to find a $(\delta/2)$-distant subfamily $\Qq$ of $\Rr$. Then we transfer an argument made by Alber and Fiala~\cite[Corollary~3.1]{AlberF04} to kernelize $\lambda$-distant families of unit disks to the setting of unit squares, and apply it to $\Qq$ to obtain the final kernel.

\begin{theorem} \label{thm:kernel:unit:unit}
Let $k$ be a non-negative integer, let $\delta \in (0,1)$, and let $\Rr$ be a family of unit squares of uniform weight. Then there is a kernel for $\Rr$ and $\delta$ of size at most $16k/\delta^2$.
\end{theorem}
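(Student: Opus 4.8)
The plan is to combine Lemma~\ref{lem:kernel:unit:tau} with Lemma~\ref{lem:kernel:G}, mediated by Corollary~\ref{cor:mapping}. First I would apply Lemma~\ref{lem:kernel:unit:tau} to $\Rr$ to obtain, in polynomial time, a $(\delta/2)$-distant subfamily $\Qq'\subseteq \Rr$ together with a $(\delta/2)$-close mapping $\clmap\colon \Rr\to\Qq'$. Note that all squares have side length $M=1$, so a $(\delta/2)$-distant family is exactly a $(\delta M/2)$-distant family, and $\clmap$ is a $(\delta M/2)$-close mapping; thus Corollary~\ref{cor:mapping} applies and gives $\OPT_k((\Qq')^{-\delta})\geq \OPT_k(\Rr)$. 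So $\Qq'$ already "retains" the optimum after $\delta$-shrinking; it remains only to bound $|\Qq'|$, or to shrink it further if it is too large.

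Next I would bound the local structure of $\Qq'$. Consider the intersection graph $G$ of $\Qq'$ (viewing the squares as closed unit squares for the purpose of adjacency). Two unit squares $R_i,R_j$ in $\Qq'$ overlap only if $\|c_i-c_j\|_\infty < 1$. Since $\Qq'$ is $(\delta/2)$-distant, the centers of any two squares in $\Qq'$ differ by at least $\delta/2$ in the $\ell_\infty$-metric, so the centers are the points of a $(\delta/2)$-separated set; a standard packing argument shows that any $\ell_\infty$-ball of radius $1$ contains at most $(2/(\delta/2)+1)^2 = (4/\delta+1)^2$ such centers, and in any case at most $16/\delta^2$ of them (using $1/\delta\ge 1$, so $(4/\delta+1)^2\le (4/\delta+4/\delta)^2 = 64/\delta^2$ — here I would tighten the constant to land at $\md(G)\le 16/\delta^2$, possibly by counting centers in the open square that can be within $\ell_\infty$-distance $<1$ of a fixed center, which gives strictly fewer than $(4/\delta)^2$; the exact bookkeeping is routine). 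Hence $\md(G) = \Delta(G)+1 \le 16/\delta^2$.

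Now I would branch on the size of $\Qq'$. If $|\Qq'| > k\cdot\md(G)$, then Lemma~\ref{lem:kernel:G} (applied with uniform weights $w\equiv 1$) produces, in polynomial time, an independent set of size $k$ in $G$, i.e., $k$ pairwise non-overlapping unit squares from $\Qq'\subseteq\Rr$; in this degenerate case we can output this explicit solution, or equivalently take $\Qq$ to be these $k$ squares, which trivially has size $k\le 16k/\delta^2$ and satisfies $\OPT_k(\Qq^{-\delta}) = k \ge \OPT_k(\Rr)$. Otherwise $|\Qq'|\le k\cdot\md(G)\le 16k/\delta^2$, and we set $\Qq=\Qq'$; the required inequality $\OPT_k(\Qq^{-\delta})\ge\OPT_k(\Rr)$ is exactly Corollary~\ref{cor:mapping}. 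Either way $\Qq$ is a kernel of size at most $16k/\delta^2$, and all steps run in polynomial time.

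The only genuinely delicate point is the packing estimate giving $\md(G)\le 16/\delta^2$: one must verify that the number of centers of a $(\delta/2)$-distant family that can lie within (open) $\ell_\infty$-distance $1$ of a fixed center is at most $16/\delta^2$, which amounts to fitting disjoint $\ell_\infty$-balls of radius $\delta/4$ around these centers inside an $\ell_\infty$-ball of radius $1+\delta/4$ and comparing areas — I expect this to give the clean bound $(4/\delta+1)^2$, and then the assumption $\delta<1$ lets one absorb the lower-order terms into $16/\delta^2$. The rest is assembling the cited lemmas; no new idea is needed beyond being careful that the "distance $<1$ overlap" condition for unit squares matches the $\ell_\infty$ metric used in the definition of $\lambda$-distant and $\lambda$-close.
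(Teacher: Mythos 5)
Your overall plan matches the paper's: apply Lemma~\ref{lem:kernel:unit:tau}, invoke Corollary~\ref{cor:mapping}, bound the local density of the resulting $(\delta/2)$-distant family, and then branch via Lemma~\ref{lem:kernel:G}. However, there is a concrete gap in the density bound, and it stems from a choice that the paper makes differently.

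You take $G$ to be the intersection graph of $\Qq'$, the \emph{unshrunk} unit squares, so two squares are adjacent iff $\|c_i-c_j\|_\infty < 1$. With centers pairwise $(\delta/2)$-separated, the packing argument (disjoint open $\ell_\infty$-balls of radius $\delta/4$ around the centers, all contained in the ball of radius $1+\delta/4$) gives $\md(G)\le (4/\delta+1)^2 = 16/\delta^2 + 8/\delta + 1$, which is \emph{strictly greater} than $16/\delta^2$. Your suggested tightenings do not close this gap: $(4/\delta+1)^2\le (8/\delta)^2 = 64/\delta^2$ only gives $64k/\delta^2$, and the claim that a more careful count of centers within open $\ell_\infty$-distance $<1$ yields strictly fewer than $(4/\delta)^2$ is not correct --- half-open tiling of the ambient square still produces up to $\lceil 4/\delta\rceil^2$ cells, which can exceed $(4/\delta)^2$. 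The paper instead takes $G$ to be the intersection graph of $\Qq^{-\delta}$, the \emph{shrunk} squares, for which adjacency requires $\|c_i-c_j\|_\infty < 1-\delta$. The same packing then fits all the disjoint $\delta/4$-balls inside a square of side $2-\delta$, giving $\md(G)\le (2-\delta)^2/(\delta/2)^2 = (4/\delta-2)^2 < (4/\delta)^2 = 16/\delta^2$. The $\delta$ of slack coming from the shrinking is exactly what makes the constant clean; without it the $(4/\delta+1)^2$ ``$+1$'' cannot be absorbed. Note also that using $G = $ intersection graph of $\Qq^{-\delta}$ is more natural here: Lemma~\ref{lem:kernel:G} then directly hands you a family whose $\delta$-shrinking is independent, which is what the kernel definition asks for. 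So: switch the graph to the shrunk family and the rest of your argument goes through; as written, you only get a kernel bound of $64k/\delta^2$.
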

\begin{proof}
Apply the algorithm of Lemma~\ref{lem:kernel:unit:tau} to $\Rr$, thereby constructing a $(\delta/2)$-distant subfamily $\Qq \subseteq \Rr$ and a $(\delta/2)$-close mapping $\clmap\colon \Rr\to \Qq$.
By Corollary~\ref{cor:mapping}, we have $\OPT_k(\Qq^{-\delta})\geq \OPT_k(\Rr)$.
Consider the intersection graph $G$ of $\Qq^{-\delta}$.

First, suppose that $|\Qq^{-\delta}|=|V(G)|>k \cdot \md(G)$. Then, we can apply Lemma~\ref{lem:kernel:G} to $G$, thus in polynomial time finding an independent set $I$ in $G$ of size $k=\OPT_k(G)$.
Then, we return the subfamily $\Ss$ of squares in $\Rr$ that corresponds to $I$. Note that $k = w(\Ss) = \OPT_k(\Qq^{-\delta}) \geq \OPT_k(\Rr)$ and that $\Ss^{-\delta}$ is an independent set. Hence, $\Ss$ satisfies the conditions of a kernel for $\Rr$ and $\delta$ of size at most $k$.

Second, suppose that $|\Qq^{-\delta}|\leq |V(G)|\leq k \cdot \md(G)$. If we can bound $\md(G)$ by $(4/\delta)^2$, then $\Qq$ satisfies the conditions of a kernel for $\Rr$ and $\delta$ of size at most $16k/\delta^2$, because $\Qq$ is a subfamily of $\Rr$ and $\OPT_k(\Qq^{-\delta}) \geq \OPT_k(\Rr)$.

In order to bound $\md(G)$ by $(4/\delta)^2$, we prove that any square $R^{-\delta}_i \in \Qq^{-\delta}$ overlaps at most $(4/\delta)^2$ squares of $\Qq^{-\delta}$ (including itself). This implies the same bound on $\md(G)$. Observe that any square $R^{-\delta}_j \in \Qq^{-\delta}$ that overlaps $R^{-\delta}_i$ has its center inside a square of side length $2-2\delta$ centered on $c_i$. Since $\Qq$ is $(\delta/2)$-distant, so is $\Qq^{-\delta}$, and thus
open squares with side length $(\delta/2)$ centered at the centers of squares of $\Qq^{-\delta}$ are pairwise disjoint.
The union of all these open squares is contained in a square of side length $2-\delta$ centered on $c_i$.
Then a simple area bound shows that at most $(4/\delta)^2$ squares of $\Qq^{-\delta}$ overlap $R_i$. The theorem follows.
\end{proof}

\begin{theorem} \label{thm:kernel:unit}
Let $k$ be a non-negative integer, let $\delta \in (0,1)$, and let $\Rr$ be a family of unit squares of non-uniform weight. Then there is a kernel for $\Rr$ and $\delta$ of size at most $64k/\delta^2$.
\end{theorem}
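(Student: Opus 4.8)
The plan is to mirror the proof of Theorem~\ref{thm:kernel:unit:unit}, with the only real change being that we now invoke the weighted reduction lemma (Lemma~\ref{lem:kernel:Gweight}) in place of the unit-weight lemma (Lemma~\ref{lem:kernel:G}), which costs us a worse dependence on $\delta$ coming from $\mdt$ instead of $\md$. First I would apply the algorithm of Lemma~\ref{lem:kernel:unit:tau} to $\Rr$, obtaining a $(\delta/2)$-distant subfamily $\Qq\subseteq\Rr$ together with a $(\delta/2)$-close mapping $\clmap\colon\Rr\to\Qq$. By Corollary~\ref{cor:mapping}, we immediately get $\OPT_k(\Qq^{-\delta})\geq\OPT_k(\Rr)$, so $\Qq$ is already a valid candidate kernel in terms of the optimum-preservation requirement; it remains only to bound $|\Qq|$ by a function of $k$ and $\delta$.

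Next I would form the intersection graph $G$ of $\Qq^{-\delta}$ and apply Lemma~\ref{lem:kernel:Gweight} to the instance $(G,w,k)$, where $w$ is the (non-uniform) weight function inherited from $\Rr$. This produces an induced subgraph $G'$ with $\OPT_k(G')=\OPT_k(G)=\OPT_k(\Qq^{-\delta})$ and $|V(G')|\leq k\cdot\mdt(G)$. Let $\Qq'\subseteq\Qq$ be the subfamily of squares corresponding to $V(G')$; then $\Qq'\subseteq\Rr$, $\OPT_k((\Qq')^{-\delta})=\OPT_k(\Qq^{-\delta})\geq\OPT_k(\Rr)$, and $|\Qq'|=|V(G')|\leq k\cdot\mdt(G)$. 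The remaining task is to bound $\mdt(G)$, i.e.\ the maximum number of squares of $\Qq^{-\delta}$ within graph-distance $2$ of any fixed square.

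For the $\mdt$ bound I would argue as follows. Fix $R^{-\delta}_i\in\Qq^{-\delta}$. If $R^{-\delta}_j$ is within graph-distance $2$ of $R^{-\delta}_i$, then there is some $R^{-\delta}_\ell$ (possibly $R^{-\delta}_i$ or $R^{-\delta}_j$ itself) overlapping both; since all squares in $\Qq^{-\delta}$ have side length $1-\delta<1$, the centers $c_i$ and $c_\ell$ are within $\ell_\infty$-distance less than $1$, and likewise $c_\ell$ and $c_j$, so $\|c_i-c_j\|_\infty<2$. Hence every such $c_j$ lies in the open square of side length $4$ centered at $c_i$. Because $\Qq$ is $(\delta/2)$-distant, so is $\Qq^{-\delta}$, and the open squares of side length $\delta/2$ centered at the centers of $\Qq^{-\delta}$ are pairwise disjoint; each of these is contained in the open square of side length $4+\delta/2\le 5$ centered at $c_i$. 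A standard area/packing argument then bounds the number of such centers, and hence $\mdt(G)$, by $(2\cdot 5/\delta)^2\le 64/\delta^2$ after the routine arithmetic; combining with $|\Qq'|\leq k\cdot\mdt(G)$ yields a kernel of size at most $64k/\delta^2$. (If one is careful one actually gets $(10/\delta)^2$, so a slightly more generous constant than stated; I would simply tune the packing square to land on exactly $64k/\delta^2$, e.g.\ using that the relevant centers in fact lie in a square of side $4-\delta$ so the enclosing square has side $4-\delta/2<4$, giving $(8/\delta)^2=64/\delta^2$.)

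The only mildly delicate point — and the step I expect to require the most care — is pinning down the exact constant in the $\mdt$ bound so that it matches the claimed $64/\delta^2$, since the distance-$2$ reasoning naively gives a factor that is roughly twice the diameter appearing in the distance-$1$ analysis of Theorem~\ref{thm:kernel:unit:unit}; everything else is a direct transfer. Finally, I would note that, as in the previous proof, $\Qq'$ is a subfamily of $\Rr$ and the optimum-preservation inequality $\OPT_k((\Qq')^{-\delta})\geq\OPT_k(\Rr)$ holds, so $\Qq'$ satisfies Definition~\ref{def:kernelization} with size at most $64k/\delta^2$, and the whole construction runs in polynomial time by Lemmas~\ref{lem:kernel:unit:tau} and~\ref{lem:kernel:Gweight}.
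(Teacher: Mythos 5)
Your proposal is correct and follows essentially the same route as the paper: apply Lemma~\ref{lem:kernel:unit:tau} to get a $(\delta/2)$-distant subfamily, invoke Corollary~\ref{cor:mapping} for optimum preservation, run Lemma~\ref{lem:kernel:Gweight} on the intersection graph of $\Qq^{-\delta}$, and bound $\mdt(G)$ by a packing argument. The only cosmetic difference is the intermediate constant bookkeeping (the paper notes the relevant centers lie in a square of side $4-4\delta$, so the enclosing square for the disjoint $(\delta/2)$-squares has side $4-3.5\delta<4$), but you reach the same $64/\delta^2$ bound.
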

\begin{proof}
Again, apply the algorithm of Lemma~\ref{lem:kernel:unit:tau} to $\Rr$, thereby constructing a $(\delta/2)$-distant subfamily $\Qq$ and a $(\delta/2)$-close mapping $\clmap\colon \Rr\to \Qq$; in particular $\OPT_k(\Qq^{-\delta})\geq \OPT_k(\Rr)$ by Corollary~\ref{cor:mapping}.
Let $G$ be the intersection graph of $\Qq^{-\delta}$. We proceed identically as in the proof of Theorem~\ref{thm:kernel:unit:unit}, except that we replace our use of Lemma~\ref{lem:kernel:G} by Lemma~\ref{lem:kernel:Gweight}. 
More precisely, by applying the algorithm of Lemma~\ref{lem:kernel:Gweight} on $G$, we compute an induced subgraph $G'$ such that $\OPT_k(G')=\OPT_k(G)=\OPT_k(\Qq^{-\delta})$ and $|V(G')|\leq k\cdot \mdt(G)$.
This induced subgraph corresponds to a family of squares that forms a kernel. Hence, it suffices to bound $\mdt(G)$ by $64/\delta^2$. We prove that for any square $R^{-\delta}_i \in \Qq^{-\delta}$, there are at most $64/\delta^2$ squares $R^{-\delta}_j \in \Qq^{-\delta}$ that either overlap $R_i$, or that overlap a square of $\Qq^{-\delta}$ that overlaps $R^{-\delta}_i$. Indeed, such $R^{-\delta}_j$ must have its center inside a square of side length $4-4\delta$ centered on $c_i$. Since $\Qq$ is $(\delta/2)$-distant, a similar area bound as in Theorem~\ref{thm:kernel:unit:unit} shows that at most $64/\delta^2$ squares of 
$\Qq^{-\delta}$ overlap $R^{-\delta}_i$.
\end{proof}

\subsection{Kernel for squares of non-uniform size and uniform weight}
We first require the following auxiliary lemma, which superficially resembles Lemma~\ref{lem:kernel:unit:tau}.
There is, however, a key difference: Lemma~\ref{lem:kernel:unit:tau} can be applied to unit squares of non-uniform weights, while the next result treats squares of uniform weight and non-uniform size.
This is reflected by the goal of the greedy choice in the proof.

\begin{lemma} \label{lem:kernel:arb:tau}
Let $k$ be a non-negative integer, let $\delta \in (0,1)$, and let $\Rr$ be a family of squares of uniform weight, each of side length at least $M$ for some positive real $M$.
Then one can in polynomial time construct a $(\delta M/2)$-distant subfamily $\Qq\subseteq \Rr$ together with a $(\delta M/2)$-close mapping $\clmap\colon \Rr\to \Qq$.
\end{lemma}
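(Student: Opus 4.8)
The plan is to mimic the greedy construction of Lemma~\ref{lem:kernel:unit:tau}, but with a subtlety: since the squares in $\Rr$ have non-uniform sizes and uniform weights, the greedy choice should prioritize \emph{small} squares rather than heavy ones. First I would repeatedly pick a square $R_j \in \Rr$ of \emph{minimum side length} $g_j$, add it to $\Qq$, and then for every square $R_i \in \Rr$ with $\|c_i - c_j\|_\infty < \delta M/2$ remove $R_i$ from $\Rr$ and set $\clmap(R_i) = R_j$; iterate until $\Rr$ is empty. Each step removes at least $R_j$ itself (since $\|c_j - c_j\|_\infty = 0 < \delta M /2$), so the procedure terminates after polynomially many iterations and clearly runs in polynomial time.

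It remains to verify the two required properties. For $(\delta M/2)$-distantness of $\Qq$: whenever a square $R_j$ is added to $\Qq$, all squares (still remaining in $\Rr$) whose centers are within $\ell_\infty$-distance $\delta M/2$ of $c_j$ are deleted, so no subsequently chosen square of $\Qq$ can have its center that close to $c_j$; and squares added earlier were chosen before $R_j$, so the same argument applied to the earlier one shows $R_j$ was still present at that time only because it was far from it. Hence any two distinct members of $\Qq$ have centers at $\ell_\infty$-distance at least $\delta M/2$. For $(\delta M/2)$-closeness of $\clmap$: fix $R_i$ with $R_j = \clmap(R_i)$. The condition $\|c_i - c_j\|_\infty \le \delta M/2$ holds by construction (indeed strict inequality holds). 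The weight condition $w_j \ge w_i$ is immediate since all weights are equal. For the size condition $g_j \le g_i$: at the moment $R_j$ was selected, both $R_i$ and $R_j$ were still in $\Rr$ (as $R_i$ is removed exactly in this step), and $R_j$ was chosen to have minimum side length among all squares then remaining, so $g_j \le g_i$.

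This establishes all requirements of the statement. The argument is essentially routine; the only point that deserves care is making explicit why prioritizing minimum side length (rather than maximum weight, as in Lemma~\ref{lem:kernel:unit:tau}) is the correct greedy rule here — it is exactly what guarantees $g_j \le g_i$ so that later, via Lemma~\ref{lem:close} and Corollary~\ref{cor:mapping}, each $\delta$-shrunk image square fits inside its preimage and an independent set in $\Rr$ projects to an independent set of no smaller weight in $\Qq^{-\delta}$. I do not anticipate any genuine obstacle beyond bookkeeping the order in which squares are processed.
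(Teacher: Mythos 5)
Your proposal matches the paper's proof: both run the same greedy procedure, iteratively selecting a square of minimum side length, adding it to $\Qq$, and mapping to it (and deleting) all squares whose centers lie within $\ell_\infty$-distance $\delta M/2$. Your write-up simply spells out in more detail the verifications that the paper leaves to the reader ("by the choice of $R_j$ and the fact that all weights are equal").
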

\begin{proof}
The proof is similar to that of Lemma~\ref{lem:kernel:unit:tau}, except that we use the side length of a square in order to create a $(\delta M/2)$-distant subfamily, rather than the weight.
To construct $\Qq$ and $\clmap$, we iteratively perform the following procedure. Take a square $R_j \in \Rr$ of minimum side length. Add $R_j$ to $\Qq$. For any square $R_i \in \Rr$ for which $||c_i - c_j||_\infty < (\delta M/2)$, remove $R_i$ from $\Rr$ and set $\clmap(R_i) = R_j$. Iterate this procedure until $\Rr$ becomes empty. Again, it is clear from the construction that $\Qq$ constructed in this manner is $(\delta M/2)$-distant and $\clmap\colon \Rr\to \Qq$ is $(\delta M/2)$-close; this is by the choice of $R_j$ and the fact that all weight are equal. Moreover, the construction can be performed in polynomial time.
\end{proof}
This lemma by itself leads to the following generalization of Theorem~\ref{thm:kernel:unit:unit}, which
yields a kernel for instances in which the sizes of the input squares are in a bounded range. In a nutshell, we use Lemma~\ref{lem:kernel:arb:tau} and then transfer an argument made by Alber and Fiala~\cite[Corollary~3.1]{AlberF04} to kernelize $\lambda$-distant families of disks of bounded radius to our setting, just like Theorem~\ref{thm:kernel:unit:unit}.

\begin{theorem} \label{thm:kernel:arb:bounded}
Let $k$ be a non-negative integer, let $\delta \in (0,1)$, let $M_1,M_2$ be real numbers such that $1 \leq M_1 \leq M_2$, and let $\Rr$ be a family of squares of uniform weight, each of side length between $M_1$ and $M_2$. Then there is a kernel for $\Rr$ and $\delta$ of size at most $16k/\delta^2\cdot (M_2/M_1)^2$.
\end{theorem}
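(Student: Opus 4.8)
The plan is to mimic the proof of Theorem~\ref{thm:kernel:unit:unit}, but with the roles of ``unit side length'' and ``side length at least $M_1$'' exchanged via Lemma~\ref{lem:kernel:arb:tau}. First I would apply the algorithm of Lemma~\ref{lem:kernel:arb:tau} with $M=M_1$ to obtain a $(\delta M_1/2)$-distant subfamily $\Qq\subseteq\Rr$ together with a $(\delta M_1/2)$-close mapping $\clmap\colon\Rr\to\Qq$. Since every square has side length at least $M_1$, Corollary~\ref{cor:mapping} (invoked with $M=M_1$) applies and gives $\OPT_k(\Qq^{-\delta})\geq\OPT_k(\Rr)$. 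So $\Qq$ already projects solutions correctly; the only remaining task is to bound $|\Qq|$ by $16k/\delta^2\cdot(M_2/M_1)^2$, and for that I will look at the intersection graph $G$ of $\Qq^{-\delta}$ and split into two cases exactly as in Theorem~\ref{thm:kernel:unit:unit}.

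The first case is $|V(G)|>k\cdot\md(G)$: then Lemma~\ref{lem:kernel:G} (all weights are equal, so we may take $w\equiv 1$ after scaling) produces in polynomial time an independent set $I$ in $G$ of size $k=\OPT_k(G)$. Pulling $I$ back to the corresponding squares $\Ss\subseteq\Qq\subseteq\Rr$ gives $\Ss^{-\delta}$ independent and $|\Ss|=k$, $\weight(\Ss)=\OPT_k(\Qq^{-\delta})\geq\OPT_k(\Rr)$, so $\Ss$ itself is a kernel of size $k\leq 16k/\delta^2\cdot(M_2/M_1)^2$. The second case is $|V(G)|\leq k\cdot\md(G)$, and here it suffices to show $\md(G)\leq 16/\delta^2\cdot(M_2/M_1)^2$, since then $\Qq$ is the desired kernel (it is a subfamily of $\Rr$ with $\OPT_k(\Qq^{-\delta})\geq\OPT_k(\Rr)$).

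To bound $\md(G)$ I would fix a square $R_i^{-\delta}\in\Qq^{-\delta}$ and count how many $R_j^{-\delta}\in\Qq^{-\delta}$ overlap it (including itself). Since $g_i\leq M_2$ and $g_j\leq M_2$, if $R_j^{-\delta}$ overlaps $R_i^{-\delta}$ then $c_j$ lies in a square of side length at most $2M_2$ centered at $c_i$ (the shrunk squares have half-sidelength less than $M_2/2$, so their centers are within $M_2$ in $\ell_\infty$). On the other hand, $\Qq$ is $(\delta M_1/2)$-distant, hence so is $\Qq^{-\delta}$, and thus the open $\ell_\infty$-squares of side length $\delta M_1/2$ centered at the centers of the members of $\Qq^{-\delta}$ are pairwise disjoint; all those whose center lies in the side-$2M_2$ square around $c_i$ are contained in a square of side length $2M_2+\delta M_1/2\leq 4M_2$ (using $\delta M_1\leq M_2$, which holds as $\delta<1\leq M_1\leq M_2$). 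A simple area comparison then gives at most $(4M_2)^2/(\delta M_1/2)^2=64/\delta^2\cdot(M_2/M_1)^2$ such squares, which is a factor $4$ off from the claimed bound; I would tighten the constant by the same careful argument as in Theorem~\ref{thm:kernel:unit:unit} — namely using that a square overlapping $R_i^{-\delta}$ has its center within a square of side length $2M_2-2\delta\,(\text{something})$ rather than $2M_2$, and that the little disjoint squares together fit in a square of side length only $2M_2-\delta M_1$ (not $4M_2$) once one is slightly more precise — to bring the count down to $16/\delta^2\cdot(M_2/M_1)^2$. The main obstacle is just getting this constant exactly right while the two square-families live at genuinely different scales $M_1$ and $M_2$; the combinatorial content is otherwise identical to Theorem~\ref{thm:kernel:unit:unit}. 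Finally, finding $G$, running Lemma~\ref{lem:kernel:arb:tau}, and testing the case distinction are all polynomial-time, so the whole kernelization is polynomial.
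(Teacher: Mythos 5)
Your proposal is correct and follows essentially the same route as the paper: apply Lemma~\ref{lem:kernel:arb:tau} with $M=M_1$, invoke Corollary~\ref{cor:mapping}, and then repeat the two-case argument of Theorem~\ref{thm:kernel:unit:unit} with a rescaled area bound for $\md(G)$. Your tightening sketch (centers lie in a square of side $2M_2-2\delta M_1$, the $(\delta M_1/2)$-side disjoint boxes then fit in a square of side at most $2M_2-\delta M_1\le 2M_2$, giving $(2M_2)^2/(\delta M_1/2)^2=16/\delta^2\cdot(M_2/M_1)^2$) is exactly the computation the paper performs, so the initial ``factor of $4$ off'' calculation you flag is already repaired within your own write-up.
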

\begin{proof}
Apply Lemma~\ref{lem:kernel:arb:tau} to $\Rr$ and $M=M_1$, thereby obtaining a $(\delta M_1/2)$-distant subfamily $\Qq$ and a $(\delta M_1/2)$-close mapping $\clmap\colon \Rr\to \Qq$; again, 
$\OPT_k(\Qq^{-\delta})\geq \OPT_k(\Rr)$ by Corollary~\ref{cor:mapping}. Let $G$ be the intersection graph of $\Qq^{-\delta}$. We proceed identically as in the proof of Theorem~\ref{thm:kernel:unit:unit}, except for modifying the argumentation leading to the upper bound on $\md(G)$. Namely, observe that any square $R^{-\delta}_j \in \Qq^{-\delta}$ that overlaps $R^{-\delta}_i$ has its center inside a square of side length $2M_2-2\delta M_1$ centered on $c_i$. Since $\Qq$ is $M_1 \cdot (\delta/2)$-distant, a similar area bound as in Theorem~\ref{thm:kernel:unit:unit} shows that at most $(4M_2/(M_1\delta))^2$ squares of $\Qq$ overlap $R_i$.
\end{proof}
In order to extend this result to squares of arbitrary size, we need the following auxiliary lemma, which encapsulates the core combinatorial argument. It essentially states that if the sizes of two squares differ by a lot, then either one is contained in the other, or their shrunk counterparts are disjoint.

\newcommand{\bnd}{\partial}

\begin{lemma} \label{lem:kernel:arb:diff}
Let $\delta \in (0,1)$ and let $R_i, R_j$ be any two squares for which $g_i \leq g_j$. Then $R_i$ is contained in $R_j$, or $R_i^{-\delta}$ and $R_j^{-\delta}$ do not overlap, or $\frac{g_j}{g_i} < 2/\delta$.
\end{lemma}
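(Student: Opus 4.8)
The statement is purely about two axis-parallel squares, so the proof should be a short geometric case analysis. Suppose that neither of the first two alternatives holds: $R_i$ is not contained in $R_j$, and $R_i^{-\delta}$ overlaps $R_j^{-\delta}$. I want to derive that then $g_j/g_i < 2/\delta$. The plan is to work with the centers $c_i,c_j$ and side lengths $g_i\le g_j$, writing everything in the $\ell_\infty$ norm: $R_i=\{p:\|p-c_i\|_\infty<g_i/2\}$, and $R_i^{-\delta}=\{p:\|p-c_i\|_\infty<(1-\delta)g_i/2\}$, and similarly for $j$.

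\textbf{Key steps.} First, from the assumption that $R_i^{-\delta}$ and $R_j^{-\delta}$ overlap, pick a point $p$ in the intersection; then by the triangle inequality
\[
\|c_i-c_j\|_\infty \le \|c_i-p\|_\infty + \|p-c_j\|_\infty < \tfrac{(1-\delta)g_i}{2} + \tfrac{(1-\delta)g_j}{2} \le (1-\delta)g_j,
\]
so in particular $\|c_i-c_j\|_\infty < g_j/2$. Second, from the assumption that $R_i\not\subseteq R_j$, there is a point $q\in R_i\setminus R_j$; since $q\in R_i$ we have $\|q-c_i\|_\infty<g_i/2$, and since $q\notin R_j$ (which is open) we have $\|q-c_j\|_\infty\ge g_j/2$. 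Combining these,
\[
\tfrac{g_j}{2} \le \|q-c_j\|_\infty \le \|q-c_i\|_\infty + \|c_i-c_j\|_\infty < \tfrac{g_i}{2} + \|c_i-c_j\|_\infty,
\]
hence $\|c_i-c_j\|_\infty > (g_j-g_i)/2$. Plugging this lower bound on $\|c_i-c_j\|_\infty$ into the upper bound from the first step gives $(g_j-g_i)/2 < (1-\delta)g_j$, i.e. $g_j - g_i < 2(1-\delta)g_j$, which rearranges to $g_i > (2\delta-1)g_j$. If $\delta\le 1/2$ this is vacuous and a cleaner bound is needed; the slicker route is to instead combine $(g_j-g_i)/2 < \|c_i-c_j\|_\infty < (1-\delta)g_j$ more carefully, or simply to observe that the first-step bound already gives $\|c_i-c_j\|_\infty < (1-\delta)g_j$ while the non-containment argument, if we also use that $R_i^{-\delta}$ meets $R_j^{-\delta}$ so that some point of $R_i^{-\delta}$ lies within $(1-\delta)g_j/2$ of $c_j$, forces the ``gap'' $g_j/2 - \|c_i-c_j\|_\infty$ to be less than $g_i/2$; then one wants instead a lower bound on $\|c_i - c_j\|$ of the form $\ge g_j/2 - (1-\delta)g_i/2 \cdot(\text{something})$. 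The right bookkeeping is: non-containment of $R_i^{-\delta}$-vs-$R_j$ flavored argument — actually use that $R_i^{-\delta}\cap R_j^{-\delta}\neq\emptyset$ to get a point of $R_i^{-\delta}$ near $c_j$, and non-containment $R_i\not\subseteq R_j$ to get a point of $R_i$ far from $c_j$; since $R_i^{-\delta}$ is concentric with $R_i$ and only a factor $(1-\delta)$ smaller, the distance from $c_j$ can swing by at most $g_i$ (the diameter-difference of the two coordinates within $R_i$), so $g_j/2 < (1-\delta)g_j/2 + g_i$, giving $\delta g_j/2 < g_i$, i.e. $g_j/g_i < 2/\delta$, as desired.

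\textbf{Main obstacle.} The only delicate point is making the second-to-last inequality tight enough: one must be careful to squeeze out the factor $\delta$ rather than $(1-\delta)$ or $2(1-\delta)$, which means the right pair of witness points must be chosen — one witness from $R_i^{-\delta}\cap R_j^{-\delta}$ certifying $\|c_i-c_j\|_\infty$ is \emph{small} (at most $(1-\delta)g_j/2$ plus a term controlled by $g_i$), and one witness from $R_i\setminus R_j$ certifying $\|c_i-c_j\|_\infty$ is \emph{large} (at least $g_j/2$ minus $g_i/2$). Subtracting, $g_j/2 - g_i/2 \le \|c_i-c_j\|_\infty \le (1-\delta)g_i/2 + (1-\delta)g_j/2$ would give $g_j - g_i \le (1-\delta)(g_i+g_j)$, hence $\delta g_j \le (2-\delta)g_i \le 2g_i$, so $g_j/g_i \le 2/\delta$, and tracking the strictness of the open-square inequalities upgrades this to a strict inequality. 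That is the clean argument I would write out; everything else is routine triangle-inequality manipulation in the $\ell_\infty$ norm.
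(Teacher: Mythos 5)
Your proposal is correct, though the middle of the write-up wanders through a dead end before landing on the right calculation; only the final version (lower bound $\|c_i-c_j\|_\infty > (g_j-g_i)/2$ from $q\in R_i\setminus R_j$, upper bound $\|c_i-c_j\|_\infty < (1-\delta)(g_i+g_j)/2$ from $p\in R_i^{-\delta}\cap R_j^{-\delta}$, then $\delta g_j < (2-\delta)g_i$) closes cleanly. This is a genuinely different route from the paper. The paper does not work with the centers at all: it first argues that $R_i$ and $R_j$ overlap and, since $R_i$ has the smaller side and is not contained in $R_j$, that their boundaries $\partial R_i$ and $\partial R_j$ must cross, and likewise that $\partial R_i^{-\delta}$ and $\partial R_j^{-\delta}$ must cross; it then picks $x\in\partial R_i\cap\partial R_j$ and $y\in\partial R_i^{-\delta}\cap\partial R_j^{-\delta}$ and sandwiches $\|x-y\|_\infty$ between $\delta g_j/2$ (the $\ell_\infty$ gap between $\partial R_j$ and $\partial R_j^{-\delta}$) and $g_i$ (both points lie in the closure of $R_i$). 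Your triangle-inequality argument through $c_j$ buys you the same inequality $\delta g_j/2 < g_i$ (your cleanest formulation, near the end of the ``main obstacle'' paragraph, gives exactly this) without needing the topological claim that boundaries of nested-size, non-nested squares intersect, which is a small but real simplification; the paper's argument is in turn slightly shorter once that claim is accepted. Both proofs boil down to the same quantitative comparison: the displacement permissible inside $R_i$ versus the shell thickness of $R_j$ created by $\delta$-shrinking.
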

\begin{proof}
Suppose that $R_i$ is not contained in $R_j$, and $R_i^{-\delta}$ and $R_j^{-\delta}$ overlap. We are going to prove that $\frac{g_j}{g_i} < 2/\delta$.
Let $\bnd R_i$ and $\bnd R_j$ denote the boundaries of $R_i$ and $R_j$ respectively.
Observe that $R_i$ and $R_j$ must overlap, since otherwise $R_i^{-\delta}$ and $R_j^{-\delta}$ would be disjoint.
As $R_i$ has not larger side length than $R_j$, and $R_i$ is not contained in $R_j$, the boundaries $\bnd R_i$ and $\bnd R_j$ must intersect.
Let $x$ be any point in $\bnd R_i\cap \bnd R_j$.

Observe now that if $R_i^{-\delta}$ was contained in $R_j^{-\delta}$, then $R_i$ would be contained in $R_j$, because $g_i\leq g_j$; this is contradicts the assumptions.
Since $R_i^{-\delta}$ and $R_j^{-\delta}$ overlap, we conclude that their boundaries $\bnd R_i^{-\delta}$ and $\bnd R_j^{-\delta}$ also intersect.
Let $y$ be any point in $\bnd R_i^{-\delta}\cap \bnd R_j^{-\delta}$.

We now examine $||x-y||_\infty$, the $\ell_\infty$-distance between $x$ and $y$.
On the one hand, we have $x\in \bnd R_i$ and $y\in \bnd R^{-\delta}_i\subseteq R_i$. Hence,
\begin{equation}\label{eq:upper}
||x-y||_\infty\leq g_i - \frac{\delta}{2} \leq g_i,
\end{equation}
by noting that the $\ell_\infty$-distance between any point of $\bnd R_i$ and any point of $\bnd R^{-\delta}_i$ is at most $g_i - \frac{\delta}{2} \leq g_i$.

On the other hand, $x\in \bnd R_j$ and $y\in \bnd R^{-\delta}_j$. Observe that each point of $\bnd R^{-\delta}_j$ is at $\ell_\infty$-distance at least $\delta/2\cdot g_j$ from any point of $\bnd R_j$. Hence,
\begin{equation}\label{eq:lower}
\delta/2\cdot g_j\leq ||x-y||_\infty.
\end{equation}
By combining (\ref{eq:upper}) and (\ref{eq:lower}), we conclude that $\frac{g_j}{g_i} < 2/\delta$.
\end{proof}
We can now give the kernel for squares of arbitrary size and uniform weight.

\begin{theorem}\label{thm:kernel:arb-sizes}
Let $k$ be a non-negative integer, let $\delta \in (0,1)$, and let $\Rr$ be a family of squares of uniform weight and non-uniform size. Then there is a kernel for $\Rr$ and $\delta$ of size 
$\Oh( k^2\cdot \frac{\log (1/\delta)}{\delta^3})$.
\end{theorem}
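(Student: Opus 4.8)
The plan is to combine the bounded-range kernel (Theorem~\ref{thm:kernel:arb:bounded}) with the dichotomy from Lemma~\ref{lem:kernel:arb:diff}, organizing the input squares by orders of magnitude of their side lengths. First I would discard the easy reductions: if some square $R_i$ is contained in another square $R_j$, then removing $R_j$ does not decrease $\OPT_k$, so we may assume no square contains another. Now sort the distinct side lengths and group them into \emph{size classes}: a square $R_i$ is in class $C_m$ if $(2/\delta)^{m}\le g_i < (2/\delta)^{m+1}$ (after rescaling so the smallest side length is~$1$). Apply Lemma~\ref{lem:kernel:arb:diff}: if $R_i,R_j$ lie in size classes that differ by at least two, i.e. $g_j/g_i\ge (2/\delta)^2 > 2/\delta$, then since neither contains the other, $R_i^{-\delta}$ and $R_j^{-\delta}$ are disjoint. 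In particular, squares from size classes $C_m$ with $m$ in distinct residue classes modulo, say, $2$ (or more safely grouping classes $3$ apart) already shrink to pairwise-disjoint families.

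The key combinatorial observation is then: if the squares occupy at least $3k$ (or more precisely, enough) size classes that are pairwise ``far'', we can directly produce an independent set of size~$k$ by picking one square greedily from each of $k$ such classes — their shrunk counterparts are automatically non-overlapping — and output that trivial $k$-element kernel. Otherwise, the number of size classes actually used is $\Oh(k)$. Within a bounded number of consecutive size classes the ratio $M_2/M_1$ is bounded by a constant power of $2/\delta$, so we can invoke Theorem~\ref{thm:kernel:arb:bounded} (with $M_1=(2/\delta)^{m}$, $M_2=(2/\delta)^{m+O(1)}$, rescaled so $M_1\ge 1$) to kernelize each group of consecutive classes down to $\Oh(k/\delta^2\cdot (2/\delta)^{\Oh(1)}) = \Oh(k/\delta^{\Oh(1)})$ squares. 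Summing over the $\Oh(k)$ groups gives $\Oh(k^2/\delta^{\Oh(1)})$ squares.

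To sharpen the $\delta$ dependence to $\frac{\log(1/\delta)}{\delta^3}$ as claimed, I would be more careful about the grouping. Rather than grouping a constant number of consecutive size classes, note that Theorem~\ref{thm:kernel:arb:bounded} gives a kernel of size $16k/\delta^2\cdot(M_2/M_1)^2$, so to keep this at $\Oh(k/\delta^3)$ per group I want $(M_2/M_1)^2 = \Oh(1/\delta)$, i.e. $M_2/M_1 = \Oh(\delta^{-1/2})$. Since consecutive size classes already span a factor $2/\delta$, this means each group should consist of a single size class — but a single class $C_m$ spans ratio up to $2/\delta$, giving $(M_2/M_1)^2 \le 4/\delta^2$ and a per-class kernel of $\Oh(k/\delta^4)$, which is too large. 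The fix is to use a finer partition of side lengths: instead of classes of ratio $2/\delta$, use sub-intervals of ratio $(2/\delta)^{1/2}$ or of ratio $2$, apply the bounded-range kernel on each, and control the number of sub-intervals. The number of \emph{nonempty} size classes of ratio $2/\delta$ is at most $\Oh(k)$ by the argument above (more than that forces a direct size-$k$ solution via Lemma~\ref{lem:kernel:arb:diff}); each such class contains $\Oh(\log(1/\delta))$ ratio-$2$ sub-intervals; and each ratio-$2$ sub-interval kernelizes to $16k/\delta^2\cdot 4 = \Oh(k/\delta^2)$ squares by Theorem~\ref{thm:kernel:arb:bounded}. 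Multiplying, $\Oh(k)\cdot\Oh(\log(1/\delta))\cdot\Oh(k/\delta^2) = \Oh(k^2\log(1/\delta)/\delta^2)$; pushing the sub-interval ratio down to absorb one more factor of $1/\delta$ (so that Lemma~\ref{lem:kernel:arb:diff}'s disjointness within-and-across works cleanly with a $2/\delta$ buffer) yields the stated $\Oh(k^2\log(1/\delta)/\delta^3)$.

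Finally I would check that all the reductions compose correctly in the sense of Definition~\ref{def:kernelization}: removing a contained square preserves $\OPT_k$; applying Theorem~\ref{thm:kernel:arb:bounded} to each size-group preserves $\OPT_k$ \emph{within that group} after $\delta$-shrinking; and — the point that needs the dichotomy — because squares from different (far-apart) groups shrink to disjoint sets, an optimum solution of the whole family splits across the groups without interference, so the union of the per-group kernels retains a global optimum after $\delta$-shrinking. Everything is polynomial-time by inspection. The main obstacle I anticipate is this bookkeeping of the constants in the grouping: getting exactly $\delta^{-3}$ rather than $\delta^{-4}$ requires choosing the size-class width and the sub-interval width so that (i) Lemma~\ref{lem:kernel:arb:diff} still guarantees cross-group disjointness after shrinking, and (ii) the bounded-range ratio $M_2/M_1$ inside each sub-interval contributes only $\Oh(1/\delta)$ rather than $\Oh(1/\delta^2)$ to the count — this is a delicate but routine optimization of the parameters.
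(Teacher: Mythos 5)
Your high-level skeleton (discard contained squares, classify by side length, invoke Lemma~\ref{lem:kernel:arb:diff} to cap the number of nonempty classes by $\Oh(k)$ or else output a direct size-$k$ solution, and kernelize each class via Theorem~\ref{thm:kernel:arb:bounded}) is exactly the paper's. The paper simply uses levels of ratio $1+\delta$ throughout, so the number of active levels is bounded by $\gamma k$ with $\gamma = 1 + \log_{1+\delta}(2/\delta) = \Oh(\log(1/\delta)/\delta)$, and each level contributes $\Oh(k/\delta^2)$ squares, giving $\Oh(k^2\log(1/\delta)/\delta^3)$ in one step with no two-tiered grouping.

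The genuine gap in your argument is the composition step, and it matters. You write that ``because squares from different (far-apart) groups shrink to disjoint sets, an optimum solution of the whole family splits across the groups without interference, so the union of the per-group kernels retains a global optimum.'' This is false for the groups you actually kernelize over: adjacent ratio-$2$ sub-intervals (and even adjacent ratio-$(2/\delta)$ size classes) are \emph{not} far apart in the sense of Lemma~\ref{lem:kernel:arb:diff} — a square in class $C_m$ and a square in class $C_{m+1}$ can have side-length ratio arbitrarily close to $1$, and their $\delta$-shrinkings can certainly overlap. So the optimum does not decompose group-by-group, and you cannot argue by independence of subproblems. The correct reason the union of per-group kernels works has nothing to do with cross-group disjointness: it is the \emph{inclusion} property of the close mappings (Lemma~\ref{lem:close}). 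Each per-group kernelization produces a $(\delta M_1/2)$-close map $f_\ell$ with the guarantee $f_\ell(R_i)^{-\delta}\subseteq R_i$. Given any globally independent $\Ss\subseteq\Rr$, mapping each $R_i\in\Ss$ through the $f_\ell$ of its group and then $\delta$-shrinking produces squares contained inside the pairwise-disjoint originals, hence still pairwise disjoint, and injectivity (Corollary~\ref{cor:mapping}) preserves cardinality. That is what lets you sum the per-group kernels and conclude $\OPT_k(\Qq^{-\delta})\geq\OPT_k(\Rr)$. Without identifying this inclusion property, the composition step of your proof does not go through.

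A secondary confusion: your own count $\Oh(k)\cdot\Oh(\log(1/\delta))\cdot\Oh(k/\delta^2)$ gives $\Oh(k^2\log(1/\delta)/\delta^2)$, which is a \emph{stronger} bound than the theorem's $\delta^{-3}$, not a weaker one, so there is nothing to ``push down'' — the final paragraph where you try to argue your way from $\delta^{-2}$ to $\delta^{-3}$ has the direction of the inequality reversed. (The two-tiered scheme you propose would, once the composition step is repaired, legitimately beat the paper's exponent; the paper opts for a one-tiered $(1+\delta)$-level scheme for simplicity, accepting $\delta^{-3}$.)
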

\begin{proof}
The kernelization algorithm consists of several stages. In the first stage, we reduce $\Rr$ following a simple rule. If $\Rr$ contains two squares $R_i, R_j$ such that $R_i$ is contained in $R_j$, then in any subfamily $\Ss$ of $\Rr$ that is independent we can replace $R_j$ by $R_i$; this replacement will not change the total weight, because the weights are uniform. Hence, removing $R_j$ from $\Rr$ yields a family with the same value of $\OPT_k(\cdot)$. We apply this rule exhaustively. 
The resulting family, which we also call $\Rr$ by abuse of notation, has the same value of $\OPT_k(\cdot)$ as the original one, and thus the rule is safe.
That is, from now on we may assume that for any two squares in $\Rr$, neither one square is contained in the other.

In the second stage, we partition the squares into levels according to their side length. For a non-negative integer $\ell$, let 
$$\Rr_{\ell} = \{R_i \in \Rr \mid (1+\delta)^\ell \leq g_i < (1+\delta)^{\ell+1}\}$$
be the family of squares of \emph{level} $\ell$. We call a level $\ell$ \emph{active} if $\Rr_\ell \not= \emptyset$.
Let $\gamma = 1 + \log_{1+\delta} (2/\delta)$. Thus, $\gamma$ is essentially the $\log_{1+\delta}$ of the constant of Lemma~\ref{lem:kernel:arb:diff}, and note that $\gamma = \Oh(\log_{1+\delta} (1/\delta))=\Oh(1/\delta\cdot \log 1/\delta)$.

\begin{claim} \label{claim:gamma}
If $\Rr$ defines at least $\gamma \cdot k$ active levels, then one can find in polynomial time a subfamily $\Pp$ of $\Rr$ of size $k$ such that $\Pp^{-\delta}$ is independent.
\end{claim}
\begin{proof}
By assumption, there are active levels $\ell_1,\ldots,\ell_k$ such that $\ell_{j} \geq \ell_i + \gamma$ for all $i,j\in\{1,\ldots,k\}$, $i<j$. Then 
$$(1+\delta)^{\ell_{j}} \geq (1+\delta)^{\ell_{i} + 1 + \log_{1+\delta} (2/\delta)} \geq (1+\delta)^{\ell_i+1} \cdot (2/\delta).$$
Then for any two squares $R_i,R_j$ in levels $\ell_i$ and $\ell_j$ respectively, where $i<j$, we have $\frac{g_j}{g_i} \geq 2/\delta$; this is by the definition of the levels. By Lemma~\ref{lem:kernel:arb:diff}, we infer that $R_i^{-\delta}$ and $R_j^{-\delta}$ have to be disjoint, because $R_i$ is not contained in $R_j$ by the assumption on $\Rr$ after the first stage.

For any $i \in\{1,\ldots,k\}$, let $R_i$ be an arbitrary square of level $\ell_i$; this is properly defined because $\ell_i$ is an active level. The preceding argument shows that if $\Pp = \{R_i \colon i \in \{1,\ldots,k\}\}$, then $\Pp^{-\delta}$ is an independent set of size and weight $k$. 
It now suffices to observe that $\Pp$ can be constructed in polynomial time by determining the level of each square, and counting the number of active levels. Since only the first $\log_{1+\delta} \max_{R_i \in \Rr} \{x_i^{(2)},y_i^{(2)}\}$ levels may be active, this is indeed polynomial in the size of the input.
\cqed\end{proof}

In the third stage, we consider two cases, depending on the number of active levels defined by $\Rr$.
Suppose first that $\Rr$ defines at least $\gamma \cdot k$ active levels. Apply Claim~\ref{claim:gamma}, thereby constructing a subfamily $\Pp\subseteq \Rr$. Then $\OPT_k(\Pp^{-\delta}) = k \geq \OPT_k(\Rr)$ by the properties of $\Pp$, following Claim~\ref{claim:gamma} and the fact that all squares have uniform weight. Hence, $\Pp$ satisfies the conditions for a kernel for $\Rr$ and $\delta$ of size $k$, and the algorithm can output $\Pp$.

Suppose then that $\Rr$ defines less than $\gamma \cdot k$ active levels. For each active level~$\ell$, we apply the kernel of Theorem~\ref{thm:kernel:arb:bounded} to $\Rr_\ell$ with parameters $k$, $M_1=(1+\delta)^\ell$, and $M_2=(1+\delta)^{\ell+1}$. 
By inspection of the proof of Theorem~\ref{thm:kernel:arb:bounded} (and Theorem~\ref{thm:kernel:unit:unit}), this will either yield an independent subfamily $\Ss_\ell$ of $\Rr_\ell$ of size~$k$,
or provide a $(\delta M_1/2)$-distant subfamily $\Qq_\ell\subseteq \Rr_\ell$ of size at most $16k(1+\delta)^2/\delta^2$, together with a $(\delta M_1/2)$-close mapping $f_\ell\colon \Rr_\ell\to \Qq_\ell$.
If the first case happens for some layer $\ell$, then $|\Ss_\ell|=w(\Ss_\ell)=k=\OPT_k(\Rr)$. Hence, $\Ss_\ell$ can be output as a kernel for $\Rr$ and $\delta$ of size $k$.
If the first case never happens, that is, the second case happens for all active levels $\ell$, then
define $\Qq$ to be the union of $\Qq_\ell$ through all active levels $\ell$. Then
$$|\Qq|\leq (\gamma\cdot k)\cdot 16k(1+\delta)^2/\delta^2=\Oh\left(k^2\cdot \frac{\log (1/\delta)}{\delta^3}\right).$$
Moreover, $\Qq$ can be computed in polynomial time. 

Now in order to argue that $\Qq$ is a kernel for $\Rr$ and $\delta$, it suffices to show that $\OPT_k(\Qq^{-\delta})\geq \OPT_k(\Rr)$.
Take an independent subfamily $\Ss\subseteq \Rr$ with $|\Ss|=w(\Ss)=\OPT_k(\Rr)$.
Let us define $\Ss'$ to be the union of $f_\ell(\Ss\cap \Rr_\ell)$ for all active levels $\ell$.
Clearly $\Ss'\subseteq \Qq$.
By Lemma~\ref{lem:close}, $f_\ell$ maps each square $R_i\in \Ss\cap \Rr_\ell$ to a square $R_j=f_\ell(R_i)$ such that $R_j^{-\delta}\subseteq R_i$.
Consequently, since $\Ss$ is independent, we infer that $(\Ss')^{-\delta}$ is also independent and $|\Ss'|=|\Ss|$.
This means that $\OPT_k(\Qq^{-\delta})\geq |\Ss'|=|\Ss|=\OPT_k(\Rr)$.
Hence, $\Qq$ indeed can be output as a kernel for $\Rr$ and $\delta$ of size $\Oh\left(k^2\cdot \frac{\log (1/\delta)}{\delta^3}\right)$.
\end{proof}

\subsection{Bit encodings} 
The algorithms of Theorems~\ref{thm:kernel:unit:unit},~\ref{thm:kernel:unit}, and~\ref{thm:kernel:arb-sizes} can be used to reduce the search space to a subfamily of size bounded in terms of $k$ and $1/\delta$, when solving the appropriate subcases \mwisr with $\delta$-shrinking. However, in the field of kernelization, one often measures the size of a kernel by the bit size of its encoding.
Definition~\ref{def:kernelization} forbids changing the weights or the coordinates of the rectangles in the kernel. Hence, if we follow the definition, then we cannot bound the bit size of the encoding as a function of $k$ and $1/\delta$. However, after computing the kernel, we can still discuss whether it is possible to replace the weights and the coordinates with numbers with a smaller encoding, so that the relevant combinatorial properties of the instance are preserved. We now show that this essentially can be done using known results, at least for the kernels obtained in this work.

First, we address the weights. Using the recent technique of Etscheid et al.~\cite{EtscheidKMR15}, we can assume that the weight function for a kernel $\Qq$ can be encoded using $\Oh(|\Qq|^4)$ bits. Indeed, they show in \cite[Example~1]{EtscheidKMR15} that for any family~$X$, weight function $w \colon X \to \mathbb{Q}$, and $W \in \mathbb{Q}$, one can in polynomial time find a weight function $w' \colon X \to \mathbb{Q}$ and $W' \in \mathbb{Q}$ of total encoding length $\Oh(|X|^4)$ such that $w(Y) \geq W$ if and only if $w'(Y) \geq W'$ for any $Y \subseteq X$. Hence, if we transform our parameterized setting to a parameterized decision setting, where $\OPT_k$ must be at least $W$, then we can find a suitable encoding of the weight function.
Our kernels do not rely on the a-priori knowledge of $W$, and therefore, we did not describe our kernels in the decision setting and remained more general instead.

Second, we address the representation of the squares. In M{\"u}ller et al.~\cite{MullervLvL2013}, it is shown that any intersection graph of axis-parallel squares with $n$ vertices has a representation using $\Oh(n \log n)$ or $\Oh(n^2)$ bits, depending on whether the squares have uniform or non-uniform size. However, actually finding a representation is NP-hard~\cite{Breu-thesis,Yannakakis}. 

Fortunately, the result of M{\"u}ller et al.~\cite[Theorems~3 and 4]{MullervLvL2013} implies that when a representation of an intersection graph of (unit) squares is already given, then a representation with $\Oh(n \log n)$ or $\Oh(n^2)$ bits respectively can be found in polynomial time. We briefly describe the intuition. In the case of unit squares (see~\cite[Theorem~3]{MullervLvL2013}), a representation can be decomposed into two unit interval graphs by projecting on the axes, and a new representation for both these graphs can be computed in polynomial time using $\Oh(n \log n)$ bits. In the case of arbitrary squares (see~\cite[Theorem~4]{MullervLvL2013}), a representation can be transformed into a linear program, and a new representation can be found by using Cramer's rule to solve the linear program. This also shows that the NP-hardness of finding a representation is not in giving the actual numbers, but in finding the right ``ordering'' of the squares.

The above results imply that if a kernel uses a subfamily of size $f(k)$ of the squares in the input, which is the case for our kernels, then a representation of the squares in the kernel using $\Oh(f(k) \log f(k))$ or $\Oh(f^2(k))$ bits can be found in polynomial time, because the original squares imply a representation of the subfamily. An important caveat, however, is that finding such a new representation might change what squares are disjoint after shrinking by $\delta$. Hence, if $\Qq$ is the kernel, then we should rather find a new representation of $\Qq^{-\delta}$. This makes sense if we think of the original motivation of the problem, which is to reduce the problem to solving \mwisr in $\Qq^{-\delta}$ (see also the remark below Definition~\ref{def:kernelization}).

\subsection{Applications}
We now present applications of our kernelization results for the design of FPT algorithms and EPTASes for the corresponding variants of \mwisr. Recall that our kernelization results only apply to squares, and therefore, the applications in this section are significantly less general than Theorem~\ref{thm:main-approx} and Theorem~\ref{thm:main-fpt}. However, the resulting algorithms are substantially faster.

Throughout this section, for brevity we use the $\Oh_\rho(\cdot)$-notation that, for a parameter $\rho$, hides factors that depend only on $\rho$.

\paragraph*{Parameterized algorithms.}
We obtain faster parameterized algorithms for squares of uniform weight or uniform size. In particular, we will obtain so-called \emph{subexponential} algorithms, where the exponent depends sublinearly on $k$. The design of such algorithms in the planar and geometric setting is an important direction in parameterized complexity research, see e.g.~\cite{DemaineFHT05,FominLMPPS16,KleinM14,MarxP15,PilipczukPSL13,PilipczukPSL14}. We note that the exponent of Theorem~\ref{thm:main-fpt}, which applies to arbitrary rectangles, is polynomial in $k$. Hence, Theorem~\ref{thm:app-FPT1} and~\ref{thm:app-FPT2} below are a major improvement whenever all the input rectangles are squares of uniform weight or of uniform size.

We present two different and incomparable results. The first is obtained by combining our kernelization with the algorithm of Marx and Pilipczuk~\cite{MarxP15}, and for the second we similarly use the results of Alber and Fiala~\cite{AlberF04}.

\begin{theorem}\label{thm:app-FPT1}
There is an algorithm that given a weighted family $\Rr$ of $n$ axis-parallel squares of total encoding size $N$ and of uniform size or of uniform weight, 
and parameters $k$ and $\delta$, runs in time $(k/\delta)^{\Oh(\sqrt{k})}\cdot (nN)^{c}$ for some  constant $c$, and outputs a subfamily $\Ss\subseteq \Rr$ such that $|\Ss|\leq k$, $\Ss^{-\delta}$ is independent, and $\weight(\Ss)\geq \OPT_k(\Rr)$.
\end{theorem}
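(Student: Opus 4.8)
The plan is to combine the polynomial kernelization results proven earlier in this section with the subexponential algorithm of Marx and Pilipczuk~\cite{MarxP15}. First I would split into the two cases stated in the theorem. For squares of uniform weight, I would apply the kernel of Theorem~\ref{thm:kernel:arb-sizes}, which in polynomial time produces a subfamily $\Qq\subseteq \Rr$ of size $\Oh\!\left(k^2\cdot \frac{\log(1/\delta)}{\delta^3}\right)$ with $\OPT_k(\Qq^{-\delta})\geq \OPT_k(\Rr)$. For squares of uniform size (i.e. unit squares, after rescaling), I would instead apply the kernel of Theorem~\ref{thm:kernel:unit}, producing a subfamily $\Qq$ of size $\Oh(k/\delta^2)$ with the same guarantee $\OPT_k(\Qq^{-\delta})\geq \OPT_k(\Rr)$. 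In both cases the output of the kernelization is an honest subfamily of the input squares whose $\delta$-shrunk copies need to be searched for an independent set.

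Next I would run the algorithm of Marx and Pilipczuk~\cite{MarxP15} on the family $\Qq^{-\delta}$, asking for a maximum-weight independent subfamily of size at most $k$. That algorithm finds $\OPT_k$ of a family of $m$ polygons in time $m^{\Oh(\sqrt{k})}$ (times a polynomial in the bit size), and in particular it applies to our axis-parallel squares $\Qq^{-\delta}$. Plugging in $m=|\Qq|$, which is $(k/\delta)^{\Oh(1)}$ in both cases, gives running time $\big((k/\delta)^{\Oh(1)}\big)^{\Oh(\sqrt{k})}\cdot (nN)^{\Oh(1)} = (k/\delta)^{\Oh(\sqrt{k})}\cdot (nN)^{\Oh(1)}$, as required; the polynomial factor in $nN$ absorbs the cost of the kernelization step itself and the bit sizes of the coordinates (which are inherited from the input). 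The subfamily $\Ss\subseteq \Qq\subseteq \Rr$ returned this way satisfies $|\Ss|\leq k$, has $\Ss^{-\delta}$ independent, and $\weight(\Ss)=\OPT_k(\Qq^{-\delta})\geq \OPT_k(\Rr)$, which is exactly the guarantee in the statement.

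The only genuine subtlety — and the one place I would be careful — is the interface between ``kernel'' in the sense of Definition~\ref{def:kernelization} and the algorithm we feed it to. Definition~\ref{def:kernelization} delivers $\OPT_k(\Qq^{-\delta})\geq \OPT_k(\Rr)$, i.e. a one-sided guarantee, which is all we need since $\Qq^{-\delta}$ consists of honest sub-objects of the $R_i$, so any independent subfamily of $\Qq^{-\delta}$ of size $k$ projects back to a subfamily $\Ss$ of $\Rr$ with $\Ss^{-\delta}$ independent and the same weight. I would also note that the Marx--Pilipczuk algorithm is deterministic, so unlike Theorems~\ref{thm:main-fpt} and~\ref{thm:main-approx} the resulting algorithm has no failure probability; this matches the statement, which makes no probabilistic claim. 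Everything else is bookkeeping: the main obstacle is essentially non-existent here, as the theorem is a direct corollary once Theorems~\ref{thm:kernel:unit},~\ref{thm:kernel:arb-sizes}, and the cited $n^{\Oh(\sqrt{k})}$ algorithm are in hand — the work was all done in proving those kernels.
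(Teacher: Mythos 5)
Your proof is correct and is essentially the paper's proof: both use the kernelization theorems (Theorem~\ref{thm:kernel:unit} for the uniform-size case, Theorem~\ref{thm:kernel:arb-sizes} for the uniform-weight case) to shrink the instance to size $(k/\delta)^{\Oh(1)}$ and then run the $|\mathcal{D}|^{\Oh(\sqrt{k})}$-time algorithm of Marx and Pilipczuk on $\Qq^{-\delta}$. Your remark on the one-sided guarantee from Definition~\ref{def:kernelization} and on the algorithm being deterministic is a correct and useful clarification of points the paper leaves implicit.
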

\begin{proof}
We apply a result of Marx and Pilipczuk~\cite{MarxP15}. They gave an algorithm that, given a family $\mathcal{D}$ of $n$ polygons of non-uniform weight\footnote{Marx and Pilipczuk state this result explicitly for the unweighted setting (Theorem~1.5 in~\cite{MarxP15}), but the result for the weighted setting follows from the most general statement (Theorem~3.1 in~\cite{MarxP15}).} in the plane and a parameter~$k$, finds a subfamily of polygons that is independent and has size at most $k$,
and which achieves the maximum possible weight subject to these constraints. The running time of the algorithm is $|\mathcal{D}|^{\Oh(\sqrt{k})}\cdot (nN)^{\Oh(1)}$, where $N$ is the total bit size of the input.
By applying this algorithm on $\Qq^{-\delta}$, for a kernel $\Qq$ obtained by applying any of Theorem~\ref{thm:kernel:unit:unit},~\ref{thm:kernel:unit}, or~\ref{thm:kernel:arb-sizes},
we can compute $\OPT_k(\Qq^{-\delta})$ in time $|\Qq^{-\delta}|^{\Oh(\sqrt{k})}\cdot (nN)^{\Oh(1)}=(k/\delta)^{\Oh(\sqrt{k})}\cdot (nN)^{\Oh(1)}$. 
\end{proof}


\begin{theorem}\label{thm:app-FPT2}
There is an algorithm that given a weighted family $\Rr$ of $n$ axis-parallel squares of total encoding size $N$, uniform size, and non-uniform weight, 
and parameters $k$ and $\delta$, runs in time $2^{\Oh_\delta(\sqrt{k})}\cdot (nN)^{c}$ for some  constant $c$, and outputs a subfamily $\Ss\subseteq \Rr$ such that $|\Ss|\leq k$, $\Ss^{-\delta}$ is independent, and $\weight(\Ss)\geq \OPT_k(\Rr)$.
\end{theorem}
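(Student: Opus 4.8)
The plan is to follow the proof of Theorem~\ref{thm:app-FPT1}, but to feed the kernel to a subexponential exact algorithm tailored to bounded-precision unit square graphs rather than to the polygon algorithm of Marx and Pilipczuk~\cite{MarxP15}; this is exactly what removes the spurious $\log$ factor from the exponent and brings the running time down to $2^{\Oh_\delta(\sqrt k)}$. First I would apply the kernelization of Theorem~\ref{thm:kernel:unit} to $\Rr$ and $\delta$, obtaining in polynomial time a subfamily $\Qq\subseteq\Rr$ with $|\Qq|\le 64k/\delta^2$ and $\OPT_k(\Qq^{-\delta})\ge\OPT_k(\Rr)$. Recall from its construction (Lemma~\ref{lem:kernel:unit:tau} and Corollary~\ref{cor:mapping}) that $\Qq$, and hence $\Qq^{-\delta}$, is $(\delta/2)$-distant; thus $\Qq^{-\delta}$ consists of $m:=\Oh(k/\delta^2)$ axis-parallel squares of side $1-\delta<1$ whose centres are pairwise at $\ell_\infty$-distance at least $\delta/2$. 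It then suffices to compute $\OPT_k(\Qq^{-\delta})$ together with a witnessing independent subfamily $\Ss^{-\delta}\subseteq\Qq^{-\delta}$ (so $\Ss\subseteq\Qq\subseteq\Rr$), and return $\Ss$: we get $|\Ss|\le k$, $\Ss^{-\delta}$ independent, and $\weight(\Ss)=\weight(\Ss^{-\delta})=\OPT_k(\Qq^{-\delta})\ge\OPT_k(\Rr)$, as required.

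The heart of the argument is to solve parameterized \mwis on the intersection graph $G$ of $\Qq^{-\delta}$ in time $2^{\Oh_\delta(\sqrt m)}\cdot m^{\Oh(1)}=2^{\Oh_\delta(\sqrt k)}\cdot (nN)^{\Oh(1)}$. For this I would invoke the geometric separator theorem of Alber and Fiala~\cite{AlberF04}: because the centres are $(\delta/2)$-distant, after an appropriate shift some axis-parallel line is crossed by only $\Oh_\delta(\sqrt m)$ of the squares while splitting the rest roughly in half, so $G$ has balanced separators of size $\Oh_\delta(\sqrt m)$ and hence a tree decomposition of width $w=\Oh_\delta(\sqrt m)=\Oh_\delta(\sqrt k)$. (Concretely, the centres embed injectively into a $\Theta(\delta)$-spaced grid, so $G$ is a subgraph of the integer grid augmented with edges of $\ell_\infty$-range $\Oh(1/\delta)$ on $m$ vertices, whose treewidth is $\Oh_\delta(\sqrt m)$ by the planar separator theorem; this is the square analogue of Alber and Fiala's disk argument, transferred exactly as the kernelization arguments of Section~\ref{sec:kern} were.) On such a tree decomposition, the textbook dynamic programme for weighted independent set, with each bag's state augmented by a counter in $\{0,\dots,k\}$ tracking the number of selected vertices, computes $\OPT_k(G)$ and a witness in time $2^{\Oh(w)}\cdot (k+1)\cdot m^{\Oh(1)}=2^{\Oh_\delta(\sqrt k)}\cdot (nN)^{\Oh(1)}$.

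I expect the only genuine obstacle to be this separator/treewidth step: verifying that Alber and Fiala's separator, which is stated for disks, carries over to axis-parallel squares (it does, by essentially the same reasoning already used for the kernelization results), and bookkeeping the $\delta$-dependence so that every power of $1/\delta$ stays inside $\Oh_\delta(\cdot)$ rather than multiplying the exponent $\sqrt k$. Everything else — the kernelization, already proved, and the tree-decomposition dynamic programme with a cardinality counter, entirely standard — is routine.
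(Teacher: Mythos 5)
Your proposal takes essentially the same approach as the paper: kernelize via Theorem~\ref{thm:kernel:unit} to get a $(\delta/2)$-distant family of $\Oh_\delta(k)$ unit squares, then exploit this distantness to invoke an Alber--Fiala-style geometric separator theorem and solve the kernel exactly in $2^{\Oh_\delta(\sqrt k)}$ time. The paper simply cites \cite[Corollary~4.1]{AlberF04} (with a footnote on adapting to weights and using a Smith--Wormald separator), whereas you unpack the same argument into an explicit treewidth bound plus a cardinality-tracking tree-decomposition DP, which is a legitimate, slightly more detailed rendering of the same proof.
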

\begin{proof}
Let $\Qq$ be the family returned by Theorem~\ref{thm:kernel:unit} applied to $k$, $\delta$, and $\Rr$. Observe that $\Qq$ can be computed in polynomial time and that $|\Qq| = \Oh_\delta(k)$. Moreover, an inspection of the proof of Theorem~\ref{thm:kernel:unit} reveals that $\Qq$ is $(\delta/2)$-distant, and thus the same holds for $\Qq^{-\delta}$. A slight adaptation of a result of Alber and Fiala~\cite[Corollary~4.1]{AlberF04} (see also~\cite[Theorem~6.2.4]{proefschrift}) gives an algorithm that, given a family $\mathcal{D}$ of $n$ $\lambda$-distant unit squares of non-uniform weight\footnote{Alber and Fiala state this result explicitly for the unweighted setting~\cite[Corollary~4.1]{AlberF04}, but the result for the weighted setting follows in exactly the same way. Moreover, instead of their Theorem~4.2, we need to rely on a separator theorem by Smith and Wormald~\cite[Theorem~26]{SmithW98}. See also~\cite[Theorem~6.2.4]{proefschrift}.} in the plane, finds a subfamily of $\mathcal{D}$ that is independent and achieves the maximum possible weight subject to this constraint. The running time of the algorithm is $2^{\Oh_\lambda(\sqrt{|\mathcal{D}|})}\cdot (nN)^{\Oh(1)}$, where $N$ is the total bit size of the input. By applying this algorithm on $\Qq^{-\delta}$, we can compute $\OPT_k(\Qq^{-\delta})$ in time $2^{\Oh_\delta(\sqrt{k})}\cdot (nN)^{\Oh(1)}$.
\end{proof}
We can also obtain a faster algorithm for squares of uniform weight and non-uniform size by combining the results of Alber and Fiala~\cite{AlberF04} with the kernel of Theorem~\ref{thm:kernel:arb-sizes} and a later lemma (Lemma~\ref{lem:app-ply-arb}) about the ply of the objects. However, the running time obtained in this way is $2^{\Oh_{\delta}(k)}\cdot (nN)^{c}$, which is worse than what we can obtain through Theorem~\ref{thm:app-FPT1}.

\paragraph*{Approximation algorithms.}
Next, we show how to obtain faster approximation algorithms for squares of uniform weight or of uniform size. In order to prove these results, we need several auxiliary steps, and the following notion in particular. The \emph{ply} of a family of geometric objects in the two-dimensional plane is defined as the maximum number of objects in the family that contain any point in the plane. This is sometimes also called the \emph{maximum depth} of any point. This enables us to state following auxiliary lemma, which transfers a result of Alber and Fiala~\cite[Lemma~4.1]{AlberF04} from disks to squares.

\begin{lemma} \label{lem:app-ply}
Let $\lambda > 0$, let $M$ be a real number, and let $\Rr$ be a $\lambda$-distant family of squares, each of side length at most $M$. Then $\Rr$ has ply $\Oh(M^2 / \lambda^2)$.
\end{lemma}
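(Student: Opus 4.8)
The plan is to prove this by a standard area‑packing argument carried out in the $\ell_\infty$ metric, exploiting the fact that an axis‑parallel square of side length $s$ centered at a point $c$ is exactly the $\ell_\infty$‑ball $\{x \colon ||x-c||_\infty < s/2\}$. Fix an arbitrary point $p$ in the plane and let $\Rr_p \subseteq \Rr$ be the set of squares of $\Rr$ that contain $p$; since $p$ is arbitrary, it suffices to show that $|\Rr_p| = \Oh(M^2/\lambda^2)$.

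First I would observe that the centers of the squares in $\Rr_p$ are confined to a small region: if $R_i \in \Rr_p$ has center $c_i$ and side length $g_i \le M$, then $p \in R_i$ means $||p - c_i||_\infty < g_i/2 \le M/2$, so every such $c_i$ lies in the open square $S$ of side length $M$ centered at $p$. Next, around each center $c_i$ I would place the open square $B_i$ of side length $\lambda$, i.e.\ the open $\ell_\infty$‑ball of radius $\lambda/2$ centered at $c_i$. Since $\Rr$ is $\lambda$‑distant we have $||c_i - c_j||_\infty \ge \lambda$ for distinct $R_i, R_j \in \Rr_p$, and the triangle inequality for $||\cdot||_\infty$ then forces $B_i \cap B_j = \emptyset$. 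Hence the squares $\{B_i \colon R_i \in \Rr_p\}$ are pairwise disjoint, each of area $\lambda^2$, and each is contained in the open square $S'$ of side length $M + \lambda$ centered at $p$ (obtained by enlarging $S$ by $\lambda/2$ on every side, since $||x-p||_\infty \le ||x-c_i||_\infty + ||c_i-p||_\infty < (M+\lambda)/2$ for $x \in B_i$). Comparing areas yields $|\Rr_p|\cdot \lambda^2 \le (M+\lambda)^2$, that is, $|\Rr_p| \le (1 + M/\lambda)^2$.

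Finally I would remark that $(1+M/\lambda)^2 = \Oh(M^2/\lambda^2)$ in the regime relevant to the applications, where $M \ge \lambda$ (this holds for the $\delta$‑shrunk squares we feed into the lemma, whose centers stay $\Omega(\delta M)$‑apart while the side lengths are within a constant factor of one another); more conservatively, the bound is always $\Oh(1 + M^2/\lambda^2)$. As $p$ was arbitrary, every point of the plane lies in at most $\Oh(M^2/\lambda^2)$ squares of $\Rr$, which is exactly the asserted bound on the ply. I do not expect a genuine obstacle here: the argument is routine, and the only point requiring a little care is that it should be set up in the $\ell_\infty$ metric rather than the Euclidean one — which in fact simplifies matters, since axis‑parallel squares are precisely $\ell_\infty$‑balls and so no constant is lost in passing between ``square'' and ``ball''.
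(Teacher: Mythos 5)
Your proof is correct and uses essentially the same $\ell_\infty$ area-packing argument that the paper invokes implicitly, by reference to the bounds in Theorems~\ref{thm:kernel:arb:bounded} and~\ref{thm:kernel:unit:unit} and to Alber and Fiala~\cite[Lemma~4.1]{AlberF04}. Your explicit write-up, including the careful observation that the raw bound is $(1+M/\lambda)^2 = \Oh(1 + M^2/\lambda^2)$ and equals $\Oh(M^2/\lambda^2)$ in the application regime $M \geq \lambda$, is if anything slightly more thorough than the paper's one-line deferral.
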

\begin{proof}
This follows using essentially the same area bound as in Theorem~\ref{thm:kernel:arb:bounded} and~\ref{thm:kernel:unit:unit}; see also the work of Alber and Fiala~\cite[Lemma~4.1]{AlberF04}.
\end{proof}
Now we can prove the following lemma.


\begin{lemma} \label{lem:app-ply-arb}
Let $k$ be a non-negative integer, let $\delta \in (0,1)$, let $\Rr$ be a family of squares of uniform weight and non-uniform size, and let $\Qq \subseteq \Rr$ be the family returned by Theorem~\ref{thm:kernel:arb-sizes} when applied to $k$, $\delta$, and $\Rr$. Then the ply of $\Qq^{-\delta}$ is $\Oh_\delta(1)$.
\end{lemma}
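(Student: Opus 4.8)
The plan is to follow the case analysis in the proof of Theorem~\ref{thm:kernel:arb-sizes} and bound the ply separately in each possible output. In the easy cases, the kernel returned is a subfamily (the family $\Pp$ produced by Claim~\ref{claim:gamma}, or one of the families $\Ss_\ell$ coming from a per-level application of Theorem~\ref{thm:kernel:arb:bounded}) whose $\delta$-shrinking is independent; then $\Qq^{-\delta}$ is a family of pairwise disjoint open squares and hence has ply at most $1$. So the only case that requires work is when $\Qq = \bigcup_{\ell \text{ active}} \Qq_\ell$, where the union ranges over fewer than $\gamma k$ active levels and each $\Qq_\ell \subseteq \Rr_\ell$ is $(\delta M_1/2)$-distant with $M_1 = (1+\delta)^\ell$.

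In that case I would fix an arbitrary point $p$ of the plane, set $\mathcal{C} = \{R_j \in \Qq : p \in R_j^{-\delta}\}$, and prove $|\mathcal{C}| = \Oh_\delta(1)$. The first step is to show that all squares in $\mathcal{C}$ have pairwise comparable side lengths: for $R_i, R_j \in \mathcal{C}$ with $g_i \le g_j$, the shrinkings $R_i^{-\delta}$ and $R_j^{-\delta}$ both contain $p$, hence overlap; since the first stage of the kernelization removed all containments, no square of $\Rr$ (and hence of $\Qq$) is contained in another, so Lemma~\ref{lem:kernel:arb:diff} forces $g_j/g_i < 2/\delta$. Plugging the level bounds $(1+\delta)^\ell \le g_i < (1+\delta)^{\ell+1}$ into this inequality, a short computation shows that all squares of $\mathcal{C}$ lie in at most $\lceil\gamma\rceil$ consecutive levels, where $\gamma = 1 + \log_{1+\delta}(2/\delta) = \Oh_\delta(1)$; that is, $\mathcal{C}$ meets at most $\Oh_\delta(1)$ of the families $\Qq_\ell$.

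Finally, for each such level $\ell$ I would bound $|\mathcal{C} \cap \Qq_\ell|$ by the ply of $\Qq_\ell^{-\delta}$. Since $\Qq_\ell$ is $(\delta M_1/2)$-distant and every square of $\Qq_\ell$ has side length at most $(1+\delta)^{\ell+1} = (1+\delta)M_1$, Lemma~\ref{lem:app-ply} applied with $\lambda = \delta M_1/2$ and $M = (1+\delta)M_1$ gives that $\Qq_\ell$ has ply $\Oh((1+\delta)^2/\delta^2) = \Oh_\delta(1)$; since each $R_j^{-\delta}$ is contained in $R_j$, shrinking cannot increase the ply, so $\Qq_\ell^{-\delta}$ also has ply $\Oh_\delta(1)$. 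Summing over the $\Oh_\delta(1)$ relevant levels yields $|\mathcal{C}| = \Oh_\delta(1)$, and since $p$ was arbitrary, the ply of $\Qq^{-\delta}$ is $\Oh_\delta(1)$. I do not expect a genuine obstacle here; the only delicate points are making sure the ``no containment'' invariant established in the first stage of Theorem~\ref{thm:kernel:arb-sizes} is inherited by $\Qq$, so that the first alternative of Lemma~\ref{lem:kernel:arb:diff} can be ruled out, and correctly handling the branch of each per-level call to Theorem~\ref{thm:kernel:arb:bounded} in which an independent set of size $k$ is returned outright.
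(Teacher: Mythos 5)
Your proposal is correct and follows essentially the same approach as the paper: both fix an arbitrary point, use Lemma~\ref{lem:kernel:arb:diff} together with the no-containment invariant to show that only $\Oh_\delta(1)$ consecutive levels can contribute squares covering that point, and then invoke Lemma~\ref{lem:app-ply} on each level's $(\delta M_1/2)$-distant subfamily to bound its per-level ply. The only cosmetic difference is that you apply Lemma~\ref{lem:app-ply} to $\Qq_\ell$ and note shrinking cannot increase ply, whereas the paper applies it directly to $\Qq_\ell^{-\delta}$; this does not change the argument.
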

\begin{proof}
If $\Qq^{-\delta}$ is an independent set, then $\Qq^{-\delta}$ has ply at most $1$. Hence, we can focus on the final case of the proof of Theorem~\ref{thm:kernel:arb-sizes}, i.e., when 
the kernel of Theorem~\ref{thm:kernel:arb:bounded} is applied to each active level separately, for this is the only case when the algorithm does not return an independent set of size $k$.
We adopt the notation from the proof of Theorem~\ref{thm:kernel:arb:bounded}.

Let $p$ be any point in the plane. We claim that any two squares of $\Qq^{-\delta}$ that contain~$p$ differ by $\Oh(\log_{1+\delta} (2/\delta))$ levels (recall that levels were defined w.r.t.\ the original squares from $\Qq$, but we consider their corresponding shrunken squares in $\Qq^{-\delta}$). Let $R_i^{-\delta} \in \Qq^{-\delta}$ be any square that contains $p$. We now consider any other square $R_j^{-\delta} \in \Qq^{-\delta}$ that contains~$p$. Assume w.l.o.g.\ that $g_i \leq g_j$ (the other case is symmetric). By the construction of $\Qq$, $R_i$ cannot contain $R_j$ or vice versa. Since $R_i^{-\delta}$ and $R_j^{-\delta}$ overlap (in $p$), it follows from Lemma~\ref{lem:kernel:arb:diff} that $\frac{g_j}{g_i} < 2/\delta$. Since levels are delimited by powers of $(1+\delta)$, the claim follows.

Now observe that each level of $\Qq^{-\delta}$ has ply $\Oh_\delta(1)$ by Lemma~\ref{lem:app-ply}, because by construction, each level $\ell$ of $\Qq^{-\delta}$ is a family of $(\delta(1+\delta)^\ell/2)$-distant squares, each of side length at most $(1+\delta)^{\ell+1}$. Combined with the claim, this implies that the ply of $\Qq^{-\delta}$ is $\Oh_\delta(1)$.
\end{proof}

The next theorem gives our EPTAS for axis-parallel squares. The existence of an EPTAS already follows from Theorem~\ref{thm:main-approx}, but we state it because the obtained dependence of the running time on $\eps$ is far better than that of Theorem~\ref{thm:main-approx}.

\begin{theorem}
There is an algorithm that given a weighted family $\Rr$ of $n$ axis-parallel squares with total encoding size $N$, of uniform size or of uniform weight, 
and parameters $\delta,\eps$, runs in time $2^{\Oh_\delta(1/\eps)} \cdot (nN)^{c}$ for some constant $c$, and outputs a subfamily $\Ss\subseteq \Rr$ such that $\Ss^{-\delta}$ is independent, and $\weight(\Ss)\geq (1-\eps)\OPT(\Rr)$.
\end{theorem}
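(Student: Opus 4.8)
The goal is to turn the kernelization results (Theorems~\ref{thm:kernel:unit:unit},~\ref{thm:kernel:unit}, and~\ref{thm:kernel:arb:bounded}) into an EPTAS with a running time that is single-exponential in $1/\eps$. The natural strategy is: first compute a kernel to shrink the family to size $\Oh_\delta(k)$ for an appropriate choice of~$k$, then run a fast approximation scheme on the kernel that exploits the low-ply / distance structure it enjoys. The obstacle with using the kernel directly is that a kernel is defined for the \emph{parameterized} problem (it preserves $\OPT_k$), whereas here we want to approximate the unparameterized $\OPT(\Rr)$. So the first step is to reduce the approximation problem to the parameterized one by guessing a value of~$k$.

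\textbf{Step 1: reduce to the parameterized setting by guessing $k$.} For squares of uniform weight the size of an optimum solution is exactly $\OPT(\Rr)$, so we may simply try all $k\in\{0,1,\ldots,n\}$, but that is not needed: we can binary-search, or observe that in the uniform-weight case $\OPT(\Rr)$ equals the maximum independent set size, which we could even compute approximately first. For squares of uniform \emph{size} (but non-uniform weight) we instead use the standard trick: guess the weight $W^{\star}$ of the heaviest rectangle in some fixed optimum $\Ss^{\star}$ (there are only $n$ choices), discard all rectangles of weight $>W^{\star}$, and note that $\OPT(\Rr)$ is then squeezed between $W^{\star}$ and $n\cdot W^{\star}$; moreover in an optimal solution we may take all rectangles of weight $\geq \eps W^{\star}/n$ inside... more simply, once $W^\star$ is fixed we only need to find an independent set of $k=\Oh(1/\eps)$ rectangles each of weight $\geq \eps W^\star /k'$ for a suitable bound, so that the top few pieces already capture a $(1-\eps)$-fraction of $\OPT$. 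The clean way: it suffices to find, for $k := \lceil c/\eps \rceil$ with a suitable constant $c$, a solution of weight $\geq (1-\eps)\OPT_k(\Rr)$ and to argue $\OPT_k(\Rr) \geq (1-\eps)\OPT(\Rr)$ after the guessing; I would spell this out carefully since this is where the ``$1/\eps$'' in the exponent enters.

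\textbf{Step 2: kernelize.} With $k = \Oh(1/\eps)$ fixed, apply the appropriate kernelization theorem (Theorem~\ref{thm:kernel:unit} for unit squares of non-uniform weight, Theorem~\ref{thm:kernel:unit:unit} for unit squares of uniform weight; for squares of uniform size and uniform weight these coincide). This yields a subfamily $\Qq\subseteq\Rr$ with $|\Qq| = \Oh(k/\delta^2) = \Oh_\delta(1/\eps)$ such that $\OPT_k(\Qq^{-\delta}) \geq \OPT_k(\Rr)$, and moreover (by inspecting the proof) $\Qq$ is $(\delta/2)$-distant, hence so is $\Qq^{-\delta}$, which by Lemma~\ref{lem:app-ply} has ply $\Oh_\delta(1)$.

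\textbf{Step 3: solve \mwisr on the kernel by a fast PTAS.} Now run a known EPTAS for \mwisr on families of bounded ply (equivalently, exploit the low-ply structure via a shifted-grid/separator argument à la Hochbaum--Maass or Erlebach--Jansen--Seidel) on $\Qq^{-\delta}$. Since $|\Qq^{-\delta}| = \Oh_\delta(1/\eps)$, even a crude algorithm suffices; for instance one can afford a $2^{\Oh_\delta(\sqrt{|\Qq^{-\delta}|})} = 2^{\Oh_\delta(\sqrt{1/\eps})}$-time exact algorithm via the Alber--Fiala / Smith--Wormald separator machinery cited in the proof of Theorem~\ref{thm:app-FPT2}, or simply $2^{\Oh_\delta(1/\eps)}$ by brute force over the $\Oh_\delta(1/\eps)$ squares. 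Any independent subfamily of $\Qq^{-\delta}$ lifts (by the definition of the kernel, via the close mapping) to a subfamily $\Ss \subseteq \Rr$ of the same weight whose $\delta$-shrinking is independent. Combining with $\OPT_k(\Qq^{-\delta}) \geq \OPT_k(\Rr) \geq (1-\eps)\OPT(\Rr)$ from Step~1, and rescaling $\eps$, gives the claimed guarantee in time $2^{\Oh_\delta(1/\eps)}\cdot (nN)^{c}$.

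\textbf{Main obstacle.} The delicate point is Step~1: relating the \emph{weighted} unparameterized optimum $\OPT(\Rr)$ to $\OPT_k(\Rr)$ for $k = \Oh(1/\eps)$ in the uniform-\emph{size} but non-uniform-weight case. This needs the guessing of the top weight and a careful argument that the $\Oh(1/\eps)$ heaviest rectangles of an optimum solution already carry a $(1-\eps)$-fraction of its weight — which is true because in an optimal solution all relevant rectangles can be assumed to have weight within a bounded range of the maximum, or else can be dropped with only $\Oh(\eps)$ loss. Everything downstream (kernelization, low ply, fast solver) is a direct citation of results already proved in this section, so the proof is mainly a matter of assembling them with the right choice of $k$.
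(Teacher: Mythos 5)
There is a genuine gap in Step~1 that invalidates the rest. You claim that for $k = \Oh(1/\eps)$ one can arrange $\OPT_k(\Rr) \geq (1-\eps)\OPT(\Rr)$ after guessing the top weight, but this is false even for uniform weights: take $n$ pairwise disjoint unit squares of weight~$1$. Then $\OPT(\Rr)=n$ while $\OPT_k(\Rr)=k$, so $\OPT_k(\Rr)\geq(1-\eps)\OPT(\Rr)$ forces $k\geq(1-\eps)n$, which is not $\Oh(1/\eps)$. The same example with weight $1/n$ shows that guessing the top weight $W^\star$ does not help in the non-uniform-weight case either, since all weights are equal and nothing can be pruned. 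There is no general bound $k=\Oh(1/\eps)$ for which the $k$ heaviest rectangles of an optimum carry a $(1-\eps)$-fraction of its weight; that is precisely the reason the paper does \emph{not} fix $k$ to a small value.

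Because Step~1 fails, the kernel produced in Step~2 does not have size $\Oh_\delta(1/\eps)$, and the brute-force / separator bound of $2^{\Oh_\delta(1/\eps)}$ (or $2^{\Oh_\delta(\sqrt{1/\eps})}$) in Step~3 evaporates: for the relevant values of~$k$, which go up to~$n$, the kernel has size $\Oh_\delta(k)$ or $\Oh_\delta(k^2)$, and brute force over it is exponential in a polynomial of~$n$. The paper's proof handles this differently: it loops over \emph{all} $k\in\{1,\ldots,n\}$ (one value of $k$ satisfies $\OPT_k(\Rr)=\OPT(\Rr)$), applies the kernelization to each, and then observes that the kernel is $(\delta/2)$-distant (unit squares, Theorem~\ref{thm:kernel:unit}) respectively has ply $\Oh_\delta(1)$ (arbitrary squares, Theorem~\ref{thm:kernel:arb-sizes} plus Lemma~\ref{lem:app-ply-arb}). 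The decisive point you are missing is that the approximation schemes of Hunt et al.\ (for $\lambda$-distant squares) and Chan (for bounded-ply squares) run in time $2^{\Oh_\lambda(1/\eps)}\cdot N^{\Oh(1)}$ respectively $2^{\Oh_\tau(1/\eps)}\cdot N^{\Oh(1)}$ --- the exponent depends only on $\eps$ and the distance/ply parameter, \emph{not} on the number of objects. Thus the kernel's role here is to impose this bounded geometric structure on the instance, not to make the instance small; the $2^{\Oh_\delta(1/\eps)}$ factor in the theorem's running time comes entirely from the downstream PTAS, with the loop over $k$ only contributing a polynomial factor. You do mention low-ply PTASes in passing, but you lean on the (incorrect) kernel-size bound instead, which is where the argument breaks.
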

\begin{proof}
For each $k = 1,\ldots,n$, we will construct a family $\Qq_k \subseteq \Rr_k$ and a family $\Ss_k \subseteq \Qq_k$ such that $\OPT_k(\Qq_k^{-\delta}) \geq \OPT_k(\Rr)$, $\Ss_k^{-\delta}$ is independent, and $\weight(\Ss_k) \geq (1-\eps) \OPT(\Qq_k^{-\delta})$. Then
$$\weight(\Ss_k) \geq (1-\eps) \OPT(\Qq_k^{-\delta}) \geq (1-\eps) \OPT_k(\Qq_k^{-\delta}) \geq (1-\eps) \OPT_k(\Rr).$$
Since there is a value of $k$ for which $\OPT_k(\Rr) = \OPT(\Rr)$, the heaviest family $\Ss_k$ over all $k$ satisfies the conditions of the theorem, because each $\Ss_k$ is an independent set. Hence, it suffices to find the families $\Qq_k$ and $\Ss_k$ for each $k=1,\ldots,n$.

The construction of $\Qq_k$ and $\Ss_k$ depends on whether the squares in $\Rr$ have uniform size or uniform weight. Consider the case when the squares have uniform size. First, apply the kernel of Theorem~\ref{thm:kernel:unit} to $k$, $\delta$, and $\Rr$, and let $\Qq_k \subseteq \Rr$ denote the resulting family. By inspection of the proof of Theorem~\ref{thm:kernel:unit}, $\Qq_k$ is $(\delta/2)$-distant. Now recall that a result of Hunt et al.~\cite[Theorem~5.2]{HuntMRRRS98} shows that \MWISR has a $(1-\eps)$-approximation in time $2^{\Oh_{\lambda}(1/\eps)} \cdot N^{\Oh(1)}$ whenever all the input rectangles are $\lambda$-distant squares of uniform size and non-uniform weight\footnote{Hunt et al.\ state this result explicitly for the unweighted setting~\cite[Theorem~5.2]{HuntMRRRS98}, but the result for the weighted setting follows in exactly the same way. See also~\cite[Section~6.3.1/6.3.5]{proefschrift} for a generalization and a better dependence on $\lambda$.}. We apply this result to the family $\Qq_k^{-\delta}$, which is also $(\delta/2)$-distant, and $\eps$. Let $\Ss_k \subseteq \Qq_k$ be the subfamily of $\Qq_k$ such that $\Ss_k^{-\delta}$ is the output of the approximation algorithm. Note that $\Qq_k$ and $\Ss_k$ are as required, because $\OPT_k(\Qq_k^{-\delta}) \geq \OPT_k(\Rr)$ by Theorem~\ref{thm:kernel:unit}, $\Ss_k^{-\delta}$ is independent by construction, and $\weight(\Ss_k) \geq (1-\eps) \OPT(\Qq_k^{-\delta})$ by construction.

Consider now the case when the input squares have uniform weight. First, apply the kernel of Theorem~\ref{thm:kernel:arb-sizes} to $k$, $\delta$, and $\Rr$, and let $\Qq_k \subseteq \Rr$ denote the resulting family. By Lemma~\ref{lem:app-ply-arb}, $\Qq_k^{-\delta}$ has ply $\Oh_\delta(1)$. Now recall that a result of Chan~\cite[Theorem~3.1]{chan2004note} (see also~\cite[Section~7.3/7.4]{proefschrift} for a generalization) shows that \MWISR has a $(1-\eps)$-approximation in time $2^{\Oh_{\tau}(1/\eps)} \cdot N^{\Oh(1)}$ whenever all the input rectangles are squares of uniform weight, non-uniform size, and ply $\tau$. We apply this result to the family $\Qq_k^{-\delta}$. Let $\Ss_k \subseteq \Qq_k$ be the subfamily of $\Qq_k$ such that $\Ss_k^{-\delta}$ is the output of the approximation algorithm. Note that $\Qq_k$ and $\Ss_k$ are as required, because $\OPT_k(\Qq_k^{-\delta}) \geq \OPT_k(\Rr)$ by Theorem~\ref{thm:kernel:arb-sizes}, $\Ss_k^{-\delta}$ is independent by construction, and $\weight(\Ss_k) \geq (1-\eps) \OPT(\Qq_k^{-\delta})$ by construction.
\end{proof}

\section{Conclusions}\label{sec:conc}

In this paper, we have initiated the study of the shrinking model for parameterized geometric {\sc{Independent Set}}, by giving FPT algorithms and polynomial kernels for the most basic variants.
Most importantly, we have showcased that the shrinking model leads to robust tractability of problems that without this relaxation are hard from the parameterized perspective.
We hope that this is the
start of an interesting direction in the research on combinatorial optimization for geometric problems, as our work raises several concrete open problems.
Can our FPT algorithm and EPTAS for axis-parallel rectangles (Theorems~\ref{thm:main-approx} and~\ref{thm:main-fpt}) be generalized to arbitrary convex polygons, as is the case for the PTAS~\cite{wiese2016independent}?
Is it possible to improve the running time of said FPT algorithm to subexponential, that is, $2^{o(k)}\cdot (nN)^{\Oh(1)}$ for every fixed~$\delta$?
What about polynomial kernels, that is, kernelization procedures with polynomial output guarantees, for more complex objects than squares?
Here, it seems that our arguments apply mutatis mutandis to e.g.\ (unit) disks instead of (unit) squares, but it is conceivable that even the general setting of convex or star-convex polygons can be treated, for an appropriate
definition of shrinking. Also, is there a polynomial kernel for squares of non-uniform size and non-uniform weight? This is not addressed in its full generality by our kernelization algorithms.
Finally, can one give limits to tractability in the shrinking model, by showing $\mathsf{W}[1]$-hardness or the nonexistence of polynomial kernels?
We hope that the future work will address these, as well as multiple other open problems arising from this work.

\paragraph*{Acknowledgements.} 
The first author thanks D\'aniel Marx for helpful discussions about the setting.

\bibliographystyle{abbrv}

\bibliography{references}

\end{document}